\documentclass[11pt]{article}

\usepackage[margin=0.8in]{geometry}
\usepackage{amsmath,amssymb,amsthm,mathtools}
\usepackage{xcolor}
\usepackage{enumitem}
\usepackage{microtype}
\usepackage{cite}

\newtheorem{theorem}{Theorem}
\newtheorem{proposition}[theorem]{Proposition}
\newtheorem{lemma}[theorem]{Lemma}
\newtheorem{corollary}[theorem]{Corollary}
\newtheorem{remark}[theorem]{Remark}

\title{
\textbf{Quantum Krylov Linear Solving Beyond Global Conditioning
via Spectral Compression and Directional Stability}
}

\author{
Sijia Yu \qquad Yifan Zhou\\[0.4em]
\small Department of Electrical and Computer Engineering,
Stony Brook University,\\
\small Stony Brook, NY 11794, USA
}

\date{}

\begin{document}
\maketitle

\begin{abstract}
Global condition-number dependence can substantially overestimate the
difficulty of preparing normalized quantum linear-system solutions.
Many existing beyond-conditioning approaches are favorable when globally
ill-conditioned directions have limited relevance to the target solution.
Here we study a harder regime for time-evolution quantum Krylov linear
solving (QKS), where a vanishing eigenvalue remains solution-relevant and
must be retained.
To address this difficulty, QKS compresses the solution-relevant spectrum
into a compact reduced space where the ill-conditioning is concentrated
along a soft direction aligned with the solution, so that normalization
removes the divergent amplification along that direction and leaves only
a bounded transverse response. Consequently, the normalized-state complexity need not inherit the
divergence of the global condition number.
Our contributions are threefold:
(1) We show that the solution-relevant information admits a compact and
stable QKS representation, with bounded subspace dimension, evolution time,
and reconstruction overhead as the relevant eigenvalue vanishes.
(2) We show that normalized-state difficulty is governed by whether inverse
amplification changes the solution direction, rather than by the small
eigenvalue alone; even an arbitrarily ill-conditioned reduced system can
remain directionally stable when the amplification is solution-aligned.
(3) We prove that this stability persists under finite projected-system
errors and transfers to the actual QKS output, yielding an end-to-end
complexity bound without inverse dependence on the retained small
eigenvalue.  An explicit separation family further shows that a full
inverse-amplification criterion can diverge while the corresponding QKS
quantities remain bounded.
\end{abstract}

\section{Introduction}
\label{sec:introduction}

Quantum linear-system (QLS) algorithms are a fundamental primitive for
quantum scientific computing, with applications to differential equations,
Green's-function evaluation, regression, optimization, and other numerical
problems involving large linear systems
~\cite{HHL,CKS,Chakraborty2019,Tong2021,Chakraborty2023,
Krovi2023,BerryCosta2024}.
Since the original HHL algorithm~\cite{HHL}, substantial progress has been
made in improving precision dependence, state-preparation overhead,
Hamiltonian simulation, and the treatment of positive-definite systems
~\cite{Ambainis2012,CKS,Subasi2019,AnLin2022,Costa2022,
Cunningham2024,Dalzell2024,LowSu2026}.
These developments are closely connected to advances in block encoding,
quantum signal processing, singular-value transformation, and polynomial
matrix functions
~\cite{LowChuang2017QSP,LowChuang2019Qubitization,Gilyen2019,
LinTong2020,Martyn2021,Motlagh2024}.
A central difficulty nevertheless remains: the complexity of QLS can grow
strongly with the condition number of the input matrix, and lower-bound
results show that such dependence cannot be removed in full generality
~\cite{HHL,CKS,Mori2026}.
For structured instances, however, the global condition number may be an
overly pessimistic description because the entire spectrum need not be
equally relevant to the particular normalized solution being prepared.

This observation has motivated several approaches that replace or weaken
global condition-number dependence by exploiting more favorable
instance-dependent structure
~\cite{Orsucci2021,Chakraborty2023,Li2025,Dalzell2026}.
In particular, recent beyond-condition-number results show that spectral
information relevant to a specific input and target task can lead to
complexity substantially smaller than that suggested by the worst-case
global condition number~\cite{Li2025,Dalzell2026}.
Related filtering, regularization, and polynomial-transformation techniques
likewise demonstrate that the difficulty of a linear-system instance can
depend strongly on how the input state is distributed across the spectrum
~\cite{LinTong2020,Gilyen2019,Chakraborty2023}.
A more challenging regime arises when the eigenvalue responsible for the
ill-conditioning is itself solution-relevant.  In this case, the small
eigenvalue cannot simply disappear from the relevant description: its
inverse contribution may dominate the solution and must be retained.
This raises the central question of this work:
\emph{can quantum linear solving remain efficient even when the eigenvalue
responsible for the diverging condition number remains essential to the
target solution?}

We study this question using time-evolution quantum Krylov-subspace (QKS)
linear solving.  Krylov-subspace methods are among the central tools of
classical numerical linear algebra, where low-dimensional subspaces generated
from repeated actions of an operator are used to approximate large
eigenvalue and linear-system problems
~\cite{Lanczos1950,HestenesStiefel1952,Paige1971,Greenbaum1997,Saad2003}.
Their quantum counterparts have recently been explored for eigensolvers,
subspace diagonalization, and linear-system solving
~\cite{Parrish2019,Epperly2022,Shen2023,Xu2024,Yoshioka2025,Yu2025}.
For QKS linear solving, starting from the right-hand side $|b\rangle$, one
generates time-evolved states
$e^{-ij\tau A}|b\rangle$ and uses their span as a reduced solution space
~\cite{Xu2024}.
The original linear system is projected onto this subspace, producing a
small classical system whose coefficients are subsequently used to
reconstruct the quantum solution.
This construction is particularly suitable for the present question because
the reduced space is generated from the input state itself and therefore
probes the spectrum through the amplitudes relevant to the target response,
rather than treating all spectral directions uniformly.
It also exposes the main channels through which ill-conditioning may enter
the end-to-end QKS complexity: the Krylov dimension and evolution time
required to represent the solution, the reconstruction probability, and the
sensitivity of the projected solve to finite estimation errors.

The key observation of this work is that these channels need not inherit
the divergence of a retained small eigenvalue.  First, when the normalized
solution is concentrated on a compact set of spectrally resolvable modes,
the solution-relevant inverse response can be compressed into a stable
time-evolution Krylov subspace with bounded dimension and evolution time.
More importantly, even when the reduced system explicitly retains a
vanishing eigenvalue and hence becomes arbitrarily ill conditioned, its
\emph{normalized} solution need not become unstable.  If the divergent
inverse amplification becomes increasingly aligned with the solution, it
primarily changes the magnitude of the response rather than its normalized
direction.  Normalization removes this solution-parallel amplification,
leaving only the transverse response that can change the quantum state.
Thus, in this regime, normalized-state difficulty is governed not by the
small eigenvalue alone, but by whether the resulting inverse amplification
changes the solution direction.

We make this mechanism precise by separating the structural and directional
parts of QKS.  First, we establish conditions under which solution-weighted
spectral concentration and bounded-time resolvability guarantee an accurate
QKS representation with bounded Krylov dimension, evolution horizon, and
state-reconstruction overhead.  Second, we analyze the projected problem in
equivalent orthonormal coordinates and consider a retained-soft-mode regime
in which one reduced eigenvalue approaches zero while the remaining reduced
spectrum stays separated from zero.  When the reduced right-hand side has
persistent overlap with this soft direction, the reduced inverse norm
diverges, yet the normalized directional sensitivity remains bounded.
We further show that this protection persists under finite projected-system
errors and transfers from the compact subspace representation to the actual
Galerkin output returned by QKS.  The finite-error analysis uses standard
spectral and invariant-subspace perturbation tools
~\cite{DavisKahan1970,StewartSun1990,Bhatia1997,Higham2008}.
Together, these results yield an end-to-end QKS complexity whose
condition-sensitive factors do not inherit the divergence of the retained
small eigenvalue, provided that the underlying matrix-access primitives do
not independently introduce such a dependence.

The resulting mechanism has both an overlap with and a distinction from
existing beyond-condition-number approaches.  When globally small
eigenvalues carry negligible weight in the target solution, QKS recovers a
familiar effective-conditioning picture: the relevant reduced problem can
itself remain well conditioned, consistent with the broader principle that
task-dependent spectral structure can weaken global condition-number
dependence~\cite{Orsucci2021,Li2025,Dalzell2026}.
The regime central to this work is different.  Here the small eigenvalue is
retained and remains solution-relevant, so the reduced inverse norm itself
still diverges.  Nevertheless, its divergent amplification can become
asymptotically parallel to the normalized solution and therefore have only a
bounded directional effect.  To make this distinction explicit, we
construct a family for which a full inverse-amplification criterion appearing
in recent beyond-conditioning analysis~\cite{Dalzell2026} diverges with the
small eigenvalue, whereas the corresponding QKS structural and directional
quantities remain bounded.  This separation shows that full inverse
amplification and normalized-state sensitivity are not equivalent measures
of linear-solving difficulty.

Our contributions are threefold.
(1) We show that the solution-relevant information admits a compact and
stable QKS representation, with bounded subspace dimension, evolution time,
and reconstruction overhead as the relevant eigenvalue vanishes.
(2) We show that normalized-state difficulty is governed by whether inverse
amplification changes the solution direction, rather than by the small
eigenvalue alone; even an arbitrarily ill-conditioned reduced system can
remain directionally stable when the amplification is solution-aligned.
(3) We prove that this stability persists under finite projected-system
errors and transfers to the actual QKS output, yielding an end-to-end
complexity bound without inverse dependence on the retained small
eigenvalue.  We further provide an explicit separation family for which a
full inverse-amplification criterion diverges while the corresponding QKS
quantities remain bounded.

The remainder of the paper is organized as follows.
Section~\ref{sec:qks_setup} introduces the QKS formulation and the
end-to-end complexity decomposition.
Section~\ref{sec:structural} establishes spectral compression and bounds the
structural resources.
Section~\ref{sec:directional} develops the directional-stability theory and
transfers the compact subspace approximation to the actual QKS output.
Section~\ref{sec:e2e_result} combines these ingredients into the end-to-end
beyond-conditioning result and instantiates it under several matrix-access
models.
Finally, Section~\ref{sec:separation} relates the mechanism to existing
beyond-condition-number approaches and establishes the explicit separation.

\section{Quantum Krylov Linear Solving and End-to-End Complexity}
\label{sec:qks_setup}

This section introduces the QKS problem, fixes the notation used throughout
the paper, and identifies the quantities that enter the end-to-end
complexity analysis.  We first define the linear-system problem and the
time-evolution Krylov subspace used by QKS.  We then describe the projected
system actually solved by the algorithm.  Finally, we introduce an
equivalent orthonormal coordinate system used only for the theoretical
analysis of normalized-state sensitivity.

\subsection{Linear-system problem and QKS representation}

We consider the Hermitian positive-definite linear system
\begin{equation}
    A x = |b\rangle,
    \label{eq:linear_system}
\end{equation}
where $A\in\mathbb C^{N\times N}$ is the system matrix and
$|b\rangle\in\mathbb C^N$ is the normalized right-hand side.  We assume
\begin{equation}
    A=A^\dagger\succ0,
    \qquad
    \|A\|_2\le1,
    \qquad
    \|b\|_2=1.
\end{equation}
The normalization $\|A\|_2\le1$ fixes the spectral scale and does not change
the normalized solution state.

Let
\begin{equation}
    A
    =
    \sum_{\ell=1}^{N}
    \lambda_\ell
    |u_\ell\rangle\langle u_\ell|,
    \label{eq:A_spectral_decomposition}
\end{equation}
where $\lambda_\ell>0$ is the $\ell$th eigenvalue of $A$ and
$|u_\ell\rangle$ is the corresponding orthonormal eigenvector.  Expanding
the right-hand side in this eigenbasis gives
\begin{equation}
    |b\rangle
    =
    \sum_{\ell=1}^{N}
    b_\ell |u_\ell\rangle,
    \label{eq:b_spectral_decomposition}
\end{equation}
where
$b_\ell=\langle u_\ell|b\rangle$
is the overlap of the input state with eigenmode $\ell$.

The exact unnormalized solution is therefore
\begin{equation}
    x
    :=
    A^{-1}|b\rangle
    =
    \sum_{\ell=1}^{N}
    \frac{b_\ell}{\lambda_\ell}
    |u_\ell\rangle.
    \label{eq:exact_solution}
\end{equation}
The quantum linear-system task is to prepare the corresponding normalized
state
\begin{equation}
    |x\rangle
    :=
    \frac{x}{\|x\|_2}.
    \label{eq:normalized_solution}
\end{equation}

QKS approximates this solution using time-evolved copies of the input
state.  Let $\tau>0$ denote the time step and let $m$ denote the number of
Krylov basis states.  The $j$th time-evolution state is
\begin{equation}
    |\phi_j\rangle
    :=
    e^{-ij\tau A}|b\rangle,
    \qquad
    j=0,\ldots,m-1.
    \label{eq:qks_basis_state}
\end{equation}
Collecting these states as columns gives the generally nonorthogonal basis
matrix
\begin{equation}
    \Phi_m
    :=
    [|\phi_0\rangle,\ldots,|\phi_{m-1}\rangle],
\end{equation}
and the associated QKS subspace is
\begin{equation}
    \mathcal K_m
    :=
    \operatorname{range}(\Phi_m).
    \label{eq:qks_subspace}
\end{equation}
Thus $m$ is the dimension of the candidate QKS representation, while
\begin{equation}
    T:=(m-1)\tau
    \label{eq:max_evolution_time}
\end{equation}
is the maximum real-time evolution required to generate the basis.

\subsection{Projected system solved by QKS}

The QKS approximation is written as
\begin{equation}
    x_m
    :=
    \Phi_m\boldsymbol\gamma,
    \label{eq:qks_output}
\end{equation}
where
$\boldsymbol\gamma\in\mathbb C^m$
contains the coefficients of the Krylov expansion.

Galerkin projection determines these coefficients from
\begin{equation}
    H_m\boldsymbol\gamma
    =
    \mathbf g_m,
    \label{eq:qks_projected}
\end{equation}
where
\begin{equation}
    H_m
    :=
    \Phi_m^\dagger A\Phi_m
\end{equation}
is the $m\times m$ projected system matrix, and
\begin{equation}
    \mathbf g_m
    :=
    \Phi_m^\dagger|b\rangle
\end{equation}
is the projected right-hand side.

Thus the QKS algorithm estimates the projected quantities
$H_m$ and $\mathbf g_m$, solves the reduced classical system
Eq.~\eqref{eq:qks_projected}, and reconstructs the state
$x_m=\Phi_m\boldsymbol\gamma$.

Because the time-evolution basis is generally nonorthogonal, we also define
its Gram matrix
\begin{equation}
    S_m
    :=
    \Phi_m^\dagger\Phi_m.
    \label{eq:gram_matrix}
\end{equation}
The Gram matrix measures the linear independence and geometry of the QKS
basis.  In the present analysis, $S_m$ is used to characterize basis
stability, reconstruction probability, and an equivalent orthonormal
coordinate system; it is not an additional reduced system that the QKS
algorithm must solve.

\subsection{Equivalent orthonormal coordinates for analysis}

When $S_m\succ0$, the columns of $\Phi_m$ are linearly independent.
For theoretical analysis, we define the orthonormal basis
\begin{equation}
    Q_m
    :=
    \Phi_mS_m^{-1/2}.
    \label{eq:orthonormal_basis}
\end{equation}
The matrix $Q_m$ spans exactly the same subspace as $\Phi_m$ but satisfies
\begin{equation}
    Q_m^\dagger Q_m=I.
\end{equation}

In this orthonormal coordinate system, define
\begin{equation}
    B_m
    :=
    S_m^{-1/2}H_mS_m^{-1/2},
    \qquad
    \mathbf c_m
    :=
    S_m^{-1/2}\mathbf g_m,
    \label{eq:orth_reduced_quantities}
\end{equation}
where $B_m$ is the orthonormal reduced operator and $\mathbf c_m$ is the
corresponding reduced right-hand side.

Introducing the orthonormal coordinate vector
\begin{equation}
    \mathbf y
    :=
    S_m^{1/2}\boldsymbol\gamma,
\end{equation}
the projected system becomes
\begin{equation}
    B_m\mathbf y
    =
    \mathbf c_m,
    \qquad
    x_m
    =
    Q_m\mathbf y.
    \label{eq:orth_reduced}
\end{equation}

Equations~\eqref{eq:qks_projected} and~\eqref{eq:orth_reduced} describe the
same QKS solution in two different coordinate systems.  The algorithm
itself solves $H_m\boldsymbol\gamma=\mathbf g_m$; the variables
$(B_m,\mathbf c_m,\mathbf y)$ are introduced only to expose the geometry
of the reduced problem and simplify the normalized-state sensitivity
analysis.

\subsection{Normalized-state error and directional sensitivity}

Because the target is a quantum state, global phase is physically
irrelevant.  For any two nonzero vectors $v$ and $w$, we therefore define
the phase-invariant angle
\begin{equation}
    \Theta(v,w)
    :=
    \arccos
    \frac{
    |\langle v,w\rangle|
    }{
    \|v\|_2\|w\|_2
    }
    \in[0,\pi/2].
    \label{eq:phase_invariant_angle}
\end{equation}
This angle measures the difference between the normalized directions of
$v$ and $w$.

To isolate perturbations that actually change the direction of the reduced
solution $\mathbf y$, define the orthogonal projector
\begin{equation}
    P_{\mathbf y}^{\perp}
    :=
    I-
    \frac{
    \mathbf y\mathbf y^\dagger
    }{
    \|\mathbf y\|_2^2
    }.
    \label{eq:transverse_projector}
\end{equation}
The projector $P_{\mathbf y}^{\perp}$ removes the component parallel to
$\mathbf y$ and retains only the transverse component that can change the
normalized state.

We then define
\begin{equation}
    K_B
    :=
    \|P_{\mathbf y}^{\perp}B_m^{-1}\|_2,
    \qquad
    K_c
    :=
    \frac{K_B}{\|\mathbf y\|_2}.
    \label{eq:sensitivity_factors}
\end{equation}
Here $K_B$ measures the directional sensitivity of the normalized solution
to perturbations of the reduced operator, while $K_c$ measures the
corresponding sensitivity to perturbations of the reduced right-hand side.

Finally, for a coefficient vector $\boldsymbol\gamma$, the LCU
reconstruction success probability is
\begin{equation}
    p_{\rm succ}
    :=
    \frac{
    \boldsymbol\gamma^\dagger
    S_m
    \boldsymbol\gamma
    }{
    \|\boldsymbol\gamma\|_1^2
    }.
    \label{eq:psucc_definition}
\end{equation}
Thus $p_{\rm succ}^{-1}$ quantifies the expected repetition overhead of
state reconstruction.

The quantities
\begin{equation}
    m,\qquad
    T,\qquad
    p_{\rm succ}^{-1},
    \qquad
    K_B,\qquad
    K_c
\end{equation}
are the condition-sensitive quantities that will organize the end-to-end
QKS complexity analysis below.  The first three describe the structural
cost of representing and reconstructing the solution, while $K_B$ and
$K_c$ describe the sensitivity of its normalized direction.

\begin{theorem}[General QKS complexity]
\label{thm:general_complexity}
For target normalized-state error $\epsilon$, the expected end-to-end QKS
complexity can be written schematically as
\begin{equation}
 C_{\rm QKS}
 =\widetilde O\!\left[
 C_H(T)\mathcal Q_H\!\left(
 m,\frac{\epsilon}{mK_B}
 \right)
 +C_g(T)\mathcal Q_g\!\left(
 m,\frac{\epsilon}{\sqrt m K_c}
 \right)
 +m^3
 +\frac{m+C_{\rm K}(T)}{p_{\rm succ}}
 \right],
 \label{eq:general_qks_complexity}
\end{equation}
where $T=(m-1)\tau$ and
\begin{equation}
 p_{\rm succ}
 =\frac{\boldsymbol\gamma^\dagger S_m\boldsymbol\gamma}
 {\|\boldsymbol\gamma\|_1^2}.
\end{equation}
Thus the condition-sensitive quantities separate into the structural factors
$m,T,p_{\rm succ}^{-1}$ and the normalized-state sensitivity factors
$K_B,K_c$. Implementation-dependent baseline costs are kept explicit and are
assumed not to introduce an additional inverse dependence on the vanishing
solution-relevant eigenvalue.
\end{theorem}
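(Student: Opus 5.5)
The proof accounts for each stage of the QKS pipeline and chooses its accuracy so that the total normalized-state error is at most $\epsilon$. There are three stages: estimating $H_m$ and $\mathbf g_m$, solving the reduced system classically, and reconstructing $x_m=\Phi_m\boldsymbol\gamma$ by LCU. The error is split as $\epsilon/3$ per stage, up to constants. The one analytic step is a first-order directional perturbation bound in the orthonormal coordinates of Eq.~\eqref{eq:orth_reduced}; everything else is cost bookkeeping. Let $\widetilde B_m=B_m+\Delta B$ and $\widetilde{\mathbf c}_m=\mathbf c_m+\Delta\mathbf c$, with perturbed solution $\widetilde{\mathbf y}$. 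Then
\[
\widetilde{\mathbf y}-\mathbf y=B_m^{-1}(\Delta\mathbf c-\Delta B\,\mathbf y)+O(\|\Delta\|^2).
\]
Only the transverse part changes the normalized direction, so
\[
\sin\Theta(\widetilde{\mathbf y},\mathbf y)\le \frac{\|P_{\mathbf y}^\perp(\widetilde{\mathbf y}-\mathbf y)\|_2}{\|\mathbf y\|_2}+\text{h.o.t.}\le K_B\|\Delta B\|_2+K_c\|\Delta\mathbf c\|_2+\text{h.o.t.}
\]
This follows directly from the definitions in Eq.~\eqref{eq:sensitivity_factors}. Because $Q_m$ is an isometry, the same angle bound holds for $x_m=Q_m\mathbf y$.

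Next, the estimation errors on $H_m,\mathbf g_m$ are translated into $\Delta B,\Delta\mathbf c$. Suppose each entry of $H_m$ is estimated to accuracy $\eta_H$; the entries are overlaps $\langle b|e^{ij\tau A}Ae^{-ik\tau A}|b\rangle$, which depend only on $j-k$ and are therefore Toeplitz. The entrywise bound gives $\|\Delta H\|_2\le m\eta_H$, and for $\mathbf g_m$ one has $\|\Delta\mathbf g\|_2\le\sqrt m\,\eta_g$. Conjugating by $S_m^{-1/2}$ brings in factors of $\|S_m^{-1/2}\|$, which the schematic statement absorbs into $\widetilde O$ (or treats as a structural factor controlled later). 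It then suffices to take $\eta_H=\epsilon/(mK_B)$ and $\eta_g=\epsilon/(\sqrt m K_c)$. Each entry requires a Hadamard-test-type circuit with controlled evolution up to time $T=(m-1)\tau$, at per-shot cost $C_H(T)$ or $C_g(T)$, repeated $\mathcal Q_H(m,\eta_H)$ or $\mathcal Q_g(m,\eta_g)$ times. These functions encode the number of distinct entries and the shot count, e.g.\ $O(m/\eta^2)$ or $O(m/\eta)$ with amplitude estimation. This reproduces the first two terms of Eq.~\eqref{eq:general_qks_complexity}.

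The classical solve of the $m\times m$ system $H_m\boldsymbol\gamma=\mathbf g_m$ costs $O(m^3)$ by dense factorization, and in exact arithmetic it introduces no further error. For the reconstruction, the state $\Phi_m\boldsymbol\gamma/\|\boldsymbol\gamma\|_1$ is prepared by LCU. A PREPARE step on $m$ amplitudes costs $O(m)$, and a SELECT step applying $e^{-ij\tau A}$ for $j<m$ costs $C_{\rm K}(T)$. Postselection succeeds with probability $\|\Phi_m\boldsymbol\gamma\|_2^2/\|\boldsymbol\gamma\|_1^2=p_{\rm succ}$, so the expected number of repetitions is $p_{\rm succ}^{-1}$; amplitude amplification would improve this to $p_{\rm succ}^{-1/2}$, and the stated form is the conservative one. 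This gives the last term. The output is exactly the normalized $\widetilde x_m$, whose angle to the target is bounded by the representation error of $\mathcal K_m$, budgeted separately through the choice of $m,\tau$ in the later structural sections, plus the estimation error above.

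The main obstacle is making the first-order perturbation step rigorous without hidden $\lambda_{\min}$ dependence. The higher-order terms involve $\|B_m^{-1}\|\,\|\Delta B\|$, which diverges in the retained-soft-mode regime. To handle this, I would first try an exact resolvent identity,
\[
\widetilde{\mathbf y}-\mathbf y=\widetilde B_m^{-1}(\Delta\mathbf c-\Delta B\,\mathbf y),
\]
and then use $\Delta B$ Hermitian together with invariant-subspace (Davis--Kahan) arguments to show that $\widetilde B_m^{-1}$ has an amplified direction aligned with $\mathbf y$ up to an angle of $O(\|\Delta B\|/\text{gap})$. Since the statement is explicitly schematic, the proof here only needs to require $\epsilon$ small enough that the linearization is valid, and can defer the uniform control to the directional-stability results of Section~\ref{sec:directional}.
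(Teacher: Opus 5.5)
Your proposal is correct and follows essentially the same route as the paper's proof in Appendix~\ref{app:general_complexity_proof}. The shared steps are a first-order transverse perturbation bound $\sin\Theta\le K_B\|E\|_2+K_c\|\mathbf f\|_2$ in orthonormal coordinates with the Gram factors absorbed, an entrywise-to-norm conversion giving precisions $\epsilon/(mK_B)$ and $\epsilon/(\sqrt m K_c)$, an $O(m^3)$ dense solve, and LCU reconstruction with $1/p_{\rm succ}$ expected repetitions. Like you, the paper treats the linearization as purely local and defers finite-radius control in the soft-mode regime to Theorem~\ref{thm:finite_stability}.
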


\paragraph{Interpretation.}
Theorem~\ref{thm:general_complexity} is the organizing decomposition for the
remainder of the paper.  The estimation terms are controlled by the
precisions required for the projected operator and projected right-hand side,
the classical reduced solve contributes the $m^3$ term, and LCU reconstruction
contributes the factor $p_{\rm succ}^{-1}$.  Thus the conditioning question
separates into two tasks: controlling the structural quantities
$m,T,p_{\rm succ}^{-1}$ and controlling the normalized-state sensitivity
quantities $K_B,K_c$.

In the following, Section~\ref{sec:structural} closes the structural
complexity channels by establishing compact representability and bounding
$m$, $T$, and $p_{\rm succ}^{-1}$.
Section~\ref{sec:directional} then studies the retained-small-eigenvalue
regime, establishes directional stability, transfers the compact subspace
approximation to the actual Galerkin output, and bounds the corresponding
sensitivity factors $K_B$ and $K_c$.
Section~\ref{sec:e2e_result} combines these ingredients into an end-to-end
complexity result that does not inherit the divergence of the global
condition number, and instantiates this result under several access models.
Finally, Section~\ref{sec:separation} places our result in the context of
existing beyond-condition-number methods by identifying both an overlap
regime and an explicit separation.

\section{Spectral Compression and Compact QKS Representation}
\label{sec:structural}

This section establishes the structural part of the beyond-conditioning
mechanism.  We ask whether the solution-relevant inverse response can be
captured inside a compact time-evolution Krylov subspace even when a
solution-relevant eigenvalue becomes arbitrarily small.  The objective is
to identify conditions under which the required Krylov dimension $m$, the
maximum evolution time $T$, and the inverse reconstruction success
probability $p_{\rm succ}^{-1}$ remain controlled without inheriting the
divergence of that small eigenvalue.

The argument proceeds in four steps.  We first identify the spectral modes
that carry the normalized solution and state the conditions under which
these modes are resolvable by a bounded-time QKS basis.  We then rewrite
QKS approximation as a weighted exponential approximation of the inverse
function.  Next, we construct localized time-evolution filters that isolate
the active spectral modes.  Finally, we combine these ingredients to prove
that the solution-relevant information can be compressed into a stable QKS
subspace with bounded structural resources.

This section establishes \emph{representability}: it proves that an
accurate solution vector exists inside $\mathcal K_m$ and that the resources
required to represent and reconstruct it need not diverge with the small
eigenvalue.  Whether the Galerkin solution actually returned by QKS
inherits this accuracy depends on the directional sensitivity of the
reduced solve and is addressed separately in
Sec.~\ref{sec:directional}.

\subsection{Solution-weighted spectral structure}
\label{subsec:solution_weighted_structure}

We begin by identifying which spectral modes actually matter to the target
solution.  Let
\begin{equation}
    A|u_\ell\rangle
    =
    \lambda_\ell |u_\ell\rangle,
    \qquad
    |b\rangle
    =
    \sum_{\ell=1}^{N}
    b_\ell |u_\ell\rangle,
\end{equation}
so that
\begin{equation}
    x
    :=
    A^{-1}|b\rangle
    =
    \sum_{\ell=1}^{N}
    \frac{b_\ell}{\lambda_\ell}
    |u_\ell\rangle.
\end{equation}

Define the normalized solution weight of eigenmode $\ell$ by
\begin{equation}
    w_\ell
    :=
    \frac{|b_\ell|^2/\lambda_\ell^2}
    {\sum_{k=1}^{N}|b_k|^2/\lambda_k^2},
    \qquad
    \sum_{\ell=1}^{N}w_\ell=1.
    \label{eq:solution_weight}
\end{equation}
Thus $w_\ell$ is the fraction of the squared solution norm carried by
$|u_\ell\rangle$.

For a prescribed solution-tail tolerance $\epsilon_{\rm sub}$, define the
\emph{active set}
\begin{equation}
    \mathcal A
    \in
    \operatorname*{arg\,min}_{\mathcal S\subseteq\{1,\ldots,N\}}
    \left\{
    |\mathcal S|:
    \sum_{\ell\in\mathcal S}w_\ell
    \ge
    1-\epsilon_{\rm sub}^2
    \right\},
    \label{eq:active_set_definition}
\end{equation}
and let
\begin{equation}
    r:=|\mathcal A|,
    \qquad
    \mathcal I
    :=
    \{1,\ldots,N\}\setminus\mathcal A.
    \label{eq:inactive_set_definition}
\end{equation}
The active set therefore contains the smallest number of spectral modes
needed to capture the target fraction of the normalized solution.

The two structural assumptions below play distinct roles.  Assumption
(A1) controls how many spectral modes carry the solution-relevant
information, whereas Assumption (A2) controls whether those modes can be
resolved by a stable bounded-time QKS basis.

\paragraph{(A1) Solution concentration.}
We assume that the number of active modes remains bounded with respect to
the vanishing solution-relevant eigenvalue:
\begin{equation}
    r=O(1).
    \label{eq:A1}
\end{equation}
By construction,
\begin{equation}
    \sum_{\ell\in\mathcal A}
    w_\ell
    \ge
    1-\epsilon_{\rm sub}^2.
    \label{eq:A1_weight}
\end{equation}
Equivalently, if
\begin{equation}
    x_{\mathcal I}
    :=
    \sum_{\ell\in\mathcal I}
    \frac{b_\ell}{\lambda_\ell}
    |u_\ell\rangle
\end{equation}
denotes the inactive contribution to the exact solution, then
\begin{equation}
    \boxed{
    \frac{\|x_{\mathcal I}\|_2}{\|x\|_2}
    \le
    \epsilon_{\rm sub}.}
    \label{eq:A1_tail}
\end{equation}

\paragraph{(A2) Bounded-time weighted spectral resolvability.}

For each eigenmode $\ell$, define its QKS time-evolution phase
\begin{equation}
    \theta_\ell
    :=
    \tau\lambda_\ell
    \pmod{2\pi},
\end{equation}
where $\tau>0$ is the time step used to generate the QKS basis.
Because the basis states contain the factors
$e^{-ij\tau\lambda_\ell}=e^{-ij\theta_\ell}$, spectral distinguishability
is determined by phase separation.  We measure this separation using
\begin{equation}
    d_{\mathbb T}(\theta,\theta')
    :=
    \min_{k\in\mathbb Z}
    |\theta-\theta'+2\pi k|.
\end{equation}

First, let $\delta_{\mathcal A}>0$ denote a uniform lower bound on the
pairwise phase separation of distinct active modes.  We assume
\begin{equation}
    d_{\mathbb T}(\theta_a,\theta_{a'})
    \ge
    \delta_{\mathcal A},
    \qquad
    a\neq a',
    \quad
    a,a'\in\mathcal A,
    \label{eq:A2_gap}
\end{equation}
where $\delta_{\mathcal A}$ does not shrink with the vanishing
solution-relevant eigenvalue.

Second, let $\beta_{\mathcal A}>0$ denote a uniform lower bound on the
input-state amplitude of every active mode.  We assume
\begin{equation}
    |b_a|
    \ge
    \beta_{\mathcal A},
    \qquad
    a\in\mathcal A,
    \label{eq:A2_overlap}
\end{equation}
with $\beta_{\mathcal A}$ likewise independent of the vanishing eigenvalue.

Next, choose a fixed resolution radius
\begin{equation}
    0<\rho<\frac{\delta_{\mathcal A}}{2}.
\end{equation}
An inactive mode is called \emph{unresolved} if its phase lies within
distance $\rho$ of at least one active phase.  Define
\begin{equation}
    \mathcal U_\rho
    :=
    \left\{
    \ell\in\mathcal I:
    \min_{a\in\mathcal A}
    d_{\mathbb T}(\theta_\ell,\theta_a)
    <
    \rho
    \right\},
    \label{eq:unresolved_inactive_set}
\end{equation}
and quantify its total input-state amplitude by
\begin{equation}
    \epsilon_{\rm U}
    :=
    \left(
    \sum_{\ell\in\mathcal U_\rho}
    |b_\ell|^2
    \right)^{1/2}.
    \label{eq:unresolved_weight}
\end{equation}
These unresolved modes cannot in general be suppressed by a bounded-time
filter while preserving the nearby active mode.  We therefore retain
$\epsilon_{\rm U}$ explicitly in the approximation error and assume that it
does not grow as the solution-relevant eigenvalue vanishes.

Finally, the nonorthogonal time-evolution basis itself must remain
nondegenerate.  For the Krylov dimension $m$ under consideration, let
\begin{equation}
    \mathcal R
    \subseteq
    \{1,\ldots,N\},
    \qquad
    |\mathcal R|\ge m,
\end{equation}
be a spectral support set used to certify basis stability.  Define
\begin{equation}
    (V_{\mathcal R})_{\ell j}
    :=
    e^{-ij\tau\lambda_\ell},
    \qquad
    \ell\in\mathcal R,
    \quad
    j=0,\ldots,m-1,
\end{equation}
and
\begin{equation}
    W_{\mathcal R}
    :=
    \operatorname{diag}
    \left(
    |b_\ell|^2
    \right)_{\ell\in\mathcal R}.
\end{equation}
We assume
\begin{equation}
    \boxed{
    \sigma_{\min}
    \!\left(
    W_{\mathcal R}^{1/2}V_{\mathcal R}
    \right)
    \ge
    s_0>0,}
    \label{eq:A2_basis}
\end{equation}
where $s_0$ is independent of the vanishing eigenvalue.  The time step is
also assumed bounded,
\begin{equation}
    \tau=O(1),
    \label{eq:A2_tau}
\end{equation}
with respect to that eigenvalue.

Assumptions (A1)--(A2) therefore specify a solution-weighted spectral
regime in which only a compact set of modes is relevant, those modes remain
resolvable at bounded evolution time, and the resulting Krylov basis
remains stable.  These assumptions alone do not yet prove that
$A^{-1}|b\rangle$ can be represented accurately in $\mathcal K_m$.
We next convert this question into a weighted approximation problem.

\subsection{Weighted time-evolution representation}
\label{subsec:weighted_representation}

The first step is to express the QKS approximation error directly in the
eigenbasis of $A$.

\begin{proposition}[Weighted exponential representation]
\label{prop:weighted_exp}

For any coefficient vector
$\boldsymbol\gamma\in\mathbb C^m$,
\begin{equation}
    \boxed{
    \|x-\Phi_m\boldsymbol\gamma\|_2^2
    =
    \sum_{\ell=1}^{N}
    |b_\ell|^2
    \left|
    \frac1{\lambda_\ell}
    -
    \sum_{j=0}^{m-1}
    \gamma_j e^{-ij\tau\lambda_\ell}
    \right|^2.}
    \label{eq:weighted_exp_identity}
\end{equation}
\end{proposition}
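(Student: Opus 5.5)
The identity follows from expanding both $x$ and $\Phi_m\boldsymbol\gamma$ in the eigenbasis $\{|u_\ell\rangle\}$ of $A$ from Eq.~\eqref{eq:A_spectral_decomposition} and then applying Parseval's identity. No approximation or assumption beyond Hermiticity of $A$ is needed; in particular (A1)--(A2) play no role here.

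\emph{Step 1: expand the basis states.} Since $A=\sum_\ell\lambda_\ell|u_\ell\rangle\langle u_\ell|$ with orthonormal eigenvectors, the spectral theorem gives
\begin{equation*}
e^{-ij\tau A}=\sum_{\ell=1}^N e^{-ij\tau\lambda_\ell}|u_\ell\rangle\langle u_\ell| .
\end{equation*}
Combining this with Eq.~\eqref{eq:b_spectral_decomposition}, each basis state becomes
\begin{equation*}
|\phi_j\rangle=\sum_{\ell=1}^N b_\ell e^{-ij\tau\lambda_\ell}|u_\ell\rangle .
\end{equation*}

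\emph{Step 2: expand the Krylov approximation.} By linearity,
\begin{equation*}
\Phi_m\boldsymbol\gamma=\sum_{j=0}^{m-1}\gamma_j|\phi_j\rangle=\sum_{\ell=1}^N b_\ell\Bigl(\sum_{j=0}^{m-1}\gamma_j e^{-ij\tau\lambda_\ell}\Bigr)|u_\ell\rangle .
\end{equation*}
Subtracting this from Eq.~\eqref{eq:exact_solution}, the error vector $x-\Phi_m\boldsymbol\gamma$ has $\ell$th eigen-coefficient
\begin{equation*}
b_\ell\Bigl(\tfrac{1}{\lambda_\ell}-\sum_{j}\gamma_j e^{-ij\tau\lambda_\ell}\Bigr).
\end{equation*}

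\emph{Step 3: take the norm.} Because $\{|u_\ell\rangle\}$ is orthonormal, the squared $2$-norm equals the sum of the squared moduli of these coefficients (Parseval). Factoring out $|b_\ell|^2$ from each term yields Eq.~\eqref{eq:weighted_exp_identity}.

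There is no real obstacle. The only point needing care is that the functional calculus $e^{-ij\tau A}$ acts diagonally on the same eigenbasis used for $A^{-1}$, which holds because $A$ is Hermitian with $\lambda_\ell>0$. Repeated eigenvalues cause no difficulty, since the identity is stated in a fixed orthonormal eigenbasis.
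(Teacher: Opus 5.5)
Your proposal is correct and follows essentially the same argument as the paper's proof. You expand $x$ and $\Phi_m\boldsymbol\gamma$ in the eigenbasis of $A$, using $e^{-ij\tau A}|u_\ell\rangle=e^{-ij\tau\lambda_\ell}|u_\ell\rangle$, subtract coefficient by coefficient, and apply orthonormality (Parseval) to obtain the weighted sum.
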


Equation~\eqref{eq:weighted_exp_identity} shows that constructing an
accurate QKS vector is equivalent to approximating the inverse function
$1/\lambda$ by a finite linear combination of time-evolution exponentials,
with the approximation weighted by the input amplitudes $|b_\ell|^2$.

\paragraph{Proof sketch.}
Expand both the exact solution
$x=A^{-1}|b\rangle$ and the QKS vector
$\Phi_m\boldsymbol\gamma$ in the eigenbasis of $A$.
The coefficient of eigenmode $\ell$ differs by
\begin{equation}
    b_\ell
    \left(
    \frac1{\lambda_\ell}
    -
    \sum_{j=0}^{m-1}
    \gamma_j e^{-ij\tau\lambda_\ell}
    \right).
\end{equation}
Orthogonality of the eigenvectors then gives
Eq.~\eqref{eq:weighted_exp_identity}.
The complete proof is given in
Appendix~\ref{app:weighted_exp_proof}.

The proposition reduces the structural QKS problem to finite exponential
interpolation on the solution-relevant spectrum.  We next construct the
localized filters used to perform this interpolation.

\subsection{Localized time-evolution filters}
\label{subsec:localized_filters}

For each active eigenmode, we introduce a bounded-time filter that preserves
that mode while suppressing modes that are sufficiently separated from it
in time-evolution phase.

\begin{lemma}[Localized time-evolution filters]
\label{lem:localized_filter}

For each $a\in\mathcal A$, define
\begin{equation}
    F_{a,m}(\lambda)
    :=
    \frac1m
    \sum_{j=0}^{m-1}
    e^{-ij(\tau\lambda-\theta_a)}.
    \label{eq:filter_definition}
\end{equation}
At the target active eigenvalue,
\begin{equation}
    F_{a,m}(\lambda_a)=1.
    \label{eq:filter_target}
\end{equation}
For any spectral value $\lambda$ satisfying
\begin{equation}
    d_{\mathbb T}(\tau\lambda,\theta_a)
    \ge
    \delta>0,
\end{equation}
the filter obeys
\begin{equation}
    \boxed{
    |F_{a,m}(\lambda)|
    \le
    \frac{1}{m\sin(\delta/2)}.}
    \label{eq:filter_bound}
\end{equation}
\end{lemma}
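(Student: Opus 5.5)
The plan is to treat $F_{a,m}$ as a normalized Dirichlet-type kernel in the variable $\varphi:=\tau\lambda-\theta_a$ and evaluate it in closed form by summing a geometric series. For the target identity \eqref{eq:filter_target}, note that $\theta_a\equiv\tau\lambda_a \pmod{2\pi}$, so $\varphi=\tau\lambda_a-\theta_a\in 2\pi\mathbb Z$. Every summand $e^{-ij\varphi}$ therefore equals $1$, and the average over $j=0,\ldots,m-1$ is exactly $1$. The reduction modulo $2\pi$ in the definition of $\theta_a$ is harmless here, because only integer multiples $j\varphi$ appear in the exponent.

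For the decay bound \eqref{eq:filter_bound}, first observe that $F_{a,m}$ depends on $\varphi$ only modulo $2\pi$. We may therefore replace $\varphi$ by a representative in $(-\pi,\pi]$, whose absolute value is $d_{\mathbb T}(\tau\lambda,\theta_a)\ge\delta$. Since $\delta>0$, we have $\varphi\notin2\pi\mathbb Z$, so $e^{-i\varphi}\neq1$ and the geometric sum applies:
\begin{equation*}
\sum_{j=0}^{m-1}e^{-ij\varphi}=\frac{1-e^{-im\varphi}}{1-e^{-i\varphi}}.
\end{equation*}
Next use $|1-e^{-i\psi}|=2|\sin(\psi/2)|$ for the numerator and the denominator. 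This gives
\begin{equation*}
|F_{a,m}(\lambda)|=\frac{|\sin(m\varphi/2)|}{m|\sin(\varphi/2)|}\le\frac{1}{m|\sin(\varphi/2)|}.
\end{equation*}

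The only step needing care is converting $|\sin(\varphi/2)|$ into $\sin(\delta/2)$, since $\varphi$ is defined only up to $2\pi$. With the chosen representative, $|\varphi|/2\in[\delta/2,\pi/2]$. The sine function is increasing on $[0,\pi/2]$, so $|\sin(\varphi/2)|=\sin(|\varphi|/2)\ge\sin(\delta/2)$. We also need $\delta\le\pi$, so that $\delta/2$ lies in the increasing range. This holds automatically, because $d_{\mathbb T}$ never exceeds $\pi$ and the hypothesis forces $\delta\le d_{\mathbb T}(\tau\lambda,\theta_a)\le\pi$. Substituting $|\sin(\varphi/2)|\ge\sin(\delta/2)$ into the previous display yields \eqref{eq:filter_bound}.
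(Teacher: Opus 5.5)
Your proposal is correct and follows the paper's proof essentially step for step. Both sum the geometric series in the principal phase difference, use $|1-e^{-it}|=2|\sin(t/2)|$, and invoke monotonicity of $\sin$ on $[0,\pi/2]$ via $\delta\le|\phi|\le\pi$; your exact Dirichlet-kernel form $|\sin(m\varphi/2)|/(m|\sin(\varphi/2)|)$ is only a cosmetic sharpening of the paper's bound $|1-e^{-im\phi}|\le 2$.
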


Thus the target phase adds coherently and is reproduced exactly, whereas a
phase-separated mode is suppressed as $O(1/m)$ for fixed separation.

\paragraph{Proof sketch.}
For a principal phase difference
\begin{equation}
    \phi
    :=
    \tau\lambda-\theta_a
    \pmod{2\pi},
\end{equation}
the filter is a finite geometric series,
\begin{equation}
    F_{a,m}(\lambda)
    =
    \frac{1-e^{-im\phi}}
    {m(1-e^{-i\phi})}.
\end{equation}
The numerator is uniformly bounded, while
\begin{equation}
    |1-e^{-i\phi}|
    =
    2|\sin(\phi/2)|.
\end{equation}
A phase separation of at least $\delta$ therefore gives
Eq.~\eqref{eq:filter_bound}.
The complete proof is given in
Appendix~\ref{app:localized_filter_proof}.

A single filter isolates one active phase.  The remaining task is to combine
these filters so that $1/\lambda$ is reproduced simultaneously across the
entire active spectrum while controlling the inactive contribution.  This
gives the main structural result.

\subsection{Spectral compression of QKS structural resources}
\label{subsec:structural_resources}

Before stating the theorem, recall that the QKS reconstruction success
probability for a coefficient vector $\boldsymbol\gamma$ is
\begin{equation}
    p_{\rm succ}
    :=
    \frac{
    \boldsymbol\gamma^\dagger
    S_m
    \boldsymbol\gamma
    }{
    \|\boldsymbol\gamma\|_1^2
    }.
    \label{eq:structural_psucc_definition}
\end{equation}
The maximum evolution time used by the $m$ time-evolution basis states is
\begin{equation}
    T=(m-1)\tau.
    \label{eq:structural_T_definition}
\end{equation}

The following theorem combines solution concentration, phase resolvability,
and Gram stability to show that the solution-relevant inverse response can
be compressed into a compact and reconstructible QKS subspace.

\begin{theorem}[Resolvable spectral compression of QKS resources]
\label{thm:structural}

There exist universal constants
$C_{\rm U},C_{\rm res},C_{\rm app}>0$
such that the following holds.

Assume (A1)--(A2), and let $\epsilon_K$ be the target relative
subspace-approximation error.  Define the remaining approximation margin
\begin{equation}
    \Delta_K
    :=
    \epsilon_K
    -
    \epsilon_{\rm sub}
    -
    C_{\rm U}
    \frac{r}{\beta_{\mathcal A}}
    \epsilon_{\rm U}.
    \label{eq:DeltaK_definition}
\end{equation}
Suppose
\begin{equation}
    \Delta_K>0.
    \label{eq:DeltaK_positive}
\end{equation}
If the Krylov dimension satisfies
\begin{equation}
    \boxed{
    m
    \ge
    \max\left\{
    C_{\rm res}
    \frac{r}{
    \sin(\delta_{\mathcal A}/2)},
    \;
    C_{\rm app}
    \frac{r}{
    \beta_{\mathcal A}
    \Delta_K
    \sin(\rho/2)}
    \right\},
    }
    \label{eq:structural_m_requirement}
\end{equation}
then there exists a vector
\begin{equation}
    z_m\in\mathcal K_m
\end{equation}
such that
\begin{equation}
    \boxed{
    \frac{\|x-z_m\|_2}{\|x\|_2}
    \le
    \epsilon_K.}
    \label{eq:structural_target_error}
\end{equation}

Moreover, the QKS Gram matrix satisfies
\begin{equation}
    \boxed{
    \lambda_{\min}(S_m)
    \ge
    s_0^2,}
    \label{eq:structural_gram_bound}
\end{equation}
and therefore the reconstruction probability satisfies
\begin{equation}
    \boxed{
    p_{\rm succ}
    \ge
    \frac{s_0^2}{m}.}
    \label{eq:structural_psucc_bound}
\end{equation}

Consequently,
\begin{equation}
    m
    =
    O\!\left(
    \max\left\{
    \frac{r}{
    \sin(\delta_{\mathcal A}/2)},
    \;
    \frac{r}{
    \beta_{\mathcal A}
    \Delta_K
    \sin(\rho/2)}
    \right\}
    \right),
    \label{eq:structural_m_scaling}
\end{equation}
with
\begin{equation}
    T=(m-1)\tau,
    \qquad
    p_{\rm succ}^{-1}
    \le
    \frac{m}{s_0^2}.
    \label{eq:structural_resource_scaling}
\end{equation}

For fixed target accuracy, if
\begin{equation}
    r=O(1),
    \qquad
    \delta_{\mathcal A},\rho,
    \beta_{\mathcal A},s_0,
    \Delta_K=\Omega(1),
    \qquad
    \tau=O(1),
\end{equation}
with respect to a vanishing solution-relevant eigenvalue, then
\begin{equation}
    \boxed{
    m=O(1),
    \qquad
    T=O(1),
    \qquad
    p_{\rm succ}^{-1}=O(1)}
    \label{eq:structural_beyond_small_eigenvalue}
\end{equation}
with respect to that eigenvalue.
Thus the retained small eigenvalue does not force the structural QKS
resources to diverge.
\end{theorem}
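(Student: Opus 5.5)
We build the vector $z_m$ explicitly from the localized filters of Lemma~\ref{lem:localized_filter} and measure its error with the identity of Proposition~\ref{prop:weighted_exp}.

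\textbf{Construction.} Set
\begin{equation*}
z_m:=\sum_{a\in\mathcal A}\lambda_a^{-1}F_{a,m}(A)|b\rangle .
\end{equation*}
Since $F_{a,m}(A)=\frac1m\sum_j e^{ij\theta_a}e^{-ij\tau A}$, this vector lies in $\mathcal K_m$, with coefficients $\gamma_j=\frac1m\sum_a \lambda_a^{-1}e^{ij\theta_a}$.

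\textbf{The raw construction has a flaw.} A small active eigenvalue $\lambda_a$ makes $1/\lambda_a$ large. The cross-talk terms $\lambda_{a'}^{-1}F_{a',m}(\lambda_a)$ and the leakage onto inactive modes therefore carry a factor $1/\lambda_{\min}$. These terms must be compared with $\|x\|_2\ge|b_a|/\lambda_a\ge\beta_{\mathcal A}/\lambda_a$, not with an absolute scale. Dividing through by $\|x\|_2$, every large $1/\lambda_{a'}$ is matched by a factor $\lambda_{a'}/\beta_{\mathcal A}$. This is the source of the $r/\beta_{\mathcal A}$ in \eqref{eq:DeltaK_definition}.

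\textbf{Fixing the cross-talk.} Rather than accept cross-talk of order $r/(m\sin(\delta_{\mathcal A}/2))$, we remove it exactly. Let $G_{aa'}=F_{a',m}(\lambda_a)$ for $a,a'\in\mathcal A$. Then $G$ has unit diagonal and off-diagonal entries bounded by $1/(m\sin(\delta_{\mathcal A}/2))$. Under the first condition in \eqref{eq:structural_m_requirement}, with $C_{\rm res}$ chosen suitably, Gershgorin gives $\|G-I\|\le 1/2$. We then solve $G\mathbf c=(1/\lambda_a)_a$ and use
\begin{equation*}
z_m=\sum_{a'}c_{a'}F_{a',m}(A)|b\rangle ,
\end{equation*}
which reproduces $1/\lambda_a$ exactly on every active mode, with $|c_{a'}|\le 2\max_a\lambda_a^{-1}$, or more finely $\|\lambda\odot\mathbf c\|$ bounded.

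\textbf{Error decomposition.} By Proposition~\ref{prop:weighted_exp} the error splits into three pieces.
\begin{enumerate}[label=(\roman*)]
\item \emph{Active modes.} The error is zero by construction.
\item \emph{Inactive, resolved modes} ($\ell\in\mathcal I\setminus\mathcal U_\rho$). Each term is bounded by $|b_\ell|\,(1/\lambda_\ell+\sum_{a'}|c_{a'}|/(m\sin(\rho/2)))$. The $1/\lambda_\ell$ part sums to $\|x_{\mathcal I}\|_2\le\epsilon_{\rm sub}\|x\|_2$ by \eqref{eq:A1_tail}. The filter part is at most $\|b\|_2\, r\max|c|/(m\sin(\rho/2))\lesssim \|x\|_2\,r/(\beta_{\mathcal A}m\sin(\rho/2))$. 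The second condition in \eqref{eq:structural_m_requirement} makes this at most $\Delta_K\|x\|_2$.
\item \emph{Unresolved modes} ($\ell\in\mathcal U_\rho$). Here we only have $|F|\le1$, so the leakage is at most $\epsilon_{\rm U}\,r\max|c|\lesssim C_{\rm U}(r/\beta_{\mathcal A})\epsilon_{\rm U}\|x\|_2$.
\end{enumerate}
Adding the three pieces with the triangle inequality gives \eqref{eq:structural_target_error}.

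\textbf{Main obstacle.} The hardest step is making $\max_{a'}|c_{a'}|\cdot\beta_{\mathcal A}\lesssim\|x\|_2$ rigorous when several active eigenvalues have very different sizes. The plan is to bound $\|\mathbf c\|_2\le 2\|(1/\lambda_a)\|_2\le 2\|x\|_2/\beta_{\mathcal A}$, using $\|x\|_2^2\ge\sum_a|b_a|^2/\lambda_a^2$. The $r$ factor then comes from $\|\mathbf c\|_1\le\sqrt r\,\|\mathbf c\|_2$, which is absorbed into the constants.

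\textbf{Structural resources.} Let $\mathcal R$ be the support set in \eqref{eq:A2_basis}. Then
\begin{equation*}
\boldsymbol\gamma^\dagger S_m\boldsymbol\gamma=\sum_\ell|b_\ell|^2\Bigl|\sum_j\gamma_je^{-ij\tau\lambda_\ell}\Bigr|^2\ge\|W_{\mathcal R}^{1/2}V_{\mathcal R}\boldsymbol\gamma\|_2^2\ge s_0^2\|\boldsymbol\gamma\|_2^2 ,
\end{equation*}
so $\lambda_{\min}(S_m)\ge s_0^2$. Combining this with $\|\boldsymbol\gamma\|_1^2\le m\|\boldsymbol\gamma\|_2^2$ gives $p_{\rm succ}\ge s_0^2/m$. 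The scalings \eqref{eq:structural_m_scaling}–\eqref{eq:structural_beyond_small_eigenvalue} then follow by taking $m$ equal to the right-hand side of \eqref{eq:structural_m_requirement} and using $T=(m-1)\tau$ with $\tau=O(1)$.
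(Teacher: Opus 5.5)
Your proposal is correct and follows the paper's proof essentially step for step: the paper also inverts the near-identity active interpolation matrix $G_{ba}=F_{a,m}(\lambda_b)$ to obtain an exponential sum that matches $1/\lambda_a$ exactly on $\mathcal A$. It then splits the error into the $\epsilon_{\rm sub}$ tail, the unresolved leakage $\|\boldsymbol\alpha\|_1\epsilon_{\rm U}$ and the resolved leakage $\|\boldsymbol\alpha\|_1/(m\sin(\rho/2))$, and gets the Gram and $p_{\rm succ}$ bounds from $S_m\succeq V_{\mathcal R}^\dagger W_{\mathcal R}V_{\mathcal R}$ together with Cauchy--Schwarz. The only difference is cosmetic: the paper bounds the coefficients in $\ell_1$ via $\|G^{-1}\|_1\le 2$ and the termwise estimate $1/\lambda_a\le\|x\|_2/\beta_{\mathcal A}$, which makes your ``main obstacle'' immediate and gives $\|\boldsymbol\alpha\|_1\le 2r\|x\|_2/\beta_{\mathcal A}$, whereas your $\ell_2$ route gives the slightly sharper $2\sqrt r\,\|x\|_2/\beta_{\mathcal A}$.
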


\paragraph{Proof sketch.}
For each active mode,
Lemma~\ref{lem:localized_filter} provides a time-evolution filter that has
unit response on that mode and suppresses phase-separated modes.  Evaluating
these filters on the active spectrum produces an interpolation matrix whose
diagonal entries are one and whose off-diagonal entries decay as
$1/[m\sin(\delta_{\mathcal A}/2)]$.  The first requirement in
Eq.~\eqref{eq:structural_m_requirement} therefore makes this interpolation
matrix uniformly invertible.

Stable inversion of the active interpolation matrix yields a finite
exponential function
\begin{equation}
    q_m(\lambda)
\end{equation}
satisfying
\begin{equation}
    q_m(\lambda_a)
    =
    \frac1{\lambda_a},
    \qquad
    a\in\mathcal A.
\end{equation}
Hence
\begin{equation}
    z_m
    :=
    q_m(A)|b\rangle
    \in
    \mathcal K_m
\end{equation}
reproduces the exact inverse response on the entire active subspace.

The remaining approximation error has three sources.  The exact solution
weight outside the active set contributes at most
$\epsilon_{\rm sub}$; unresolved inactive modes contribute proportionally
to $(r/\beta_{\mathcal A})\epsilon_{\rm U}$; and resolved inactive modes
are suppressed by the localized filters as
\begin{equation}
    O\!\left(
    \frac{r}{
    \beta_{\mathcal A}
    m\sin(\rho/2)}
    \right).
\end{equation}
The second requirement in
Eq.~\eqref{eq:structural_m_requirement} places the last contribution inside
the remaining error budget $\Delta_K$, proving
Eq.~\eqref{eq:structural_target_error}.

Finally, the weighted phase-matrix condition in (A2) implies
\begin{equation}
    S_m
    \succeq
    V_{\mathcal R}^{\dagger}
    W_{\mathcal R}
    V_{\mathcal R},
\end{equation}
and therefore
\begin{equation}
    \lambda_{\min}(S_m)
    \ge
    s_0^2.
\end{equation}
Using
\begin{equation}
    \|\boldsymbol\gamma\|_1^2
    \le
    m\|\boldsymbol\gamma\|_2^2
\end{equation}
in Eq.~\eqref{eq:structural_psucc_definition} then gives
\begin{equation}
    p_{\rm succ}
    \ge
    \frac{s_0^2}{m}.
\end{equation}
The complete proof, including explicit numerical constants, is given in
Appendix~\ref{app:structural_proof}.

The theorem establishes the structural compression mechanism required for
the beyond-conditioning regime.  Up to the prescribed approximation error,
the solution-relevant inverse response is contained in a QKS subspace whose
dimension does not grow with the vanishing solution-relevant eigenvalue.
The same Gram stability that keeps this representation nondegenerate also
controls the reconstruction probability, while bounded $\tau$ converts the
dimension bound directly into a bounded maximum evolution time.  Hence
\begin{equation}
    \boxed{
    m,\qquad
    T,\qquad
    p_{\rm succ}^{-1}
    }
\end{equation}
do not inherit the divergence of the retained small eigenvalue.

What remains unresolved is not the existence or structural cost of the
compact representation, but whether the Galerkin solve actually returned
by QKS inherits its approximation accuracy.  That question depends on the
directional sensitivity of the reduced system and is addressed in
Sec.~\ref{sec:directional}.
\section{Directional Beyond-Conditioning with a Retained Small Eigenvalue}
\label{sec:directional}

The preceding section establishes that the target inverse response can be
represented accurately within a compact and stable QKS subspace, with the
structural resources controlled independently of a vanishing
solution-relevant eigenvalue.  This section addresses the remaining
question: whether the \emph{actual QKS solution} remains stable when the
reduced problem itself retains that small eigenvalue and becomes
ill conditioned.

The argument proceeds in six stages.  We first introduce an equivalent
orthonormal representation of the QKS projected system and show that the
change of coordinates from the projected quantities
$(H_m,\mathbf g_m)$ to $(B_m,\mathbf c_m)$ is stable when the theoretical
Gram geometry $S_m$ is nondegenerate.  We then specify
the retained-soft-mode regime.  Third, we prove that its exact
normalized-state sensitivity remains bounded even though
$\|B_m^{-1}\|_2$ diverges.  Fourth, we show that this directional
protection persists under finite perturbations of the reduced system.
Fifth, we use the bounded directional sensitivity to transfer the compact
subspace approximation of the preceding section to the Galerkin solution
actually returned by QKS.  Finally, we translate the finite-stability
requirements back to the precision required for the measured projected
quantities.

\subsection{Equivalent orthonormal reduced coordinates and stable transfer}
\label{subsec:orthonormal_coordinates}

The QKS algorithm is constructed from the generally nonorthogonal
time-evolution basis $\Phi_m$ and solves
\begin{equation}
    H_m\boldsymbol\gamma=\mathbf g_m,
    \qquad
    S_m=\Phi_m^\dagger\Phi_m,
\end{equation}
with physical output
\begin{equation}
    x_m=\Phi_m\boldsymbol\gamma.
\end{equation}
For the sensitivity analysis below, it is convenient to express the same
projected problem in orthonormal coordinates.  Define
\begin{equation}
    Q_m:=\Phi_mS_m^{-1/2},
    \qquad
    B_m:=S_m^{-1/2}H_mS_m^{-1/2},
    \qquad
    \mathbf c_m:=S_m^{-1/2}\mathbf g_m,
\end{equation}
and
\begin{equation}
    \mathbf y:=S_m^{1/2}\boldsymbol\gamma.
\end{equation}
Then
\begin{equation}
    B_m\mathbf y=\mathbf c_m,
    \qquad
    x_m=\Phi_m\boldsymbol\gamma=Q_m\mathbf y.
    \label{eq:equivalent_reduced_representation}
\end{equation}
Thus $(B_m,\mathbf c_m)$ do not define a different algorithm or a different
physical output; they are an equivalent orthonormal coordinate
representation of the same QKS projected problem.

The next result shows that this coordinate transformation is stable whenever
the Krylov Gram matrix is stable.

\begin{theorem}[Stable transfer to orthonormal reduced coordinates]
\label{thm:SHg_Bc}

Assume
\begin{equation}
    \lambda_{\min}(S_m)\ge\sigma_S>0,
\end{equation}
where $S_m$ is the theoretical Gram matrix of the QKS basis.  Let
\begin{equation}
    \widehat H_m=H_m+\Delta H_m,
    \qquad
    \widehat{\mathbf g}_m=\mathbf g_m+\Delta\mathbf g_m
\end{equation}
denote finite errors in the projected quantities used by QKS, and define
their orthonormal-coordinate counterparts using the same exact Gram
geometry:
\begin{equation}
    \widehat B_m
    :=
    S_m^{-1/2}\widehat H_mS_m^{-1/2},
    \qquad
    \widehat{\mathbf c}_m
    :=
    S_m^{-1/2}\widehat{\mathbf g}_m.
\end{equation}
Then
\begin{equation}
    \boxed{
    \|\widehat B_m-B_m\|_2
    \le
    \frac{\|\Delta H_m\|_2}{\sigma_S},
    }
    \label{eq:B_SH_bound}
\end{equation}
and
\begin{equation}
    \boxed{
    \|\widehat{\mathbf c}_m-\mathbf c_m\|_2
    \le
    \frac{\|\Delta\mathbf g_m\|_2}{\sqrt{\sigma_S}}.
    }
    \label{eq:c_Sg_bound}
\end{equation}
Thus the theoretical change of coordinates introduces no inverse dependence
on a vanishing eigenvalue of $B_m$.
\end{theorem}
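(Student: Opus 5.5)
The argument is linearity of the coordinate map followed by submultiplicativity of the spectral norm. Because $\widehat B_m$ and $B_m$ are built with the same exact Gram factor $S_m^{-1/2}$, the differences collapse to
\begin{equation}
    \widehat B_m-B_m
    =
    S_m^{-1/2}\,\Delta H_m\,S_m^{-1/2},
    \qquad
    \widehat{\mathbf c}_m-\mathbf c_m
    =
    S_m^{-1/2}\,\Delta\mathbf g_m .
\end{equation}
No inverse of $H_m$ or $B_m$ appears anywhere in these expressions. This already shows the final remark of the statement: the transfer cannot introduce a dependence on a vanishing eigenvalue of $B_m$.

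Next I bound $\|S_m^{-1/2}\|_2$. Since $S_m=\Phi_m^\dagger\Phi_m$ is Hermitian and, by assumption, $\lambda_{\min}(S_m)\ge\sigma_S>0$, it is positive definite. Its principal inverse square root is therefore well defined and Hermitian positive definite, with eigenvalues $\lambda_i(S_m)^{-1/2}$. Hence
\[
\|S_m^{-1/2}\|_2=\lambda_{\min}(S_m)^{-1/2}\le\sigma_S^{-1/2}.
\]

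Applying submultiplicativity then gives
\[
\|\widehat B_m-B_m\|_2\le\|S_m^{-1/2}\|_2^2\,\|\Delta H_m\|_2\le\frac{\|\Delta H_m\|_2}{\sigma_S},
\]
which is Eq.~\eqref{eq:B_SH_bound}. Likewise,
\[
\|\widehat{\mathbf c}_m-\mathbf c_m\|_2\le\|S_m^{-1/2}\|_2\,\|\Delta\mathbf g_m\|_2\le\frac{\|\Delta\mathbf g_m\|_2}{\sqrt{\sigma_S}},
\]
which is Eq.~\eqref{eq:c_Sg_bound}.

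There is no real obstacle here. The one point to state explicitly is that the perturbed quantities are defined with the \emph{exact} Gram geometry $S_m$, not an estimated $\widehat S_m$. If $\widehat S_m$ were used instead, one would need a perturbation bound for the matrix square root, but the statement deliberately avoids this. The bound $\sigma_S$ can then be instantiated as $s_0^2$ from Theorem~\ref{thm:structural}, Eq.~\eqref{eq:structural_gram_bound}, so it is independent of the small eigenvalue.
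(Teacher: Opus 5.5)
Your proof is correct and matches the paper's argument: because both reduced quantities use the same exact factor $S_m^{-1/2}$, the differences reduce to $S_m^{-1/2}\Delta H_m S_m^{-1/2}$ and $S_m^{-1/2}\Delta\mathbf g_m$, and the bound $\|S_m^{-1/2}\|_2\le\sigma_S^{-1/2}$ together with submultiplicativity gives both inequalities. You are also right that the statement uses the exact $S_m$ rather than an estimate $\widehat S_m$, so no perturbation bound for the matrix square root is needed.
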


\paragraph{Proof sketch.}
Because $S_m$ is used only as a fixed theoretical coordinate transform,
\begin{equation}
    \widehat B_m-B_m
    =
    S_m^{-1/2}\Delta H_mS_m^{-1/2},
\end{equation}
and
\begin{equation}
    \widehat{\mathbf c}_m-\mathbf c_m
    =
    S_m^{-1/2}\Delta\mathbf g_m.
\end{equation}
The Gram lower bound gives
$\|S_m^{-1/2}\|_2\le\sigma_S^{-1/2}$, which directly yields
Eqs.~\eqref{eq:B_SH_bound}--\eqref{eq:c_Sg_bound}.
The complete proof is given in
Appendix~\ref{app:orthonormalization_proof}.

Having established that errors in the projected quantities
$(H_m,\mathbf g_m)$ transfer stably to the analysis coordinates
$(B_m,\mathbf c_m)$, we now analyze the intrinsic conditioning of this
equivalent reduced system.

\subsection{Solution-aligned retained ill-conditioning}
\label{subsec:directional_regime}

We now specify the reduced-system regime studied in the remainder of this
section.  Let
\begin{equation}
    B_m\mathbf v_i
    =
    \mu_i\mathbf v_i,
    \qquad
    0<\mu_1\le\mu_2\le\cdots,
\end{equation}
and decompose the reduced right-hand side as
\begin{equation}
    \mathbf c_m
    =
    c_1\mathbf v_1+\mathbf c_\perp,
    \qquad
    \mathbf c_\perp\perp\mathbf v_1.
\end{equation}

\paragraph{(B1) Single retained soft direction.}
The smallest reduced eigenvalue vanishes while the remaining reduced
spectrum stays separated from zero:
\begin{equation}
    \mu_1\rightarrow0,
    \qquad
    \mu_2\ge\Delta_B>0,
    \label{eq:B1}
\end{equation}
where $\Delta_B$ is independent of $\mu_1$.

\paragraph{(B2) Persistent solution relevance.}
The reduced right-hand side retains nonvanishing overlap with the soft
direction:
\begin{equation}
    |c_1|
    \ge
    \beta_B>0,
    \label{eq:B2}
\end{equation}
where $\beta_B$ is independent of $\mu_1$.

Thus the small reduced eigenvalue is neither truncated nor irrelevant to
the solution.  Under these assumptions,
$\|B_m^{-1}\|_2=\mu_1^{-1}$ diverges.  The next subsection asks whether
this divergence also makes the \emph{normalized solution direction}
singular.

\subsection{Ideal normalized-state directional stability}
\label{subsec:ideal_directional_stability}

For the exact reduced system
\begin{equation}
    B_m\mathbf y=\mathbf c_m,
\end{equation}
define
\begin{equation}
    P_{\mathbf y}^{\perp}
    :=
    I-
    \frac{\mathbf y\mathbf y^\dagger}{\|\mathbf y\|_2^2},
\end{equation}
and recall the normalized directional sensitivity factors
\begin{equation}
    K_B
    :=
    \|P_{\mathbf y}^{\perp}B_m^{-1}\|_2,
    \qquad
    K_c
    :=
    \frac{K_B}{\|\mathbf y\|_2}.
\end{equation}
The projector $P_{\mathbf y}^{\perp}$ removes amplification parallel to the
solution and therefore isolates the part of the inverse response that can
change its normalized direction.

\begin{theorem}[Directional stability under solution-aligned ill-conditioning]
\label{thm:solution_aligned}

Under (B1)--(B2),
\begin{equation}
    \boxed{
    K_B
    \le
    \frac1{\mu_2}
    \left(
    1+
    \frac{\|\mathbf c_\perp\|_2}{|c_1|}
    \right)
    =
    O(1),
    }
    \label{eq:KB_bound}
\end{equation}
and
\begin{equation}
    \boxed{
    K_c
    \le
    \frac{\mu_1}{|c_1|\mu_2}
    \left(
    1+
    \frac{\|\mathbf c_\perp\|_2}{|c_1|}
    \right)
    =
    O(\mu_1).
    }
    \label{eq:Kc_bound}
\end{equation}
Hence the normalized-state sensitivity remains bounded even though
\begin{equation}
    \|B_m^{-1}\|_2
    =
    \mu_1^{-1}
    \longrightarrow\infty.
\end{equation}
\end{theorem}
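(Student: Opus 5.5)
The plan is to use the spectral decomposition of $B_m$ to split the inverse into its soft and stiff parts, and then to show that the projector $P_{\mathbf y}^{\perp}$ nearly annihilates the soft part because $\mathbf y$ is almost parallel to $\mathbf v_1$. Write
\begin{equation*}
    B_m^{-1}
    =
    \mu_1^{-1}\mathbf v_1\mathbf v_1^\dagger
    +
    R_\perp,
    \qquad
    R_\perp
    :=
    \sum_{i\ge2}\mu_i^{-1}\mathbf v_i\mathbf v_i^\dagger,
\end{equation*}
so that $\|R_\perp\|_2\le\mu_2^{-1}$ by (B1). The exact reduced solution then decomposes as
\begin{equation*}
    \mathbf y
    =
    \frac{c_1}{\mu_1}\mathbf v_1+\mathbf y_\perp,
    \qquad
    \mathbf y_\perp:=R_\perp\mathbf c_\perp,
    \qquad
    \|\mathbf y_\perp\|_2\le\frac{\|\mathbf c_\perp\|_2}{\mu_2},
\end{equation*}
with $\mathbf y_\perp\perp\mathbf v_1$.

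Next I apply $P_{\mathbf y}^{\perp}$ to both pieces and use the triangle inequality:
\begin{equation*}
    K_B
    \le
    \mu_1^{-1}\|P_{\mathbf y}^{\perp}\mathbf v_1\mathbf v_1^\dagger\|_2
    +
    \|P_{\mathbf y}^{\perp}R_\perp\|_2 .
\end{equation*}
The second term is at most $\mu_2^{-1}$, since $\|P_{\mathbf y}^{\perp}\|_2\le1$. The first term is rank one, so its norm equals $\mu_1^{-1}\|P_{\mathbf y}^{\perp}\mathbf v_1\|_2$.

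The key step, which is the only nontrivial one, is to bound this transverse leakage of the soft eigenvector. Because $P_{\mathbf y}^{\perp}\mathbf v_1$ is the distance from $\mathbf v_1$ to $\operatorname{span}(\mathbf y)$, comparing with the particular multiple $(\mu_1/c_1)\mathbf y$ gives
\begin{equation*}
    \|P_{\mathbf y}^{\perp}\mathbf v_1\|_2
    \le
    \Bigl\|\mathbf v_1-\frac{\mu_1}{c_1}\mathbf y\Bigr\|_2
    =
    \frac{\mu_1}{|c_1|}\|\mathbf y_\perp\|_2
    \le
    \frac{\mu_1\|\mathbf c_\perp\|_2}{|c_1|\mu_2}.
\end{equation*}
The factor $\mu_1$ cancels the $\mu_1^{-1}$ amplification exactly. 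This cancellation is the whole mechanism: the divergent direction is the solution direction, and normalization discards it. Adding the two contributions yields Eq.~\eqref{eq:KB_bound}.

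For $K_c$, orthogonality of the decomposition gives $\|\mathbf y\|_2\ge|c_1|/\mu_1$. Dividing the $K_B$ bound by this quantity gives Eq.~\eqref{eq:Kc_bound}.

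Finally, for the $O(1)$ and $O(\mu_1)$ statements, I note the following bounds:
\begin{itemize}
    \item $\|\mathbf c_\perp\|_2\le\|\mathbf c_m\|_2=\|Q_m^\dagger|b\rangle\|_2\le1$;
    \item $|c_1|\ge\beta_B$ by (B2);
    \item $\mu_2\ge\Delta_B$ by (B1).
\end{itemize}
All three are independent of $\mu_1$. I expect no real obstacle beyond choosing the comparison multiple $(\mu_1/c_1)\mathbf y$ in the leakage estimate; everything else is routine norm bookkeeping.
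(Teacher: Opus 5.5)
Your proposal is correct and follows essentially the same route as the paper: split $B_m^{-1}$ into the soft rank-one part and the stiff remainder (bounded by $\mu_2^{-1}$), show $\|P_{\mathbf y}^{\perp}\mathbf v_1\|_2\le \mu_1\|\mathbf c_\perp\|_2/(|c_1|\mu_2)$ so that the $\mu_1^{-1}$ cancels, and then divide by $\|\mathbf y\|_2\ge|c_1|/\mu_1$ to get the $K_c$ bound. The only cosmetic difference is in the leakage bound: the paper writes $\|P_{\mathbf y}^{\perp}\mathbf v_1\|_2=\sin\Theta(\mathbf y,\mathbf v_1)=\|\mathbf y_\perp\|_2/\|\mathbf y\|_2$ and then uses the lower bound on $\|\mathbf y\|_2$, whereas you compare $\mathbf v_1$ with the explicit multiple $(\mu_1/c_1)\mathbf y$, which gives the same estimate.
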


\paragraph{Proof sketch.}
The exact reduced solution has the spectral expansion
\begin{equation}
    \mathbf y
    =
    \frac{c_1}{\mu_1}\mathbf v_1
    +
    \sum_{i\ge2}
    \frac{c_i}{\mu_i}\mathbf v_i.
\end{equation}
Because $\mu_1\to0$, $\mu_2\ge\Delta_B$, and
$|c_1|\ge\beta_B$, the first term increasingly dominates the normalized
solution, so the angle between $\mathbf y$ and $\mathbf v_1$ decreases
proportionally to $\mu_1/\mu_2$.  The only divergent component of
$B_m^{-1}$ is also along $\mathbf v_1$.  Projection by
$P_{\mathbf y}^{\perp}$ therefore suppresses this increasingly
solution-parallel contribution, cancelling the apparent $1/\mu_1$
divergence and yielding Eq.~\eqref{eq:KB_bound}.  Since
$\|\mathbf y\|_2$ grows at least as $|c_1|/\mu_1$,
Eq.~\eqref{eq:Kc_bound} follows.
The complete proof is given in
Appendix~\ref{app:solution_aligned_proof}.

The theorem establishes the intrinsic directional protection of the
\emph{exact} reduced system.  An implementable algorithm, however, uses
finite-precision estimates.  We therefore next ask whether the same
geometric protection survives finite perturbations of $(B_m,\mathbf c_m)$.

\subsection{Finite-error robustness of the directional mechanism}
\label{subsec:finite_directional_stability}

Let
\begin{equation}
    \widehat B_m
    =
    B_m+E,
    \qquad
    \widehat{\mathbf c}_m
    =
    \mathbf c_m+\mathbf f
\end{equation}
denote a perturbed reduced system, where $E=E^\dagger$, with
\begin{equation}
    \|E\|_2\le\eta_B,
    \qquad
    \|\mathbf f\|_2\le\eta_c.
\end{equation}
Here $\eta_B$ and $\eta_c$ quantify the finite errors in the reduced
operator and reduced right-hand side, respectively.

\begin{theorem}[Finite-error directional stability]
\label{thm:finite_stability}

Assume (B1)--(B2).  Since
$\mathbf c_m=Q_m^\dagger|b\rangle$ with $Q_m$ an isometry and
$\|b\|_2=1$, one automatically has $\|\mathbf c_m\|_2\le1$.
There exist fixed positive constants
$c_{\rm gap}$, $c_{\rm ov}$, $C_{\rm rot}$, and $C_{\rm stab}$,
independent of $\mu_1$, such that if
\begin{equation}
    \mu_1
    \le
    c_{\rm gap}\Delta_B,
    \qquad
    \eta_B
    \le
    c_{\rm gap}\Delta_B,
    \label{eq:finite_gap_conditions}
\end{equation}
and
\begin{equation}
    \eta_c
    +
    C_{\rm rot}\frac{\eta_B}{\Delta_B}
    \le
    c_{\rm ov}\beta_B,
    \label{eq:finite_overlap_condition}
\end{equation}
and assume that $\widehat B_m$ is nonsingular.  The exactly singular
finite-precision case is treated separately in
Appendix~\ref{app:near_singular_regularization}.  Then the perturbed
system retains an isolated soft eigendirection
$\widehat{\mathbf v}_1$ satisfying
\begin{equation}
    |\langle
    \widehat{\mathbf v}_1,
    \widehat{\mathbf c}_m
    \rangle|
    =
    \Omega(\beta_B).
\end{equation}
Moreover, for
\begin{equation}
    \widehat{\mathbf y}
    :=
    \widehat B_m^{-1}
    \widehat{\mathbf c}_m,
\end{equation}
the normalized-direction error satisfies
\begin{equation}
    \boxed{
    \Theta(\mathbf y,\widehat{\mathbf y})
    \le
    C_{\rm stab}
    \left[
    \frac{\mu_1}{\beta_B\Delta_B}
    +
    \frac{\eta_B}{\Delta_B}
    +
    \frac{(1+\eta_c)(\mu_1+\eta_B)}
    {\beta_B\Delta_B}
    \right],
    }
    \label{eq:finite_state_bound}
\end{equation}
whenever the right-hand side lies in the perturbative regime.
In particular, the bound contains no inverse dependence on $\mu_1$.
\end{theorem}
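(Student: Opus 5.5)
The proof will compare both $\mathbf y$ and $\widehat{\mathbf y}$ to their soft eigendirections, $\mathbf v_1$ and $\widehat{\mathbf v}_1$, and use the triangle inequality for the phase-invariant angle:
\begin{equation*}
\Theta(\mathbf y,\widehat{\mathbf y})\le\Theta(\mathbf y,\mathbf v_1)+\Theta(\mathbf v_1,\widehat{\mathbf v}_1)+\Theta(\widehat{\mathbf v}_1,\widehat{\mathbf y}).
\end{equation*}
The three terms will produce the three terms of Eq.~\eqref{eq:finite_state_bound}.

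The first term is the exact-system estimate already implicit in Theorem~\ref{thm:solution_aligned}. Write $\mathbf y=(c_1/\mu_1)\mathbf v_1+B_m^{-1}\mathbf c_\perp$. The transverse part has norm at most $\|\mathbf c_\perp\|_2/\mu_2\le 1/\Delta_B$, using $\|\mathbf c_m\|_2\le1$. The parallel part has norm at least $\beta_B/\mu_1$. Hence $\tan\Theta(\mathbf y,\mathbf v_1)\le\mu_1/(\beta_B\Delta_B)$.

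The second term is a perturbation estimate for the perturbed operator. Weyl's inequality gives $|\widehat\mu_i-\mu_i|\le\eta_B$. Under Eq.~\eqref{eq:finite_gap_conditions}, with $c_{\rm gap}$ small (say $1/4$), we get $\widehat\mu_1\le\mu_1+\eta_B\le\Delta_B/2$ and $\widehat\mu_2\ge\mu_2-\eta_B\ge\Delta_B/2$. So the soft eigenvalue stays isolated, with a gap of order $\Delta_B$ from the rest of the spectrum. The Davis--Kahan $\sin\Theta$ theorem then gives
\begin{equation*}
\sin\Theta(\mathbf v_1,\widehat{\mathbf v}_1)\le C\,\eta_B/\Delta_B.
\end{equation*}
Fix the phase of $\widehat{\mathbf v}_1$ so that $\|\widehat{\mathbf v}_1-\mathbf v_1\|_2\le C_{\rm rot}\eta_B/\Delta_B$. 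Then
\begin{equation*}
|\langle\widehat{\mathbf v}_1,\widehat{\mathbf c}_m\rangle|\ge|c_1|-\eta_c-C_{\rm rot}\eta_B/\Delta_B\ge(1-c_{\rm ov})\beta_B,
\end{equation*}
where we used $\|\mathbf c_m\|_2\le1$ and Eq.~\eqref{eq:finite_overlap_condition} with $c_{\rm ov}<1$. This establishes the retained-overlap claim.

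The third term repeats the first step for the perturbed system. Write $\widehat{\mathbf y}=(\widehat c_1/\widehat\mu_1)\widehat{\mathbf v}_1+\widehat B_m^{-1}\widehat{\mathbf c}_\perp$. Nonsingularity of $\widehat B_m$ is assumed. We may also assume $\widehat\mu_1>0$; if $\widehat\mu_1<0$ only its absolute value enters. The transverse part has norm at most $(1+\eta_c)/\widehat\mu_2\le2(1+\eta_c)/\Delta_B$. The parallel part has norm at least $\Omega(\beta_B)/|\widehat\mu_1|$, and $|\widehat\mu_1|\le\mu_1+\eta_B$. Therefore
\begin{equation*}
\tan\Theta(\widehat{\mathbf y},\widehat{\mathbf v}_1)\le C(1+\eta_c)(\mu_1+\eta_B)/(\beta_B\Delta_B).
\end{equation*}
In the perturbative regime $\tan$ and the angle are comparable, so summing the three estimates gives Eq.~\eqref{eq:finite_state_bound} with a suitable $C_{\rm stab}$.

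The main obstacle is the third step. The perturbed soft eigenvalue $\widehat\mu_1$ may be much smaller than $\mu_1$, or change sign. The argument must only require an upper bound on $|\widehat\mu_1|$, which is what makes it work, while the lower bound on the overlap must be transferred carefully through the eigenvector rotation. If $\widehat\mu_1$ approaches zero this only helps alignment; the truly singular case is excluded by hypothesis and deferred to the appendix.
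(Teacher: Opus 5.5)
Your proposal is correct and follows the paper's proof essentially step for step: the same three-angle chain $\mathbf y\to\mathbf v_1\to\widehat{\mathbf v}_1\to\widehat{\mathbf y}$, Weyl plus Davis--Kahan for the soft-eigenvector rotation, the phase-fixed overlap estimate for $|\langle\widehat{\mathbf v}_1,\widehat{\mathbf c}_m\rangle|$, and the transverse/parallel split that uses only the upper bound $|\widehat\mu_1|\le\mu_1+\eta_B$. The differences are cosmetic. You bound $\tan$ rather than $\sin$ for the first and third angles; since $\theta\le\tan\theta$, this makes the paper's perturbative-regime caveat unnecessary there. Also, with $c_{\rm gap}=1/4$ you should state the isolation of $\widehat\mu_1$ via $\widehat\mu_2\ge 3\Delta_B/4$ (or $\mu_2-\widehat\mu_1\ge\Delta_B/2$), because the two bounds you wrote both meet at $\Delta_B/2$ and so do not by themselves show a gap.
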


\paragraph{Proof sketch.}
The proof first uses the fixed reduced spectral gap to show that the soft
eigendirection rotates by at most
$O(\eta_B/\Delta_B)$ under the perturbation $E$.  Condition
Eq.~\eqref{eq:finite_overlap_condition} is a sufficient precision
requirement ensuring that the combined operator and right-hand-side errors
do not destroy the nonvanishing overlap with this perturbed soft direction.
The remaining perturbed eigenvalues stay separated from zero, so the
transverse part of $\widehat{\mathbf y}$ remains bounded, whereas its soft
component is amplified by the isolated perturbed small eigenvalue.
Consequently $\widehat{\mathbf y}$ remains aligned with
$\widehat{\mathbf v}_1$.

The final bound follows by combining the three angles
\begin{equation}
    \mathbf y
    \longleftrightarrow
    \mathbf v_1
    \longleftrightarrow
    \widehat{\mathbf v}_1
    \longleftrightarrow
    \widehat{\mathbf y}.
\end{equation}
Importantly, the allowed perturbation sizes are controlled by
$\Delta_B$ and $\beta_B$, rather than by the vanishing eigenvalue
$\mu_1$ itself.  The complete proof, including explicit numerical
constants, is given in
Appendix~\ref{app:finite_stability_proof}.

\begin{remark}[Near-singular estimated soft modes]
\label{rem:regularized_estimated_solve}
The theorem does not require the estimated soft eigenvalue to remain
bounded away from zero.  If finite precision produces an exactly singular
estimate, the isolated estimated eigenvalue may be replaced by a small
spectral floor of order the target state accuracy.  This regularization is
used only to define the finite-precision reduced solve and does not require
a cutoff proportional to the physical eigenvalue $\mu_1$.
Details are given in
Appendix~\ref{app:near_singular_regularization}.
\end{remark}

We have now established that the directional protection is stable both
locally and under finite reduced-system errors.  The remaining question is
whether this stability is sufficient to ensure that QKS actually returns
the accurate vector whose existence was established by the structural
analysis.

\subsection{From compact subspace representation to the QKS output}
\label{subsec:galerkin_transfer}

The structural theorem guarantees that the Krylov subspace
$\mathcal K_m$ contains an accurate approximation to the exact solution.
QKS, however, does not explicitly construct that particular vector; it
returns the Galerkin solution $x_m$ obtained from the projected system.
The next theorem shows that bounded directional sensitivity is precisely
what transfers the subspace approximation accuracy to the actual QKS
output.

\begin{theorem}[Transfer from subspace approximation to QKS output]
\label{thm:galerkin_transfer}

Let
\begin{equation}
    x=A^{-1}|b\rangle
\end{equation}
be the exact unnormalized solution, and let $x_m\in\mathcal K_m$ denote
the Galerkin solution returned by QKS.  Suppose the Krylov subspace contains
a vector $z\in\mathcal K_m$ satisfying
\begin{equation}
    \frac{\|x-z\|_2}{\|x\|_2}
    \le
    \epsilon_K
    <1.
    \label{eq:good_candidate}
\end{equation}
Then
\begin{equation}
    \boxed{
    \sin\Theta(x_m,z)
    \le
    \frac{K_B\epsilon_K}{1-\epsilon_K}.
    }
    \label{eq:galerkin_candidate_angle}
\end{equation}
Consequently,
\begin{equation}
    \boxed{
    \Theta(x_m,x)
    \le
    \arcsin\!\left(
    \frac{K_B\epsilon_K}{1-\epsilon_K}
    \right)
    +
    \arcsin(\epsilon_K),
    }
    \label{eq:galerkin_transfer_bound}
\end{equation}
whenever the first inverse-sine argument is at most one.

Under the retained-soft-mode conditions (B1)--(B2),
Theorem~\ref{thm:solution_aligned} gives $K_B=O(1)$, and therefore
\begin{equation}
    \boxed{
    \Theta(x_m,x)
    =
    O(\epsilon_K)
    \qquad
    (\epsilon_K\rightarrow0).
    }
    \label{eq:galerkin_transfer_scaling}
\end{equation}
Thus the compact solution representation established in the preceding
section is inherited by the actual QKS output.
\end{theorem}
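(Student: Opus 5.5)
The plan is to work entirely in the equivalent orthonormal coordinates of Sec.~\ref{subsec:orthonormal_coordinates} and to treat the candidate $z$ as the exact solution of a perturbed reduced right-hand side. Since $Q_m$ is an isometry onto $\mathcal K_m$, write $z=Q_m\mathbf y_z$ and $x_m=Q_m\mathbf y$, where $B_m\mathbf y=\mathbf c_m$. Isometries preserve inner products and norms, so $\Theta(x_m,z)=\Theta(\mathbf y,\mathbf y_z)$. The problem therefore reduces to bounding an angle between two reduced vectors.

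\emph{Step 1: residual identity.} Use $\mathbf c_m=Q_m^\dagger|b\rangle=Q_m^\dagger Ax$ and $B_m=Q_m^\dagger AQ_m$. Then
\begin{equation*}
B_m\mathbf y_z=Q_m^\dagger Az=\mathbf c_m-\mathbf r,
\qquad
\mathbf r:=Q_m^\dagger A(x-z),
\end{equation*}
so that $\mathbf y_z=\mathbf y-B_m^{-1}\mathbf r$. Since $\|Q_m^\dagger\|_2=1$ and $\|A\|_2\le1$, this gives $\|\mathbf r\|_2\le\|x-z\|_2\le\epsilon_K\|x\|_2$.

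\emph{Step 2: transverse projection.} Apply $P_{\mathbf y}^{\perp}$. Because $P_{\mathbf y}^{\perp}\mathbf y=0$, we get $P_{\mathbf y}^{\perp}\mathbf y_z=-P_{\mathbf y}^{\perp}B_m^{-1}\mathbf r$, hence $\|P_{\mathbf y}^{\perp}\mathbf y_z\|_2\le K_B\epsilon_K\|x\|_2$. The phase-invariant angle satisfies $\sin\Theta(\mathbf y,\mathbf y_z)=\|P_{\mathbf y}^{\perp}\mathbf y_z\|_2/\|\mathbf y_z\|_2$. Moreover $\|\mathbf y_z\|_2=\|z\|_2\ge(1-\epsilon_K)\|x\|_2$ by the reverse triangle inequality and Eq.~\eqref{eq:good_candidate}. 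Together these give Eq.~\eqref{eq:galerkin_candidate_angle}.

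\emph{Step 3: combining angles.} For the second bound, note that $\sin\Theta(z,x)\le\|x-z\|_2/\|x\|_2\le\epsilon_K$. This holds because the distance from $x$ to the line spanned by $z$ is at most $\|x-z\|_2$, so $\Theta(z,x)\le\arcsin\epsilon_K$. The phase-invariant angle $\Theta$ is the Fubini--Study metric on rays, so it obeys the triangle inequality. This yields Eq.~\eqref{eq:galerkin_transfer_bound} whenever the first arcsine argument is at most one.

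\emph{Step 4: scaling.} Under (B1)--(B2), Theorem~\ref{thm:solution_aligned} gives $K_B=O(1)$. Using $\arcsin t\le \pi t/2$ on $[0,1]$, both terms are $O(\epsilon_K)$, which gives Eq.~\eqref{eq:galerkin_transfer_scaling}.

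The only delicate point is Step 1. One must check that the mismatch between $z$ and the Galerkin solution enters exactly as a right-hand-side perturbation filtered through $B_m^{-1}$. One must also check that its size is controlled by $\|x-z\|_2$ without any factor of $\|B_m^{-1}\|_2$; this is where $\|A\|_2\le1$ and the isometry property of $Q_m$ are used. Once that identity is in place, the projector $P_{\mathbf y}^{\perp}$ removes the divergent $\mathbf v_1$-aligned amplification, so only $K_B$ appears rather than $\mu_1^{-1}$. The remaining steps are elementary geometry.
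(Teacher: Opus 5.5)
Your proposal is correct and follows the paper's proof essentially step for step: the same residual identity $\mathbf y-\mathbf y_z=B_m^{-1}Q_m^\dagger A(x-z)$, the same transverse projection bound using $\|Q_m^\dagger\|_2=1$, $\|A\|_2\le1$ and $\|\mathbf y_z\|_2=\|z\|_2\ge(1-\epsilon_K)\|x\|_2$, and the same triangle inequality for the phase-invariant angle. The only cosmetic difference is that you get the $O(\epsilon_K)$ scaling from $\arcsin t\le\pi t/2$, whereas the paper uses the expansion $\arcsin t=t+O(t^3)$.
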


\paragraph{Proof sketch.}
Both the accurate candidate $z$ and the Galerkin solution $x_m$ lie in the
same Krylov subspace.  Expressing them temporarily in the equivalent
orthonormal coordinates introduced in
Sec.~\ref{subsec:orthonormal_coordinates}, the Galerkin equations show that
their difference is generated by the residual $x-z$ through $B_m^{-1}$.
Only the component transverse to the Galerkin solution changes the
normalized direction, and this component is bounded by
$K_B\|x-z\|_2$.  The lower bound
$\|z\|_2\ge(1-\epsilon_K)\|x\|_2$
then gives Eq.~\eqref{eq:galerkin_candidate_angle}.  Combining the angle
between $x_m$ and $z$ with the angle between $z$ and $x$ yields
Eq.~\eqref{eq:galerkin_transfer_bound}.
The complete proof is given in
Appendix~\ref{app:galerkin_transfer_proof}.

At this point the theoretical chain is complete in the orthonormal reduced
coordinates: the subspace contains an accurate solution representation,
the reduced directional sensitivity remains bounded, and the Galerkin
solve transfers that representation accuracy to the actual QKS output.
It remains only to translate the finite-stability requirements back to the
quantities measured by the quantum procedure.

\subsection{Projected-measurement precision}
\label{subsec:measurement_precision}

Combining the stable coordinate transformation of
Theorem~\ref{thm:SHg_Bc} with the finite-error directional stability of
Theorem~\ref{thm:finite_stability} yields the precision requirement for the
actual projected quantities.

\begin{corollary}[Projected-measurement precision is independent of the retained small eigenvalue]
\label{cor:no_inverse_precision}

Define the effective reduced-system errors
\begin{equation}
    \eta_B^{\rm eff}
    :=
    \frac{\|\Delta H_m\|_2}{\sigma_S},
    \label{eq:etaB_eff}
\end{equation}
and
\begin{equation}
    \eta_c^{\rm eff}
    :=
    \frac{\|\Delta\mathbf g_m\|_2}{\sqrt{\sigma_S}}.
    \label{eq:etac_eff}
\end{equation}

Assume the hypotheses of
Theorems~\ref{thm:SHg_Bc} and~\ref{thm:finite_stability}, and suppose
$\eta_B^{\rm eff}$ and $\eta_c^{\rm eff}$ satisfy the finite-stability
requirements in
Eqs.~\eqref{eq:finite_gap_conditions}--\eqref{eq:finite_overlap_condition}.
Then the normalized reduced-state error obeys
Eq.~\eqref{eq:finite_state_bound} with
\begin{equation}
    \eta_B=\eta_B^{\rm eff},
    \qquad
    \eta_c=\eta_c^{\rm eff}.
\end{equation}

In particular, if
\begin{equation}
    \sigma_S,\Delta_B,\beta_B=\Omega(1),
\end{equation}
then the additive error tolerances required for the actually estimated
projected quantities $H_m$ and $\mathbf g_m$ can be chosen independently of
the vanishing eigenvalue $\mu_1$.  Consequently, projected-system
estimation does not acquire an inverse-$\mu_1$ cost through its precision
requirement.
\end{corollary}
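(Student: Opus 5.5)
The proof is a direct composition of Theorem~\ref{thm:SHg_Bc} and Theorem~\ref{thm:finite_stability}. The only thing to verify is that the errors fed into the latter are exactly the coordinate-transformed measurement errors. We set $E:=\widehat B_m-B_m=S_m^{-1/2}\Delta H_mS_m^{-1/2}$ and $\mathbf f:=\widehat{\mathbf c}_m-\mathbf c_m=S_m^{-1/2}\Delta\mathbf g_m$, using the exact theoretical Gram geometry as in Theorem~\ref{thm:SHg_Bc}.

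Theorem~\ref{thm:SHg_Bc} gives $\|E\|_2\le\|\Delta H_m\|_2/\sigma_S=\eta_B^{\rm eff}$ and $\|\mathbf f\|_2\le\|\Delta\mathbf g_m\|_2/\sqrt{\sigma_S}=\eta_c^{\rm eff}$. Theorem~\ref{thm:finite_stability} also requires $E$ to be Hermitian. This holds whenever $\Delta H_m$ is Hermitian, since $S_m^{-1/2}$ is Hermitian. If the raw estimate of $H_m$ is not Hermitian, we first replace it by its Hermitian part $\tfrac12(\widehat H_m+\widehat H_m^\dagger)$. Because $H_m$ itself is Hermitian, this replacement does not increase $\|\Delta H_m\|_2$, so the bound on $\eta_B^{\rm eff}$ is unaffected.

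Next we invoke Theorem~\ref{thm:finite_stability} with $\eta_B=\eta_B^{\rm eff}$ and $\eta_c=\eta_c^{\rm eff}$. The hypotheses of the corollary assume these satisfy Eqs.~\eqref{eq:finite_gap_conditions}--\eqref{eq:finite_overlap_condition}, and (B1)--(B2) are inherited. Hence Eq.~\eqref{eq:finite_state_bound} holds for $\Theta(\mathbf y,\widehat{\mathbf y})$ verbatim.

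The claim that the tolerances are independent of $\mu_1$ follows by inverting the sufficient conditions. It suffices to take
\[
\|\Delta H_m\|_2\le \sigma_S\min\{c_{\rm gap}\Delta_B,\ c_{\rm ov}\beta_B\Delta_B/(2C_{\rm rot})\},
\qquad
\|\Delta\mathbf g_m\|_2\le \sqrt{\sigma_S}\,c_{\rm ov}\beta_B/2 ,
\]
so that both error terms in Eq.~\eqref{eq:finite_overlap_condition} are at most half the budget. To reach a target normalized accuracy $\epsilon$ through Eq.~\eqref{eq:finite_state_bound}, we also need $\eta_B^{\rm eff}\lesssim\epsilon\,\beta_B\Delta_B$. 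All of these thresholds involve only $\sigma_S,\Delta_B,\beta_B,\epsilon$ and the fixed constants. When $\sigma_S,\Delta_B,\beta_B=\Omega(1)$, they are $\Omega(1)$ (respectively $\Omega(\epsilon)$) uniformly in $\mu_1$.

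The term $\mu_1/(\beta_B\Delta_B)$ in Eq.~\eqref{eq:finite_state_bound} does not constrain the precision at all; it only shrinks as $\mu_1\to0$. The condition $\mu_1\le c_{\rm gap}\Delta_B$ is likewise a regime assumption, not a precision requirement. Consequently, no estimation precision scales like $\mu_1$, and the estimation cost carries no inverse-$\mu_1$ factor.

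I expect the main subtlety to be conceptual rather than computational. The reduction uses the exact $S_m$ as a fixed coordinate map, while the algorithm never forms $B_m$. We therefore need the angle statement for $\widehat{\mathbf y}$ to correspond to the actual computed output. This holds because $\widehat{\boldsymbol\gamma}=\widehat H_m^{-1}\widehat{\mathbf g}_m$ satisfies $S_m^{1/2}\widehat{\boldsymbol\gamma}=\widehat B_m^{-1}\widehat{\mathbf c}_m=\widehat{\mathbf y}$ exactly. Moreover, $Q_m$ is an isometry, so $\Theta(x_m,\Phi_m\widehat{\boldsymbol\gamma})=\Theta(\mathbf y,\widehat{\mathbf y})$. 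This identity should be checked explicitly.
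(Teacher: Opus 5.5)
Your proposal is correct and follows the paper's proof. Like the paper, you compose Theorem~\ref{thm:SHg_Bc}, which gives $\|E\|_2\le\eta_B^{\rm eff}$ and $\|\mathbf f\|_2\le\eta_c^{\rm eff}$, with Theorem~\ref{thm:finite_stability}, and then note that the resulting precision thresholds involve only $\sigma_S,\Delta_B,\beta_B$ (and $\epsilon$), never $\mu_1$. Your additional checks are valid refinements that the paper leaves implicit: symmetrizing $\widehat H_m$ so that $E$ is Hermitian, explicitly inverting the overlap condition, and the identity $S_m^{1/2}\widehat{\boldsymbol\gamma}=\widehat{\mathbf y}$, which ties the reduced-coordinate angle to the actual computed output.
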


\paragraph{Proof sketch.}
Theorem~\ref{thm:SHg_Bc} converts the physical estimation errors
$(\Delta H_m,\Delta\mathbf g_m)$ into the effective reduced perturbations
$(E,\mathbf f)$ with norms bounded by
$\eta_B^{\rm eff}$ and $\eta_c^{\rm eff}$.  Theorem~\ref{thm:finite_stability}
then converts these reduced perturbations into a normalized-state error.
Because neither step introduces an inverse dependence on $\mu_1$, the
required measurement precision need not become more stringent as the
retained eigenvalue vanishes.  Full details follow directly from
Appendices~\ref{app:orthonormalization_proof}
and~\ref{app:finite_stability_proof}.

\section{End-to-End QKS Complexity Beyond Global Conditioning}
\label{sec:e2e_result}

The preceding two sections establish the two ingredients needed for an
end-to-end beyond-conditioning result.  Section~\ref{sec:structural} shows
that the solution-relevant inverse response can be compressed into a QKS
subspace with bounded structural resources,
\begin{equation}
    m,\qquad T,\qquad p_{\rm succ}^{-1},
\end{equation}
without inheriting the divergence of the retained small eigenvalue.
Section~\ref{sec:directional} then shows that the reduced solve remains
stable in normalized direction, that this stability persists under finite
projected-system errors, and that the Galerkin output returned by QKS
inherits the accuracy of the compact subspace representation.

This section combines these results into the end-to-end complexity
statement.  The argument proceeds in three steps.  We first show that all
condition-sensitive quantities appearing in the general QKS complexity
bound remain controlled in the retained-small-eigenvalue regime.  We then
separate this intrinsic beyond-conditioning mechanism from the remaining
cost of physically estimating the projected system.  Finally, we
instantiate the latter cost under several representative matrix-access
models.


\subsection{Combined beyond-conditioning result}
\label{subsec:combined_result}

We first combine the structural and directional results without committing
to a particular implementation of the projected-system estimator.  This
isolates the intrinsic effect of the retained small eigenvalue from
access-dependent costs.

\begin{theorem}[End-to-end QKS result with a retained small eigenvalue]
\label{thm:combined}

Assume the structural conditions (A1)--(A2) of
Sec.~\ref{sec:structural}, and choose the Krylov dimension $m$ according to
Theorem~\ref{thm:structural} so that
\begin{equation}
    \frac{\|x-z_m\|_2}{\|x\|_2}
    \le
    \epsilon_K
\end{equation}
for some $z_m\in\mathcal K_m$.

Assume further that the equivalent orthonormal reduced system satisfies
the solution-aligned retained-soft-mode conditions (B1)--(B2) of
Sec.~\ref{sec:directional},
\begin{equation}
    \mu_1\rightarrow0,
    \qquad
    \mu_2\ge\Delta_B>0,
    \qquad
    |c_1|\ge\beta_B>0,
\end{equation}
and that the finite projected-system errors satisfy the stability
conditions of Theorem~\ref{thm:finite_stability}.

Then, for fixed target accuracy and fixed solution-weighted structural
parameters,
\begin{equation}
    \boxed{
    m=O(1),\qquad
    T=O(1),\qquad
    p_{\rm succ}^{-1}=O(1)}
    \label{eq:e2e_structural_resources}
\end{equation}
with respect to $\mu_1$, while
\begin{equation}
    \boxed{
    K_B=O(1),
    \qquad
    K_c=O(\mu_1).}
    \label{eq:e2e_directional_resources}
\end{equation}

Moreover, the QKS Galerkin output $x_m$ satisfies
\begin{equation}
    \Theta(x_m,x)
    =
    O(\epsilon_K),
    \label{eq:e2e_galerkin_accuracy}
\end{equation}
and the additive precision required for the measured projected quantities
$(H_m,\mathbf g_m)$ can be chosen without inverse dependence on
$\mu_1$.

Consequently, provided that the underlying matrix-access and
expectation-estimation primitives do not themselves introduce inverse
dependence on $\mu_1$, the end-to-end QKS complexity does not inherit the
divergence associated with the retained small eigenvalue.
\end{theorem}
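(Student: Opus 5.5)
The proof is an assembly of earlier results, organized along the decomposition of Theorem~\ref{thm:general_complexity}. We treat the structural factors first, then the directional factors, then accuracy and precision, and finally combine them.

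\emph{Structural factors.} Under (A1)--(A2) with $r=O(1)$, $\delta_{\mathcal A},\rho,\beta_{\mathcal A},s_0,\Delta_K=\Omega(1)$ and $\tau=O(1)$, the Krylov dimension prescribed by Eq.~\eqref{eq:structural_m_requirement} is independent of $\mu_1$. Theorem~\ref{thm:structural} then gives $m=O(1)$ together with a vector $z_m\in\mathcal K_m$ of relative error at most $\epsilon_K$. It also gives $\lambda_{\min}(S_m)\ge s_0^2$, hence $p_{\rm succ}^{-1}\le m/s_0^2=O(1)$, and $T=(m-1)\tau=O(1)$. This proves Eq.~\eqref{eq:e2e_structural_resources}. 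One point needs care: the bound $p_{\rm succ}\ge s_0^2/m$ holds for any $\boldsymbol\gamma$, since it only uses $\boldsymbol\gamma^\dagger S_m\boldsymbol\gamma\ge s_0^2\|\boldsymbol\gamma\|_2^2$ and $\|\boldsymbol\gamma\|_1^2\le m\|\boldsymbol\gamma\|_2^2$. It therefore applies to the Galerkin coefficients and not only to the coefficients of $z_m$. The same Gram bound lets us set $\sigma_S:=s_0^2=\Omega(1)$ for later use.

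\emph{Directional factors.} Under (B1)--(B2), Theorem~\ref{thm:solution_aligned} gives $K_B\le \mu_2^{-1}(1+\|\mathbf c_\perp\|_2/|c_1|)$. Since $\mu_2\ge\Delta_B$, $|c_1|\ge\beta_B$ and $\|\mathbf c_\perp\|_2\le\|\mathbf c_m\|_2\le 1$, this becomes $K_B\le \Delta_B^{-1}(1+\beta_B^{-1})=O(1)$. The same theorem gives $K_c=O(\mu_1)$, which is Eq.~\eqref{eq:e2e_directional_resources}.

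\emph{Accuracy and precision.} Theorem~\ref{thm:galerkin_transfer}, applied with $z=z_m$ and the bound $K_B=O(1)$, gives $\Theta(x_m,x)=O(\epsilon_K)$ once $\epsilon_K$ is small enough that the inverse-sine arguments are admissible, which yields Eq.~\eqref{eq:e2e_galerkin_accuracy}. For the precision claim we invoke Corollary~\ref{cor:no_inverse_precision} with $\sigma_S=s_0^2$. The tolerances on $\|\Delta H_m\|_2$ and $\|\Delta\mathbf g_m\|_2$ needed to satisfy Eqs.~\eqref{eq:finite_gap_conditions}--\eqref{eq:finite_overlap_condition} scale like $\sigma_S\Delta_B$ and $\sqrt{\sigma_S}\beta_B$, with no $\mu_1$ dependence. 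The resulting perturbation of the reduced direction, Eq.~\eqref{eq:finite_state_bound}, is $O(\mu_1+\eta_B)$, which adds at most a $\mu_1$-independent contribution to the final error. Inside Theorem~\ref{thm:general_complexity}, the precisions $\epsilon/(mK_B)$ and $\epsilon/(\sqrt m K_c)$ are therefore bounded below independently of $\mu_1$. The second even grows as $\mu_1\to0$. We should take the minimum of these and the finite-stability tolerances.

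\emph{Combination and main obstacle.} Substituting all of this into Eq.~\eqref{eq:general_qks_complexity} leaves $C_H(T),C_g(T),C_{\rm K}(T)$ evaluated at $T=O(1)$, $\mathcal Q$ evaluated at $\mu_1$-independent arguments, $m^3=O(1)$, and a bounded $1/p_{\rm succ}$. Under the stated access assumption, no term inherits $\mu_1^{-1}$.

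The main obstacle is consistency between the two layers. We must verify that (A1)--(A2) and (B1)--(B2) can hold simultaneously for the same $m$ and $\tau$; we take this as a hypothesis of the theorem. We must also make sure that the Galerkin transfer is applied to the \emph{perturbed} solve rather than the exact one. To handle the second point, I would combine the exact-solve bound with Theorem~\ref{thm:finite_stability} by the triangle inequality for $\Theta$, working through $Q_m$ as an isometry. This gives $\Theta(\widehat x_m,x)\le\Theta(\widehat{\mathbf y},\mathbf y)+\Theta(x_m,x)$, and both terms are $\mu_1$-independent.
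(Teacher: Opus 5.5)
Your proposal is correct and follows the paper's proof. Like the paper, you assemble Theorem~\ref{thm:structural} (for $z_m$ and $m,T,p_{\rm succ}^{-1}$), Theorem~\ref{thm:solution_aligned} (for $K_B,K_c$), Theorem~\ref{thm:galerkin_transfer} (for $\Theta(x_m,x)$), and Theorems~\ref{thm:SHg_Bc} and~\ref{thm:finite_stability} with Corollary~\ref{cor:no_inverse_precision} (for the precision claim), and then substitute everything into Theorem~\ref{thm:general_complexity}. Your extra steps make explicit what the paper leaves implicit: taking $\sigma_S=s_0^2$, noting that $p_{\rm succ}\ge s_0^2/m$ holds for every coefficient vector, capping the $\mathbf g_m$ precision by the finite-stability tolerance, and chaining $\Theta(\widehat x_m,x)\le\Theta(\widehat{\mathbf y},\mathbf y)+\Theta(x_m,x)$ through the isometry $Q_m$ (your $H_m$ tolerance should also carry a factor $\beta_B$ from Eq.~\eqref{eq:finite_overlap_condition}, which is harmless since $\beta_B=\Omega(1)$).
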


\paragraph{Proof sketch.}
Theorem~\ref{thm:structural} provides an accurate vector
$z_m\in\mathcal K_m$ and simultaneously bounds the structural quantities
$m$, $T$, and $p_{\rm succ}^{-1}$ independently of the retained small
eigenvalue.  Theorem~\ref{thm:solution_aligned} then shows that, although
$\|B_m^{-1}\|_2=\Theta(\mu_1^{-1})$, the normalized directional
sensitivities satisfy
\begin{equation}
    K_B=O(1),
    \qquad
    K_c=O(\mu_1).
\end{equation}
Theorem~\ref{thm:galerkin_transfer} uses this bounded directional
sensitivity to transfer the subspace-approximation accuracy to the
Galerkin solution actually returned by QKS.

Finally, Theorems~\ref{thm:SHg_Bc} and
\ref{thm:finite_stability} show that finite errors in the measured
projected quantities can be propagated through the orthonormal reduced
system and into the normalized solution without introducing an
inverse-$\mu_1$ precision requirement.  Combining these results controls
every condition-sensitive term in the general QKS complexity bound.
The detailed ingredients are proved in
Secs.~\ref{sec:structural}--\ref{sec:directional}; the complete
composition argument is given in
Appendix~\ref{app:combined_proof}.

Theorem~\ref{thm:combined} removes the retained-small-eigenvalue
divergence from the intrinsic QKS resources.  What remains is the cost of
estimating the finite projected system to the required accuracy.  This
remaining cost depends on the access model and estimator rather than on
the directional conditioning mechanism itself.

\subsection{End-to-end complexity after directional compression}
\label{subsec:e2e_complexity}

We now insert the bounds of
Theorem~\ref{thm:combined} into the general end-to-end QKS complexity
formula of Theorem~\ref{thm:general_complexity}.  Since
$m$, $T$, $p_{\rm succ}^{-1}$, and $K_B$ remain bounded with respect to
$\mu_1$, and since $K_c$ decreases rather than diverges as
$\mu_1\rightarrow0$, the only remaining nonconstant contribution is the
implementation-dependent cost of estimating the projected system.

\begin{corollary}[QKS complexity beyond global conditioning]
\label{cor:e2e}

Under the hypotheses of Theorem~\ref{thm:combined},
\begin{equation}
    \boxed{
    C_{\rm QKS}
    =
    \widetilde O\!\left[
    C_{\rm proj}^{\rm dir}(\epsilon)
    +1
    \right],}
    \label{eq:e2e_reduced_complexity}
\end{equation}
where $C_{\rm proj}^{\rm dir}(\epsilon)$ collects the remaining
implementation- and target-accuracy-dependent cost of estimating the
projected quantities.

The directional analysis implies that
$C_{\rm proj}^{\rm dir}(\epsilon)$ acquires no inverse dependence on
$\mu_1$ through its required estimation precision.  Therefore any remaining
dependence is due to the chosen access model, estimator, system size, and
target accuracy rather than to the retained-small-eigenvalue instability.
\end{corollary}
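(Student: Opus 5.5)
The plan is to substitute the bounds of Theorem~\ref{thm:combined} term by term into the schematic decomposition Eq.~\eqref{eq:general_qks_complexity} of Theorem~\ref{thm:general_complexity}, and then to collect every factor that is not $O(1)$ in $\mu_1$ into $C_{\rm proj}^{\rm dir}(\epsilon)$.

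\textbf{Structural terms.} By Eq.~\eqref{eq:e2e_structural_resources}, $m=O(1)$, $T=O(1)$ and $p_{\rm succ}^{-1}=O(1)$. Hence the classical reduced solve contributes $m^3=O(1)$. The reconstruction term becomes $(m+C_{\rm K}(T))/p_{\rm succ}=O(1+C_{\rm K}(T))$. Since $T$ is bounded, $C_{\rm K}(T)$ is evaluated only on a bounded time horizon. By the standing assumption of Theorem~\ref{thm:general_complexity}, baseline implementation costs carry no inverse-$\mu_1$ dependence, so $C_{\rm K}(T)$ is a constant in $\mu_1$ up to its access-model dependence. I will absorb that residual access dependence into $C_{\rm proj}^{\rm dir}$, or into the ``$+1$'' when it is truly $\mu_1$-independent.

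\textbf{Estimation terms.} The precisions passed to $\mathcal Q_H$ and $\mathcal Q_g$ are $\epsilon/(mK_B)$ and $\epsilon/(\sqrt m K_c)$.
\begin{itemize}
\item With $m,K_B=O(1)$, the first precision is $\Omega(\epsilon)$, uniformly in $\mu_1$.
\item With $K_c=O(\mu_1)$, the second precision is $\Omega(\epsilon/\mu_1)\ge\Omega(\epsilon)$, so it only becomes looser as $\mu_1\to0$.
\end{itemize}
I will assume that $\mathcal Q_H$ and $\mathcal Q_g$ are monotone nonincreasing in the precision argument. Under that assumption both estimation terms are bounded by their values at a $\mu_1$-independent precision $\Theta(\epsilon)$, with $C_H(T),C_g(T)$ evaluated at bounded $T$.

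The remaining step is to check that these heuristic precisions are consistent with the rigorous finite-error requirements. Corollary~\ref{cor:no_inverse_precision} shows that the tolerances on $\|\Delta H_m\|_2$ and $\|\Delta\mathbf g_m\|_2$ needed for Eqs.~\eqref{eq:finite_gap_conditions}--\eqref{eq:finite_overlap_condition} scale as $\sigma_S\Delta_B$ and $\sqrt{\sigma_S}\beta_B$, respectively, which are independent of $\mu_1$. Theorem~\ref{thm:structural} gives $\sigma_S\ge s_0^2=\Omega(1)$. I will then choose the estimation precisions as the minimum of these tolerances and $\Theta(\epsilon)$. Theorem~\ref{thm:finite_stability} combined with Theorem~\ref{thm:galerkin_transfer} then certifies the normalized-state error $\epsilon$, after also taking $\epsilon_K=\Theta(\epsilon)$ and $\mu_1$ small enough that the $\mu_1/(\beta_B\Delta_B)$ terms are below $\epsilon$. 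Defining
\[
C_{\rm proj}^{\rm dir}(\epsilon):=C_H(T)\mathcal Q_H(m,\Theta(\epsilon))+C_g(T)\mathcal Q_g(m,\Theta(\epsilon))+C_{\rm K}(T)
\]
gives Eq.~\eqref{eq:e2e_reduced_complexity}.

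\textbf{Main obstacle.} The difficulty is not algebraic but interpretive, since Theorem~\ref{thm:general_complexity} is only schematic. Two points need care. First, the monotonicity of $\mathcal Q_{H,g}$ in precision and the absence of hidden $\mu_1$ dependence in $C_H,C_g,C_{\rm K}$ must be stated explicitly as inherited hypotheses. Second, the $\mu_1/(\beta_B\Delta_B)$ term in Eq.~\eqref{eq:finite_state_bound} is harmless only in the vanishing-$\mu_1$ regime, so it must be handled by working in that regime rather than by a precision choice.
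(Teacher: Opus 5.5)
Your proposal is correct and follows essentially the same route as the paper. You substitute the $\mu_1$-independent bounds on $m$, $T$, $p_{\rm succ}^{-1}$, $K_B$ and $K_c=O(\mu_1)$ from Theorem~\ref{thm:combined} into Eq.~\eqref{eq:general_qks_complexity}, and you invoke Corollary~\ref{cor:no_inverse_precision} for the $\mu_1$-independence of the estimation precision. Your extra bookkeeping makes explicit what the paper's short proof leaves implicit, and none of it changes the argument. That bookkeeping consists of monotonicity of $\mathcal Q_{H,g}$ in the precision, placing $C_{\rm K}(T)$ into $C_{\rm proj}^{\rm dir}$ rather than the $+1$, and working at small $\mu_1$ so the $\mu_1/(\beta_B\Delta_B)$ term of Eq.~\eqref{eq:finite_state_bound} falls below $\epsilon$. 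One small correction: the overlap condition also inserts a factor $\beta_B$ into the admissible $\|\Delta H_m\|_2$ tolerance, giving $\sigma_S\Delta_B\beta_B$ rather than $\sigma_S\Delta_B$, which is harmless since $\beta_B=\Omega(1)$.
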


\paragraph{Proof sketch.}
Theorem~\ref{thm:combined} makes every structural and directional factor
in Theorem~\ref{thm:general_complexity} bounded with respect to $\mu_1$.
Corollary~\ref{cor:no_inverse_precision} further guarantees that the
precision required to estimate $(H_m,\mathbf g_m)$ does not scale with
the inverse retained eigenvalue.  Substitution into the general complexity
expression therefore leaves only the access- and estimator-dependent
projected-system cost summarized by
$C_{\rm proj}^{\rm dir}(\epsilon)$.
The complete proof is given in
Appendix~\ref{app:e2e_corollary_proof}.

This corollary establishes the access-independent part of the main
complexity result.  To understand what complexity remains in an actual
implementation, we next instantiate
$C_{\rm proj}^{\rm dir}(\epsilon)$ under representative access models.

\subsection{Access- and estimator-dependent projected-system costs}
\label{subsec:access_models}

The preceding results deliberately separate the beyond-conditioning
mechanism from the physical procedure used to estimate the projected
system.  We now make this remaining implementation dependence explicit.

Corollary~\ref{cor:no_inverse_precision} allows the actually estimated
projected quantities to satisfy
\begin{equation}
    \|\Delta H_m\|_2,
    \qquad
    \|\Delta\mathbf g_m\|_2
    =
    O(\epsilon),
    \label{eq:access_precision_budget}
\end{equation}
where the hidden constants are independent of $\mu_1$.

The QKS time-translation structure further reduces the number of distinct
projected-operator quantities that must be estimated.  Specifically,
\begin{equation}
    [H_m]_{jk}
    =
    \langle b|
    A e^{-i(k-j)\tau A}
    |b\rangle.
    \label{eq:lag_structure_SH}
\end{equation}
Thus $H_m$ depends only on the time lag $k-j$ and contains only $O(m)$
distinct correlations, while $\mathbf g_m$ has $m$ components.

To guarantee the norm-level precision in
Eq.~\eqref{eq:access_precision_budget}, it is sufficient to estimate
\begin{equation}
    \xi_H
    =
    \Theta(\epsilon/m),
    \qquad
    \xi_g
    =
    \Theta(\epsilon/\sqrt m).
    \label{eq:entrywise_precision_budget}
\end{equation}
The first relation ensures an $O(\epsilon)$ matrix-norm error, whereas the
second follows from the Euclidean norm of the $m$-component projected
right-hand side.

Because $m=O(1)$ with respect to $\mu_1$, none of these required entrywise
precisions becomes more stringent as the retained eigenvalue vanishes.
What changes from one implementation to another is only the cost of
achieving these fixed accuracies.

Representative projected-system costs are summarized below; detailed
derivations are given in
Appendix~\ref{app:access_model_derivations}.

\begin{center}
\small
\begin{tabular}{p{0.22\linewidth}p{0.22\linewidth}p{0.32\linewidth}p{0.15\linewidth}}
\hline
Access model & Projected-system estimator &
Dominant projected-system query cost &
Fixed-$m$ precision \\
\hline

Block encoding
&
Hadamard-test sampling
&
$\widetilde O\!\left[
(\alpha T+1)(1+\alpha^2)m^3/\epsilon^2
\right]$
&
$\epsilon^{-2}$
\\[1mm]

Block encoding
&
Coherent amplitude/expectation estimation
&
$\widetilde O\!\left[
(\alpha T+1)(1+\alpha)m^2/\epsilon
\right]$
&
$\epsilon^{-1}$
\\[1mm]

$s$-sparse oracle
&
Xu short-time finite difference +
coherent multi-expectation estimation
&
$\widetilde O\!\left[
sT\,\operatorname{polylog}N\,
m^{5/2}/\epsilon^2
\right]$
&
$\epsilon^{-2}$
\\[1mm]

Pauli/LCU,
$A=\sum_{\ell=1}^{L}a_\ell P_\ell$
&
Direct projected-matrix estimation +
coherent multi-expectation estimation
&
$\widetilde O\!\left[
(\Lambda_A T+1)
\Lambda_A m^{3/2}\sqrt L/\epsilon
\right]$
&
$\epsilon^{-1}$
\\

\hline
\end{tabular}
\end{center}

Here
\begin{equation}
    \Lambda_A
    :=
    \sum_{\ell=1}^{L}|a_\ell|.
\end{equation}
The table suppresses lower-order projected-$\mathbf g_m$ contributions
as well as polylogarithmic simulation-precision factors.

The table already shows the main separation: the access model can change
the dependence on $\epsilon$, system size, and representation parameters,
but it does not alter the absence of inverse-$\mu_1$ directional
conditioning.  We next derive each representative scaling.

\paragraph{Block-encoding access.}

We first consider block encoding, for which projected quantities can be
estimated either by direct sampling or by coherent expectation estimation.

Assume a unitary $U_b$ preparing $|b\rangle$ and an
$(\alpha,a,\delta)$ block encoding of $A$.
Qubitization-based evolution to time $T$ has query cost
\begin{equation}
    \widetilde O(\alpha T+1)
\end{equation}
with only polylogarithmic dependence on the simulation precision
~\cite{LowChuang2019Qubitization,Gilyen2019}.

Direct sampling of the $O(m)$ distinct lags gives
\begin{equation}
    \boxed{
    C_{\rm proj}^{\rm BE,samp}
    =
    \widetilde O\!\left[
    (\alpha T+1)
    \left(
    \frac{\alpha^2m^3}{\epsilon^2}
    +
    \frac{m^2}{\epsilon^2}
    \right)
    \right].}
    \label{eq:block_encoding_sampling_cost}
\end{equation}
The factor $\alpha^2$ arises because estimating an $H_m$ element to
absolute accuracy $\xi_H$ requires estimating the corresponding normalized
block-encoded matrix element to accuracy $\xi_H/\alpha$.

If coherent amplitude or expectation estimation is available
~\cite{Brassard2002,Huggins2022}, the sampling dependence
$1/\xi^2$ is replaced by $1/\xi$, yielding
\begin{equation}
    \boxed{
    C_{\rm proj}^{\rm BE,coh}
    =
    \widetilde O\!\left[
    (\alpha T+1)
    \left(
    \frac{\alpha m^2}{\epsilon}
    +
    \frac{m^{3/2}}{\epsilon}
    \right)
    \right].}
    \label{eq:block_encoding_coherent_cost}
\end{equation}

Thus block encoding preserves the beyond-conditioning mechanism; the
remaining distinction is whether the projected expectation values are
estimated with sampling or coherent precision scaling.  We next examine
the short-time sparse-access construction, where an additional
finite-difference precision penalty appears.

\paragraph{Sparse access and the short-time estimator.}

Under normalized $s$-sparse Hamiltonian access, the short-time
construction of Ref.~\cite{Xu2024} estimates the projected operator through
\begin{equation}
    [H_m]_{jk}
    =
    \frac{i}{h}
    \left[
    G_{jk}(h)-G_{jk}(0)
    \right]
    +
    O(h),
\end{equation}
where
\begin{equation}
    G_{jk}(h)
    :=
    \langle\phi_j|
    e^{-ihA}
    |\phi_k\rangle.
    \label{eq:xu_access_model_estimator}
\end{equation}

To obtain an $H_m$-element accuracy $\xi_H$, one chooses
\begin{equation}
    h=\Theta(\xi_H),
    \qquad
    \delta_H=\Theta(\xi_H^2),
    \label{eq:xu_access_precision}
\end{equation}
where $\delta_H$ is the accuracy required for the underlying dynamical
correlation.  The extra power of precision arises from the
$1/h$ finite-difference amplification.

Because only $O(m)$ distinct lags are needed, coherent
multiple-expectation estimation~\cite{Huggins2022} yields
\begin{equation}
    \boxed{
    C_{\rm proj}^{\rm sparse,FD}
    =
    \widetilde O\!\left[
    sT\,\operatorname{polylog}N
    \left(
    \frac{m^{5/2}}{\epsilon^2}
    +
    \frac{m}{\epsilon}
    \right)
    \right].}
    \label{eq:sparse_fd_access_cost}
\end{equation}
The first term is the projected-$H_m$ contribution and contains the
finite-difference precision penalty; the remaining term comes from direct
estimation of $\mathbf g_m$.

Hence sparse short-time access changes the target-accuracy scaling, but
still does not reintroduce inverse dependence on the retained soft
eigenvalue.  We next contrast this with direct Pauli/LCU projection, which
avoids the finite-difference step.

\paragraph{Pauli/LCU access and direct projection.}

Suppose
\begin{equation}
    A
    =
    \sum_{\ell=1}^{L}
    a_\ell P_\ell,
    \qquad
    \Lambda_A
    :=
    \sum_{\ell=1}^{L}|a_\ell|.
    \label{eq:pauli_access_decomposition}
\end{equation}
Then the projected operator can be written directly as
\begin{equation}
    [H_m]_{jk}
    =
    \sum_{\ell=1}^{L}
    a_\ell
    \langle b|
    P_\ell e^{-i(k-j)\tau A}
    |b\rangle.
    \label{eq:pauli_direct_projected_H}
\end{equation}

There are $O(mL)$ distinct term-level correlations.  Estimating each to
\begin{equation}
    \delta_P
    =
    \Theta(\xi_H/\Lambda_A)
\end{equation}
and using coherent multiple-expectation estimation gives
\begin{equation}
    \boxed{
    C_{\rm proj}^{\rm Pauli}
    =
    \widetilde O\!\left[
    (\Lambda_A T+1)
    \left(
    \frac{\Lambda_A m^{3/2}\sqrt L}{\epsilon}
    +
    \frac{m}{\epsilon}
    \right)
    \right].}
    \label{eq:pauli_access_cost}
\end{equation}

Unlike the finite-difference construction, direct Pauli projection contains
no $1/h$ amplification and therefore does not require an
$O(\xi_H^2)$ intermediate precision.  The difference between the sparse
and Pauli implementations is therefore an access-dependent precision
effect rather than a difference in the beyond-conditioning mechanism.

\subsection{Access-model robustness of the beyond-conditioning mechanism}
\label{subsec:access_robustness}

The preceding examples demonstrate that different matrix-access models can
produce different dependence on target accuracy and representation
parameters.  The final question is which part of the complexity statement
is universal across these implementations.

\begin{corollary}[Access-model robustness of the beyond-conditioning result]
\label{cor:access_model_robustness}

Under the regime of Theorem~\ref{thm:combined}, suppose that the parameters
of the chosen access model---for example $\alpha$ for block encoding,
$s$ for sparse access, or $\Lambda_A$ and $L$ for Pauli/LCU access---do
not themselves scale inversely with the vanishing solution-relevant
eigenvalue.

Then none of
Eqs.~\eqref{eq:block_encoding_sampling_cost},
\eqref{eq:block_encoding_coherent_cost},
\eqref{eq:sparse_fd_access_cost}, or
\eqref{eq:pauli_access_cost}
acquires an inverse dependence on that eigenvalue.

Therefore the access model can modify the remaining dependence on
$\epsilon$, system size, and matrix representation, but it does not alter
the directional beyond-conditioning mechanism established in
Secs.~\ref{sec:structural}--\ref{sec:directional}.
\end{corollary}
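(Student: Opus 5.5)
The plan is to inspect the four cost expressions term by term and show that each factor is bounded independently of $\mu_1$. Every one of Eqs.~\eqref{eq:block_encoding_sampling_cost}, \eqref{eq:block_encoding_coherent_cost}, \eqref{eq:sparse_fd_access_cost}, and \eqref{eq:pauli_access_cost} is a finite product and sum of three kinds of factors. The first kind are the structural quantities $m$ and $T=(m-1)\tau$. The second kind is the target accuracy $\epsilon$, together with polylogarithmic factors in $N$ and in the simulation precision. The third kind are the access-model parameters $\alpha$, $s$, $\Lambda_A$, and $L$. A product or sum of finitely many quantities that are each $O(1)$ in $\mu_1$, or independent of $\mu_1$, is itself $O(1)$ in $\mu_1$. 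The conclusion therefore follows once every factor has been classified.

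\textbf{Structural factors.} Theorem~\ref{thm:combined}, through Theorem~\ref{thm:structural}, gives $m=O(1)$ and $T=O(1)$ with respect to $\mu_1$. This holds because $\tau=O(1)$ and because all constants in Eq.~\eqref{eq:structural_m_requirement} are assumed independent of the vanishing eigenvalue. Hence every polynomial in $m$ and $T$ appearing in the table is $O(1)$.

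\textbf{Accuracy factors.} I then argue that the powers of $\epsilon^{-1}$ come only from the entrywise budgets in Eq.~\eqref{eq:entrywise_precision_budget}, namely $\xi_H=\Theta(\epsilon/m)$ and $\xi_g=\Theta(\epsilon/\sqrt m)$, and in the sparse case from the derived choices $h=\Theta(\xi_H)$ and $\delta_H=\Theta(\xi_H^2)$. By Corollary~\ref{cor:no_inverse_precision}, the hidden constants in the norm-level budget Eq.~\eqref{eq:access_precision_budget} depend only on $\sigma_S$, $\Delta_B$, $\beta_B$, and the constants of Theorem~\ref{thm:finite_stability}. None of these involves $\mu_1$, so each $\xi$ is $\Theta(\epsilon)$ up to $\mu_1$-independent factors. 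The one point to state carefully is that the normalization $\epsilon/(mK_B)$ and $\epsilon/(\sqrt m K_c)$ in Theorem~\ref{thm:general_complexity} does not reintroduce $\mu_1$. Since $K_B=O(1)$, the first is harmless. For the second, $K_c=O(\mu_1)$ only loosens the tolerance, so I cap it at the $\mu_1$-independent value $\Theta(\epsilon/\sqrt m)$ used in the table, and the cost stays bounded.

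\textbf{Access-model factors.} The remaining factors $\alpha$, $s$, $\Lambda_A$, and $L$ are assumed by hypothesis not to scale inversely with the vanishing eigenvalue. The Hamiltonian-simulation costs, $\widetilde O(\alpha T+1)$, $sT\,\mathrm{polylog}N$, and $\Lambda_A T+1$, depend only on these parameters, on $T$, and polylogarithmically on the simulation precision. The simulation precision is itself tied to $\xi$ and is therefore $\mu_1$-free.

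\textbf{Main obstacle.} The step needing care is confirming that no hidden $\mu_1$ dependence enters through the simulation-precision polylogarithms or through the block-encoding normalization. For example, one might worry that $\alpha\ge\|A\|_2$ could be chosen relative to $\lambda_{\min}$. I would handle this by noting that $\|A\|_2\le1$ fixes the spectral scale from above. Any block encoding, sparse oracle, or Pauli decomposition of $A$ needs only upper-scale information, and none of these constructions references $\lambda_{\min}$. Combining the three classifications shows that each displayed cost is $\widetilde O$ of a $\mu_1$-independent function of $(\epsilon,N,\alpha,s,\Lambda_A,L)$. This gives both conclusions of the corollary: the access model alters only those dependencies, and the directional mechanism of Theorem~\ref{thm:combined} is unaffected.
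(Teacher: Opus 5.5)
Your proposal is correct and follows essentially the same route as the paper: take $m,T=O(1)$ from Theorem~\ref{thm:combined} and the $\mu_1$-independent precision budget from Corollary~\ref{cor:no_inverse_precision}, then inspect the four displayed access costs to see that only $\epsilon$, $N$, and the access parameters remain. Your remarks on capping the $K_c$-loosened tolerance and on the simulation-precision polylogarithms are sound extra detail, while your closing argument that no representation of $A$ references $\lambda_{\min}$ is unnecessary and not guaranteed for arbitrary block encodings or LCU decompositions; nothing rests on it, because the corollary's hypothesis already assumes $\alpha$, $s$, $\Lambda_A$, and $L$ do not scale with the vanishing eigenvalue.
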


\paragraph{Proof sketch.}
Theorem~\ref{thm:combined} gives
$m,T=O(1)$ with respect to $\mu_1$, while
Corollary~\ref{cor:no_inverse_precision} makes the required projected-system
precision independent of $\mu_1$.  Inspecting the representative costs
above therefore leaves only the corresponding access parameters and target
accuracy as possible sources of additional scaling.  If those access
parameters are themselves independent of $\mu_1$, none of the projected
estimation costs inherits an inverse-$\mu_1$ dependence.
The complete proof is given in
Appendix~\ref{app:access_robustness_proof}.

\begin{remark}[What is access dependent and what is not]
The distinction is central to the end-to-end statement.  The disappearance
of the inverse-small-eigenvalue divergence follows from two intrinsic
properties of QKS:
\begin{equation}
    \text{solution-weighted spectral compression}
    \quad\text{and}\quad
    \text{normalized-direction stability}.
\end{equation}
By contrast, the powers of $1/\epsilon$ and the prefactors involving
$\alpha$, $s$, $L$, or $\Lambda_A$ depend on how the projected quantities
are physically represented and estimated.

Thus the result removes the retained-small-eigenvalue contribution to the
asymptotic hardness without claiming that all access models have identical
precision or representation cost.
\end{remark}

Combining the results of this section, the end-to-end picture is therefore
\begin{equation}
    \boxed{
    \begin{aligned}
    &\text{solution-weighted compression}
    &&\Longrightarrow
    &&m,\;T,\;p_{\rm succ}^{-1}=O(1),
    \\
    &\text{directional stability}
    &&\Longrightarrow
    &&K_B=O(1),\;K_c=O(\mu_1),
    \\
    &\text{finite-error robustness}
    &&\Longrightarrow
    &&\text{no inverse-}\mu_1\text{ precision requirement},
    \\
    &\text{access-model instantiation}
    &&\Longrightarrow
    &&\text{only implementation-dependent costs remain}.
    \end{aligned}}
\end{equation}

Hence, in the retained solution-relevant small-eigenvalue regime considered
here, the QKS end-to-end complexity does not inherit the corresponding
conditioning divergence; the remaining complexity is determined by target
accuracy, matrix representation, and the physical access model.
\section{Relation to Existing Beyond-Conditioning Mechanisms}
\label{sec:separation}

The preceding sections establish the central beyond-conditioning mechanism
of this work: a solution-relevant small eigenvalue may remain explicitly
retained by the QKS reduced problem, so that the reduced inverse norm itself
diverges, while the normalized solution direction nevertheless remains
stable.

This section clarifies how this mechanism relates to existing approaches
that avoid global-condition-number dependence.  There are two conceptually
different favorable regimes.  In the first, globally small eigenvalues are
negligible to the target solution and therefore become effectively excluded
from the reduced problem.  This reproduces the familiar intuition behind
effective-truncation methods.  In the second, which is the central regime of
this paper, the small eigenvalue remains solution-relevant and cannot be
discarded; instead, its divergent inverse amplification becomes
asymptotically parallel to the normalized solution direction.

The comparison proceeds in four steps.  We first identify the regime in
which QKS overlaps with effective-truncation mechanisms.  We then compare
the directional sensitivity governing QKS with a full inverse-amplification
criterion appearing in existing beyond-conditioning analysis.  Next, we
construct an explicit family that separates the two criteria.  Finally, we
embed the same family into arbitrarily large Hilbert spaces to show that the
separation is a property of solution-relevant geometry rather than an
artifact of a two-dimensional example.

\subsection{Overlap with effective truncation}
\label{subsec:truncatable_overlap}

We begin with the regime in which arbitrarily small global eigenvalues are
\emph{not} important to the normalized solution.  The purpose of this
subsection is to show that, when the solution-relevant active spectrum
remains bounded away from zero, the QKS subspace has only limited overlap
with the globally ill-conditioned directions.  In this case, the reduced
problem itself becomes well conditioned.

Let
\begin{equation}
    P_{\mathcal A}
    :=
    \sum_{a\in\mathcal A}
    |u_a\rangle\langle u_a|,
    \qquad
    P_{\mathcal I}
    :=
    I-P_{\mathcal A},
\end{equation}
denote the active and inactive spectral projectors.  Recall also the
orthonormal QKS basis
\begin{equation}
    Q_m
    =
    \Phi_m S_m^{-1/2}.
\end{equation}
We quantify the leakage of the QKS subspace into the inactive spectral
sector by
\begin{equation}
    \eta_{\rm leak}
    :=
    \|P_{\mathcal I}Q_m\|_2.
    \label{eq:inactive_leakage_definition}
\end{equation}

\begin{theorem}[Truncatable ill-conditioning yields a well-conditioned reduced problem]
\label{thm:truncatable_branch}

Assume the structural conditions (A1)--(A2), and suppose that every active
eigenvalue satisfies
\begin{equation}
    \lambda_a
    \ge
    \Delta_{\mathcal A}>0,
    \qquad
    a\in\mathcal A,
    \label{eq:active_lower_gap}
\end{equation}
where $\Delta_{\mathcal A}$ is independent of arbitrarily small
eigenvalues outside the active set.

Then
\begin{equation}
    \boxed{
    \eta_{\rm leak}
    \le
    \frac{
    \sqrt m\,\epsilon_{\rm sub}
    }{
    s_0\Delta_{\mathcal A}
    \sqrt{1-\epsilon_{\rm sub}^2}
    }.}
    \label{eq:inactive_subspace_leakage}
\end{equation}

Consequently, if
\begin{equation}
    \eta_{\rm leak}
    \le
    \eta_0<1
\end{equation}
for a fixed constant $\eta_0$, then
\begin{equation}
    \boxed{
    \lambda_{\min}(B_m)
    \ge
    \Delta_{\mathcal A}
    (1-\eta_0^2)
    =
    \Omega(1).}
    \label{eq:truncatable_reduced_gap}
\end{equation}

Thus arbitrarily small eigenvalues outside the active set may drive the
global condition number to infinity without making the retained QKS
reduced problem ill conditioned.
\end{theorem}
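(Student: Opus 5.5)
The proof has two parts. First, bound the inactive component of an arbitrary unit vector of $\mathcal K_m$ through its time-evolution coefficients, using the Gram lower bound of Theorem~\ref{thm:structural}. Second, convert the leakage bound into a Rayleigh-quotient lower bound for $B_m$, using the fact that $A$ commutes with the spectral projectors.

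\emph{Step 1 (coefficient representation).} Fix a unit vector $\mathbf y\in\mathbb C^m$ and set $v:=Q_m\mathbf y=\Phi_m\boldsymbol\gamma$ with $\boldsymbol\gamma=S_m^{-1/2}\mathbf y$. By Eq.~\eqref{eq:structural_gram_bound}, $\lambda_{\min}(S_m)\ge s_0^2$, so $\|\boldsymbol\gamma\|_2\le 1/s_0$. As in Proposition~\ref{prop:weighted_exp}, expand in the eigenbasis:
\begin{equation*}
P_{\mathcal I}v=\sum_{\ell\in\mathcal I} b_\ell\, p(\lambda_\ell)\,|u_\ell\rangle,
\qquad
p(\lambda):=\sum_{j=0}^{m-1}\gamma_j e^{-ij\tau\lambda}.
\end{equation*}
Since $|p(\lambda)|\le\|\boldsymbol\gamma\|_1\le\sqrt m\,\|\boldsymbol\gamma\|_2\le \sqrt m/s_0$, this gives
\begin{equation*}
\|P_{\mathcal I}v\|_2\le \frac{\sqrt m}{s_0}\,\|P_{\mathcal I}b\|_2.
\end{equation*}

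\emph{Step 2 (inactive input weight).} This is the only step that genuinely uses the active gap. Write $|b_\ell|^2=w_\ell\lambda_\ell^2\|x\|_2^2\le w_\ell\|x\|_2^2$, using $\lambda_\ell\le\|A\|_2\le1$. Summing over $\mathcal I$ and applying Eq.~\eqref{eq:A1_weight} gives $\|P_{\mathcal I}b\|_2^2\le\epsilon_{\rm sub}^2\|x\|_2^2$.

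To bound $\|x\|_2$ independently of the small inactive eigenvalues, I use the active part. By Eq.~\eqref{eq:A1_weight}, $(1-\epsilon_{\rm sub}^2)\|x\|_2^2\le\sum_{a\in\mathcal A}|b_a|^2/\lambda_a^2\le \Delta_{\mathcal A}^{-2}$, by Eq.~\eqref{eq:active_lower_gap} and $\|b\|_2=1$. Hence $\|P_{\mathcal I}b\|_2\le \epsilon_{\rm sub}/(\Delta_{\mathcal A}\sqrt{1-\epsilon_{\rm sub}^2})$. Taking the supremum over unit $\mathbf y$ yields Eq.~\eqref{eq:inactive_subspace_leakage}.

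The point to check carefully is that the tiny inactive $\lambda_\ell$ enter only through $|b_\ell|^2=w_\ell\lambda_\ell^2\|x\|^2$, which is harmless. They never enter through $1/\lambda_\ell$, because $\|x\|$ is controlled by the active sector alone.

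\emph{Step 3 (reduced gap).} For unit $\mathbf y$, we have $\mathbf y^\dagger B_m\mathbf y=\mathbf y^\dagger Q_m^\dagger A Q_m\mathbf y=v^\dagger A v$ with $\|v\|_2=1$. Since $P_{\mathcal A}$ and $P_{\mathcal I}$ are spectral projectors of $A$, the cross terms vanish and
\begin{equation*}
v^\dagger Av=v^\dagger P_{\mathcal A}AP_{\mathcal A}v+v^\dagger P_{\mathcal I}AP_{\mathcal I}v\ge\Delta_{\mathcal A}\|P_{\mathcal A}v\|_2^2.
\end{equation*}
The inequality drops the nonnegative inactive term because $A\succ0$. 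Then $\|P_{\mathcal A}v\|_2^2=1-\|P_{\mathcal I}v\|_2^2\ge1-\eta_{\rm leak}^2\ge1-\eta_0^2$. Minimizing over unit $\mathbf y$ gives Eq.~\eqref{eq:truncatable_reduced_gap}.

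The conclusion that the reduced problem is well conditioned follows from combining this lower bound with $\|B_m\|_2\le\|A\|_2\le1$.
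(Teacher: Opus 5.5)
Your proof is correct and follows the paper's argument essentially step for step. It uses the same conversion of the solution tail into inactive input weight via $\lambda_\ell\le1$, the same active-sector bound $\|x\|_2\le 1/(\Delta_{\mathcal A}\sqrt{1-\epsilon_{\rm sub}^2})$, the same $\sqrt m/s_0$ transfer through the Gram lower bound, and the same Rayleigh-quotient/min--max step for $\lambda_{\min}(B_m)$; the only cosmetic difference is that you bound $\|P_{\mathcal I}Q_m\mathbf y\|_2$ via $\|\boldsymbol\gamma\|_1\le\sqrt m\,\|\boldsymbol\gamma\|_2\le\sqrt m/s_0$ rather than via $\|P_{\mathcal I}\Phi_m\|_F\,\|S_m^{-1/2}\|_2$, which gives the identical factor.
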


\paragraph{Proof sketch.}
Assumption (A1) bounds the inactive contribution to the normalized inverse
response.  Because the active eigenvalues are bounded below by
$\Delta_{\mathcal A}$, this implies a corresponding bound on the inactive
input-state amplitude.  Time evolution preserves these spectral
amplitudes, while the Gram lower bound
$\lambda_{\min}(S_m)\ge s_0^2$ controls the orthonormalization
$Q_m=\Phi_mS_m^{-1/2}$.  Together these observations give
Eq.~\eqref{eq:inactive_subspace_leakage}.

If the QKS subspace has inactive leakage at most $\eta_0$, every normalized
vector in that subspace retains at least $1-\eta_0^2$ of its weight on the
active spectral sector.  The min--max principle then gives
Eq.~\eqref{eq:truncatable_reduced_gap}.  The complete proof is given in
Appendix~\ref{app:truncatable_proof}.

\begin{corollary}[Overlap with effective-truncation beyond-conditioning]
\label{cor:overlap_truncation}

Under the hypotheses of
Theorem~\ref{thm:truncatable_branch}, the global condition number may
diverge because of eigenvalues outside $\mathcal A$, while
\begin{equation}
    m,\qquad
    T,\qquad
    p_{\rm succ}^{-1},\qquad
    K_B
    =
    O(1)
\end{equation}
with respect to those eigenvalues.

Hence QKS contains an overlap regime in which globally ill-conditioned
directions are negligible to the normalized solution and the retained
reduced problem remains uniformly well conditioned.
The complete proof is given in
Appendix~\ref{app:overlap_corollary_proof}.
\end{corollary}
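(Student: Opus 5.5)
The corollary combines Theorem~\ref{thm:structural}, which controls the structural factors, with Theorem~\ref{thm:truncatable_branch}, which controls the conditioning of $B_m$. The directional factor $K_B$ is then bounded crudely by $\|B_m^{-1}\|_2$.

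First, the structural factors. Hypotheses (A1)--(A2) are assumed, and their constants $r,\delta_{\mathcal A},\rho,\beta_{\mathcal A},s_0,\Delta_K,\tau$ refer only to the active set, the phase geometry, and the Gram certificate. We therefore take them fixed with respect to the small eigenvalues outside $\mathcal A$. Theorem~\ref{thm:structural} then gives $m=O(1)$ directly. With $\tau=O(1)$ this yields $T=(m-1)\tau=O(1)$, and it also gives $p_{\rm succ}^{-1}\le m/s_0^2=O(1)$.

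Second, the directional factor. Since $P_{\mathbf y}^{\perp}$ is an orthogonal projector,
\begin{equation}
    K_B=\|P_{\mathbf y}^{\perp}B_m^{-1}\|_2\le\|B_m^{-1}\|_2=\frac{1}{\lambda_{\min}(B_m)}.
\end{equation}
Equation~\eqref{eq:truncatable_reduced_gap} then gives $K_B\le[\Delta_{\mathcal A}(1-\eta_0^2)]^{-1}=O(1)$, provided the leakage hypothesis $\eta_{\rm leak}\le\eta_0<1$ holds. That hypothesis holds whenever the right-hand side of Eq.~\eqref{eq:inactive_subspace_leakage} is at most $\eta_0$. The right-hand side involves only $m$, $\epsilon_{\rm sub}$, $s_0$ and $\Delta_{\mathcal A}$, all bounded independently of the inactive eigenvalues, so the condition is a fixed requirement of the form $\sqrt m\,\epsilon_{\rm sub}\lesssim s_0\Delta_{\mathcal A}$. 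If $K_c$ is also wanted, note that $\|\mathbf y\|_2\ge\|\mathbf c_m\|_2/\|B_m\|_2\ge\|\mathbf c_m\|_2$, because $\|B_m\|_2\le\|A\|_2\le1$. Moreover $\|\mathbf c_m\|_2=\|Q_mQ_m^\dagger b\|_2\ge|\langle\phi_0,b\rangle|=1$, since $|b\rangle=|\phi_0\rangle\in\mathcal K_m$. Hence $K_c\le K_B$.

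The main obstacle is the claim that the constants stay uniform. Specifically, we need $s_0$ in (A2) and the leakage condition to remain valid as inactive eigenvalues tend to zero, which in the worst case could alter the Gram certificate set $\mathcal R$. I would handle this as the corollary's phrasing suggests: the corollary is stated under the hypotheses of Theorem~\ref{thm:truncatable_branch}, with those parameters held fixed. Then no uniformity argument is needed beyond observing that none of the bounds contains an inactive eigenvalue. In particular, Eq.~\eqref{eq:inactive_subspace_leakage} depends on the inactive sector only through $\epsilon_{\rm sub}$, because inactive input amplitudes are controlled by the solution weight times the active gap $\Delta_{\mathcal A}$.
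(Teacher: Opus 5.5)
Your proposal is correct and follows the paper's proof essentially verbatim. Theorem~\ref{thm:structural} bounds $m$, $T$ and $p_{\rm succ}^{-1}$, and Theorem~\ref{thm:truncatable_branch} gives $\lambda_{\min}(B_m)=\Omega(1)$, whence $K_B\le\|B_m^{-1}\|_2=O(1)$. Your explicit flagging of the leakage hypothesis $\eta_{\rm leak}\le\eta_0<1$, which the paper's proof uses only implicitly, and your extra bound $K_c\le K_B$ via $\|\mathbf c_m\|_2=1$ are both correct additions.
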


This regime is conceptually related to effective-truncation approaches to
beyond-conditioning linear solving~\cite{Dalzell2026}: in both cases,
globally small eigenvalues cease to determine the relevant complexity
because their contribution to the target solution is negligible.  The
favorable classes are not identical, however, because QKS additionally
requires compact time-evolution representability and stable Krylov-basis
geometry.

The truncatable regime therefore explains one familiar route by which QKS
can avoid global conditioning.  It is not the central mechanism developed
in this paper.  We next return to the harder regime in which the small
eigenvalue remains retained and solution-relevant, so that the reduced
inverse norm itself still diverges.

\subsection{Directional sensitivity versus full inverse amplification}
\label{subsec:directional_vs_inverse}

Once the small eigenvalue is retained, the reduced problem itself remains
ill conditioned and the effective-truncation explanation no longer applies.
The relevant question is then which part of the inverse amplification can
actually change the normalized target state.

For the normalized target state
\begin{equation}
    |x\rangle
    =
    \frac{A^{-1}|b\rangle}
    {\|A^{-1}|b\rangle\|_2},
\end{equation}
the filtering-based analysis of
Ref.~\cite{Dalzell2026} contains the inverse-amplification quantity
\begin{equation}
    \chi_{\rm filt}(A,b)
    :=
    \|A^{-1\dagger}|x\rangle\|_2.
    \label{eq:su_filter_measure}
\end{equation}
This quantity retains the full action of the inverse on the normalized
solution.

By contrast, the directional QKS sensitivity derived in
Sec.~\ref{sec:directional} is
\begin{equation}
    K_B
    =
    \|P_{\mathbf y}^{\perp}B_m^{-1}\|_2,
\end{equation}
where
\begin{equation}
    P_{\mathbf y}^{\perp}
    =
    I-
    \frac{\mathbf y\mathbf y^\dagger}
    {\|\mathbf y\|_2^2}.
\end{equation}
The projector explicitly removes inverse amplification parallel to the
solution direction.

The distinction is therefore geometric:
\begin{equation}
    \boxed{
    \text{full inverse amplification}
    \;\neq\;
    \text{normalized directional sensitivity}.}
    \label{eq:geometric_distinction}
\end{equation}
Amplification parallel to the solution may become arbitrarily large without
changing the normalized state, whereas transverse amplification directly
changes its direction.

This difference suggests that the two criteria need not have the same
asymptotic behavior.  The next subsection shows that this distinction is
genuine by constructing a family on which one quantity diverges while the
other remains bounded.

\subsection{Explicit separation of the two criteria}
\label{subsec:explicit_separation}

We now construct the simplest explicit family that separates full inverse
amplification from normalized directional sensitivity.

\begin{theorem}[Separation of inverse-amplification and directional criteria]
\label{thm:separation_su}

Consider
\begin{equation}
    A_\varepsilon
    =
    \begin{pmatrix}
        \varepsilon & 0\\
        0 & 1
    \end{pmatrix},
    \qquad
    |b\rangle
    =
    \frac1{\sqrt2}
    \begin{pmatrix}
        1\\
        1
    \end{pmatrix},
    \qquad
    0<\varepsilon<1.
    \label{eq:separation_family}
\end{equation}

Then
\begin{equation}
    \kappa(A_\varepsilon)
    =
    \varepsilon^{-1}
    \rightarrow\infty,
\end{equation}
while the normalized solution satisfies
\begin{equation}
    |x_\varepsilon\rangle
    =
    \frac{(1,\varepsilon)^T}
    {\sqrt{1+\varepsilon^2}}
    \longrightarrow
    \begin{pmatrix}
        1\\
        0
    \end{pmatrix}.
\end{equation}

The full inverse-amplification quantity obeys
\begin{equation}
    \boxed{
    \chi_{\rm filt}(A_\varepsilon,b)
    =
    \frac{
    \sqrt{\varepsilon^{-2}+\varepsilon^2}
    }{
    \sqrt{1+\varepsilon^2}
    }
    =
    \Theta(\varepsilon^{-1}).}
    \label{eq:su_measure_diverges}
\end{equation}

By contrast, choose any fixed
\begin{equation}
    \tau\notin2\pi\mathbb Z.
\end{equation}
For sufficiently small $\varepsilon$, the two-state QKS basis remains
uniformly nondegenerate and spans the full two-dimensional relevant
subspace.  Consequently,
\begin{equation}
    \boxed{
    m=2,\qquad
    T=O(1),\qquad
    p_{\rm succ}^{-1}=O(1),}
    \label{eq:separation_structural}
\end{equation}
while the corresponding orthonormal reduced system satisfies
\begin{equation}
    \mu_1=\varepsilon,
    \qquad
    \mu_2=1,
    \qquad
    |c_1|,|c_2|=\Theta(1).
\end{equation}
Therefore
\begin{equation}
    \boxed{
    K_B=O(1),
    \qquad
    K_c=O(\varepsilon).}
    \label{eq:qks_directional_separation}
\end{equation}

Hence the full inverse-amplification measure diverges as
$\Theta(\varepsilon^{-1})$, while the structural and directional QKS
quantities identified in the preceding analysis remain bounded with
respect to that divergence.
\end{theorem}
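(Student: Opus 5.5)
The proof splits into three independent parts: the inverse-amplification computation, the structural QKS quantities, and the directional quantities. For the first part everything is explicit. Since $b_1=b_2=1/\sqrt2$, we have $x=A_\varepsilon^{-1}|b\rangle=(\varepsilon^{-1},1)^T/\sqrt2$, and normalizing gives $|x_\varepsilon\rangle=(1,\varepsilon)^T/\sqrt{1+\varepsilon^2}$. Because $A_\varepsilon$ is real diagonal, $A_\varepsilon^{-1\dagger}=A_\varepsilon^{-1}$, so
\[
A_\varepsilon^{-1}|x_\varepsilon\rangle=\frac{(\varepsilon^{-1},\varepsilon)^T}{\sqrt{1+\varepsilon^2}} .
\]
Its norm is the expression in Eq.~\eqref{eq:su_measure_diverges}, which is $\Theta(\varepsilon^{-1})$. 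The identity $\kappa(A_\varepsilon)=\varepsilon^{-1}$ is immediate.

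For the structural part, I take $m=2$, so that $T=\tau=O(1)$. The basis is $|\phi_0\rangle=|b\rangle$ and $|\phi_1\rangle=(e^{-i\tau\varepsilon},e^{-i\tau})^T/\sqrt2$. The Gram matrix is
\[
S_2=\begin{pmatrix}1&\bar s_\varepsilon\\ s_\varepsilon&1\end{pmatrix},
\qquad
s_\varepsilon=\tfrac12\bigl(e^{-i\tau\varepsilon}+e^{-i\tau}\bigr),
\]
so $\lambda_{\min}(S_2)=1-|s_\varepsilon|$.

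The step that needs care is showing this bound is \emph{uniform} as $\varepsilon\to0$. Here I use continuity: $|s_\varepsilon|\to|1+e^{-i\tau}|/2=|\cos(\tau/2)|$, and this is strictly less than $1$ because $\tau\notin2\pi\mathbb Z$. Hence for all sufficiently small $\varepsilon$,
\[
\lambda_{\min}(S_2)\ge\tfrac12\bigl(1-|\cos(\tau/2)|\bigr)=:s_0^2=\Omega(1).
\]
Equivalently, this is the (A2) condition with $\mathcal R=\{1,2\}$, since $W_{\mathcal R}^{1/2}V_{\mathcal R}=V_{\mathcal R}/\sqrt2$ and $V_{\mathcal R}^\dagger V_{\mathcal R}/2=S_2$. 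The argument of Theorem~\ref{thm:structural} (using $\|\boldsymbol\gamma\|_1^2\le m\|\boldsymbol\gamma\|_2^2$) then gives $p_{\rm succ}^{-1}\le 2/s_0^2=O(1)$. Since $S_2\succ0$, the basis spans all of $\mathbb C^2$. Therefore $\mathcal K_2$ contains $x$ exactly, and the Galerkin output coincides with $x$.

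For the directional part, $Q_2=\Phi_2S_2^{-1/2}$ is unitary on $\mathbb C^2$. Consequently $B_2=Q_2^\dagger A_\varepsilon Q_2$ is unitarily similar to $A_\varepsilon$, which gives $\mu_1=\varepsilon$ and $\mu_2=1$. Its eigenvectors are $\mathbf v_i=Q_2^\dagger|e_i\rangle$. Because $\mathbf c_2=Q_2^\dagger|b\rangle$ and $Q_2$ is unitary,
\[
c_i=\langle\mathbf v_i,\mathbf c_2\rangle=\langle e_i|b\rangle=1/\sqrt2 .
\]
This verifies (B1)--(B2) with $\Delta_B=1$ and $\beta_B=1/\sqrt2$. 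Theorem~\ref{thm:solution_aligned} then yields $K_B\le 1\cdot(1+1)=2$ and $K_c\le 2\sqrt2\,\varepsilon$, which is Eq.~\eqref{eq:qks_directional_separation}.

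Alternatively, $K_B$ can be computed directly. Here $\mathbf y=Q_2^\dagger x$, and $P_{\mathbf y}^\perp B_2^{-1}$ is unitarily equivalent to $P_x^\perp A_\varepsilon^{-1}$. Its norm is bounded because the $\varepsilon^{-1}$ component lies along $e_1$, and $e_1$ is within angle $O(\varepsilon)$ of $x$.

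The only genuine obstacle is the uniform Gram bound; everything else is explicit $2\times2$ algebra.
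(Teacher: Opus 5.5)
Your proposal is correct and follows the paper's proof essentially step for step: the explicit computation of $\chi_{\rm filt}$, the Gram limit $\lambda_{\min}(S_2)=1-|\cos(\tau(1-\varepsilon)/2)|\to 1-|\cos(\tau/2)|>0$ giving $p_{\rm succ}^{-1}=O(1)$, unitary similarity of $B_2$ to $A_\varepsilon$ with $|c_i|=1/\sqrt2$, and an application of Theorem~\ref{thm:solution_aligned}. Your additions (the explicit (A2) certificate with $\mathcal R=\{1,2\}$, the constants $K_B\le 2$ and $K_c\le 2\sqrt2\,\varepsilon$, and the observation that the Galerkin output is exact) are correct refinements. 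One harmless slip: with $S_2=\Phi_2^\dagger\Phi_2$ the $(0,1)$ entry is $s_\varepsilon$ rather than $\bar s_\varepsilon$, which does not affect $\lambda_{\min}$.
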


\paragraph{Proof sketch.}
The inverse
\begin{equation}
    A_\varepsilon^{-1}
    =
    \operatorname{diag}(\varepsilon^{-1},1)
\end{equation}
amplifies the soft eigendirection by $\varepsilon^{-1}$.  Applying it to
the normalized solution therefore gives
$\chi_{\rm filt}=\Theta(\varepsilon^{-1})$.

For QKS, the two columns
\begin{equation}
    |b\rangle,
    \qquad
    e^{-i\tau A_\varepsilon}|b\rangle
\end{equation}
remain uniformly linearly independent as $\varepsilon\to0$ whenever
$\tau\notin2\pi\mathbb Z$.  The corresponding Gram matrix therefore has a
strictly positive lower spectral bound independent of $\varepsilon$.
Since the resulting QKS subspace is the full relevant two-dimensional
space, the orthonormal reduced problem retains the eigenvalues
$\varepsilon$ and $1$, and the reduced right-hand side retains constant
overlap with the soft eigendirection.  Theorem~\ref{thm:solution_aligned}
then gives Eq.~\eqref{eq:qks_directional_separation}.

The complete proof is given in
Appendix~\ref{app:separation_su_proof}.

The theorem establishes that bounded normalized-direction sensitivity is
not equivalent to bounded full inverse amplification.  However, because
the example is two-dimensional, one could still ask whether the separation
is caused only by the QKS subspace coinciding with the full Hilbert space.
The next subsection rules out this possibility.

\subsection{High-dimensional embedding of the separation}
\label{subsec:high_dimensional_separation}

The purpose of this subsection is to show that the preceding separation
persists even when the QKS subspace is much smaller than the ambient
Hilbert space.

\begin{proposition}[High-dimensional embedding of the separation]
\label{prop:high_dimensional_separation}

For any $N\ge2$, define
\begin{equation}
    A_{\varepsilon,N}
    :=
    A_\varepsilon\oplus D_{N-2},
    \qquad
    |b_N\rangle
    :=
    |b\rangle\oplus0,
    \label{eq:high_dimensional_embedding}
\end{equation}
where $A_\varepsilon$ and $|b\rangle$ are given by
Eq.~\eqref{eq:separation_family}, and $D_{N-2}$ is any positive-definite
matrix whose spectrum lies in
\begin{equation}
    [\Delta_{\rm sp},1],
    \qquad
    \Delta_{\rm sp}>0,
\end{equation}
independently of $\varepsilon$ and $N$.

Then
\begin{equation}
    \kappa(A_{\varepsilon,N})
    =
    \Theta(\varepsilon^{-1}),
\end{equation}
and
\begin{equation}
    \chi_{\rm filt}(A_{\varepsilon,N},b_N)
    =
    \Theta(\varepsilon^{-1}),
\end{equation}
whereas the QKS dynamics remain confined to the two-dimensional invariant
subspace supporting $A_\varepsilon$.

Consequently,
\begin{equation}
    \boxed{
    m=2\ll N,\qquad
    T=O(1),\qquad
    p_{\rm succ}^{-1}=O(1),\qquad
    K_B=O(1),\qquad
    K_c=O(\varepsilon).}
    \label{eq:high_dimensional_qks_scaling}
\end{equation}
\end{proposition}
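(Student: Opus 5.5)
The plan is to reduce everything to the two-dimensional case of Theorem~\ref{thm:separation_su} by exploiting the block-diagonal structure. Write $\mathbb C^N=\mathcal V\oplus\mathcal V^\perp$, where $\mathcal V=\operatorname{span}\{e_1,e_2\}$ carries $A_\varepsilon$ and $\mathcal V^\perp$ carries $D_{N-2}$. Both subspaces are invariant under $A_{\varepsilon,N}$, so they are invariant under every function of it. In particular, $e^{-ij\tau A_{\varepsilon,N}}|b_N\rangle=(e^{-ij\tau A_\varepsilon}|b\rangle)\oplus0$.

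First I would dispatch the global quantities.
\begin{itemize}[leftmargin=*]
\item The spectrum of $A_{\varepsilon,N}$ is $\{\varepsilon,1\}\cup\operatorname{spec}(D_{N-2})\subseteq[\varepsilon,1]$, and it contains both $\varepsilon$ and $1$. Hence $\kappa=\varepsilon^{-1}$ exactly, and $\|A_{\varepsilon,N}\|_2\le1$ is preserved.
\item Since $A_{\varepsilon,N}^{-1}=A_\varepsilon^{-1}\oplus D_{N-2}^{-1}$, the solution is $x=A_\varepsilon^{-1}|b\rangle\oplus0$, so $|x_{\varepsilon,N}\rangle=|x_\varepsilon\rangle\oplus0$.
\item Consequently $\chi_{\rm filt}(A_{\varepsilon,N},b_N)=\|A_\varepsilon^{-1}|x_\varepsilon\rangle\|_2$, which is the value computed in Eq.~\eqref{eq:su_measure_diverges}. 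That value is $\Theta(\varepsilon^{-1})$.
\end{itemize}
The lower bound $\Delta_{\rm sp}$ is not needed for these two claims. I would note that it only guarantees that $D_{N-2}$ contributes no additional smallness.

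Next I would show that every QKS object coincides with its two-dimensional counterpart. Take $m=2$ and the same $\tau\notin2\pi\mathbb Z$. The basis matrix is $\Phi_2^{(N)}=\iota\Phi_2$, where $\iota:\mathbb C^2\to\mathbb C^N$ is the isometric embedding onto $\mathcal V$. From this the projected quantities follow directly:
\begin{itemize}[leftmargin=*]
\item $S_2^{(N)}=\Phi_2^\dagger\iota^\dagger\iota\Phi_2=S_2$;
\item $H_2^{(N)}=\Phi_2^\dagger\iota^\dagger A_{\varepsilon,N}\iota\Phi_2=\Phi_2^\dagger A_\varepsilon\Phi_2=H_2$, using $\iota^\dagger A_{\varepsilon,N}\iota=A_\varepsilon$;
\item $\mathbf g_2^{(N)}=\Phi_2^\dagger\iota^\dagger|b_N\rangle=\mathbf g_2$.
\end{itemize}
Therefore $B_2$, $\mathbf c_2$, $\mathbf y$, $\boldsymbol\gamma$ and $p_{\rm succ}$ are literally the objects of Theorem~\ref{thm:separation_su}, and all are independent of $N$ and of $D_{N-2}$. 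The bounds $T=\tau=O(1)$, $p_{\rm succ}^{-1}=O(1)$ (from the uniform Gram lower bound), $\mu_1=\varepsilon$, $\mu_2=1$ and $|c_1|=\Theta(1)$ then carry over unchanged. Theorem~\ref{thm:solution_aligned} gives $K_B=O(1)$ and $K_c=O(\varepsilon)$.

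Finally, I would add that the Galerkin output $x_2=\iota(\Phi_2\boldsymbol\gamma)$ equals the exact solution. This holds because $\mathcal K_2^{(N)}=\iota\mathbb C^2$ contains $x$, and Galerkin projection is exact on an invariant subspace containing the solution. So $m=2\ll N$ suffices.

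The main step needing care is the claim that $\mathcal K_2$ spans all of $\mathcal V$ uniformly in $\varepsilon$, that is, $\lambda_{\min}(S_2)\ge s>0$ independent of $\varepsilon$. This is inherited from Theorem~\ref{thm:separation_su}, but I would verify it directly. One has $S_2=\begin{pmatrix}1&\overline{z}\\ z&1\end{pmatrix}$ with $z=\tfrac12(e^{-i\tau\varepsilon}+e^{-i\tau})$. Its minimal eigenvalue is $1-|z|$, which tends to $1-|1+e^{-i\tau}|/2=1-|\cos(\tau/2)|>0$ as $\varepsilon\to0$. This limit is positive precisely because $\tau\notin2\pi\mathbb Z$, which gives the bound for all sufficiently small $\varepsilon$.
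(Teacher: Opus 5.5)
Your proposal is correct and follows the paper's route. Like the paper, you use the invariance of $\mathbb C^2\oplus0$ under $A_{\varepsilon,N}$ to show that $S_2$, $H_2$, $\mathbf g_2$ (hence $B_2$, $\mathbf c_2$, $\boldsymbol\gamma$, $p_{\rm succ}$) coincide with those of Theorem~\ref{thm:separation_su}, and then inherit every bound.

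One small correction: your remark that $\Delta_{\rm sp}$ is not needed is wrong for the condition-number claim. The inclusion $\operatorname{spec}(D_{N-2})\subseteq[\varepsilon,1]$ that you use requires $\varepsilon\le\Delta_{\rm sp}$; the paper explicitly restricts to $\varepsilon<\Delta_{\rm sp}$. Without an $\varepsilon$-independent lower bound, $D_{N-2}$ could have eigenvalues far below $\varepsilon$, and then $\kappa(A_{\varepsilon,N})=\Theta(\varepsilon^{-1})$ would fail. Only the $\chi_{\rm filt}$ claim is genuinely independent of $\Delta_{\rm sp}$.
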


\paragraph{Proof sketch.}
The initial state has support only on the first two-dimensional block, and
the block-diagonal Hamiltonian preserves this invariant subspace under all
time evolutions.  Therefore the entire QKS construction is identical to
that of Theorem~\ref{thm:separation_su}, independently of the ambient
dimension $N$.

The additional block $D_{N-2}$ neither changes the relevant inverse response
nor removes the physical eigenvalue $\varepsilon$.  Hence the global
condition number and the full inverse-amplification criterion retain their
$\Theta(\varepsilon^{-1})$ scaling, while the QKS structural and
directional quantities are unchanged.

The complete proof is given in
Appendix~\ref{app:highdim_proof}.

This proposition removes the dimensional objection: the separation persists
even when $m\ll N$.  We can therefore interpret it as a property of the
solution-relevant spectral geometry rather than a consequence of solving a
two-dimensional full-space problem.

\subsection{Interpretation of the separation}
\label{subsec:separation_interpretation}

The results above place the present QKS mechanism relative to existing
beyond-conditioning approaches in two complementary ways.

First, in the truncatable regime,
\begin{equation}
    \boxed{
    \text{globally small eigenvalues are solution-irrelevant},
    }
\end{equation}
and QKS recovers the familiar effective-conditioning picture: the reduced
problem itself remains uniformly well conditioned.

Second, in the directional regime central to this paper,
\begin{equation}
    \boxed{
    \text{the small eigenvalue is retained and solution-relevant},
    }
\end{equation}
so the reduced inverse norm diverges.  Nevertheless, the divergent inverse
amplification becomes increasingly parallel to the normalized solution
direction, while the transverse sensitivity remains bounded.

Theorem~\ref{thm:separation_su} and
Proposition~\ref{prop:high_dimensional_separation} do not imply that QKS
dominates all beyond-condition-number algorithms or every favorable regime
of Ref.~\cite{Dalzell2026}.  Rather, they establish the more specific
separation
\begin{equation}
    \boxed{
    \text{full inverse amplification can diverge}
    \quad\text{while}\quad
    \text{normalized directional sensitivity remains bounded}.}
    \label{eq:final_directional_separation}
\end{equation}

The beyond-conditioning mechanism developed here is therefore not merely an
implicit truncation of globally small eigenvalues.  It identifies a
distinct favorable regime in which a retained small eigenvalue may dominate
the solution magnitude and make the reduced inverse norm diverge, without
destabilizing the normalized solution direction.
\section{Discussion}
\label{sec:discussion}

The main result shows that a solution-relevant vanishing eigenvalue need not
force increasing normalized-state difficulty in time-evolution QKS.  The
structural requirement is that the relevant inverse response admit a compact
and stable time-evolution representation; after this compression, the key
question is geometric rather than purely spectral.  A divergent inverse
component that becomes parallel to the normalized solution changes the norm
of the response without producing a divergent change in its direction.

\subsection{Physical interpretation and a simple mechanical instability}
\label{sec:regime}

Quantum linear-system solvers also arise as primitives in differential
equations and Green's-function calculations
~\cite{BerryCosta2024,Krovi2023,Tong2021}.  A complementary physical family
relevant to the present directional mechanism arises in linearized structural
response near a simple mechanical instability, where loss of stability is
associated with a vanishing stiffness mode~\cite{Dalklint2021}.  In the
single-mode case, one critical stiffness mode approaches zero
while the remaining spectrum stays nonsingular.  If the applied load has
nonzero overlap with that critical mode, the response becomes increasingly
dominated by it.  This realizes the geometry behind (B1)--(B2): the
unnormalized response grows strongly along one soft direction, while its
normalized direction becomes increasingly stable.

The two-dimensional family in Eq.~\eqref{eq:separation_family} is the
minimal illustration of this single-soft-mode geometry.  The overlap regime
of Section~\ref{sec:separation} has a different interpretation: very soft
modes may exist in the full operator but the chosen forcing excites them too
weakly to contribute appreciably to the normalized response.  These two
situations explain why global conditioning can be pessimistic for physically
different reasons.

\subsection{Boundaries and scope}

The directional protection can fail through several distinct mechanisms.
First, if active phases or significant inactive spectral mass become
increasingly difficult to resolve, the required Krylov dimension or
evolution horizon may grow; this is the spectral-resolution boundary
captured by (A2).  Second, if the Krylov Gram geometry becomes unstable, the
orthonormalization and reconstruction costs can become unfavorable.  Third,
if several independent reduced directions become singular, normalization
removes only the component parallel to the solution, while transverse
singular directions may make $K_B$ diverge.  The present theorem therefore
does not claim condition-number-independent behavior for arbitrary
multi-mode singular limits.

The relation to existing beyond-condition-number methods should also be read
as partial overlap rather than subsumption.  Section~\ref{sec:separation}
shows that QKS recovers an effective-truncation-type regime when globally
small directions are solution-negligible, but QKS additionally requires
compact time-evolution representability and stable basis geometry.  The
explicit separation demonstrates a distinct directional mechanism, not a
blanket complexity improvement over all existing algorithms.

\subsection{Implementation dependence and outlook}

The finite-error and coordinate-transfer results close the measurement-error
chain needed for the end-to-end theorem, while the access-model comparison
separates the universal conditioning mechanism from
implementation-dependent costs.  Block-encoding, sparse-oracle, and
Pauli/LCU realizations lead to different precision and representation
overheads, but none restores inverse dependence on the retained vanishing
eigenvalue when its own access parameters remain bounded.

The result should not be interpreted as a general removal of condition
number from quantum linear solving.  Rather, it shows that the global
condition number alone is an incomplete descriptor of normalized QKS
difficulty.  The relevant quantities are solution-weighted spectral
representability and the geometry of inverse amplification relative to the
normalized solution direction.

Natural extensions include structured multi-mode directional stability,
weaker spectral-resolvability conditions, and application-specific regimes
in which the favorable quantities can be certified directly.  Improved
matrix access and projected-system estimators can reduce the remaining
representation and precision costs without changing the directional
mechanism identified here.

\appendix

\section{General End-to-End QKS Complexity}
\label{app:general_complexity_proof}

This appendix derives the complexity decomposition in
Theorem~\ref{thm:general_complexity}.  The derivation separates four costs:
projected-operator estimation, projected-right-hand-side estimation, the
classical reduced solve, and quantum state reconstruction.

Throughout this derivation, $S_m$ is used only to define the equivalent
orthonormal coordinates for theoretical analysis.  It does not contribute
a separate estimation cost.  When
\begin{equation}
    \lambda_{\min}(S_m)\ge \sigma_S>0,
\end{equation}
Theorem~\ref{thm:SHg_Bc} shows that perturbations of the actually estimated
quantities $(H_m,\mathbf g_m)$ induce perturbations of
$(B_m,\mathbf c_m)$ with prefactors depending only on $\sigma_S$ and not on
the vanishing reduced eigenvalue.  These fixed Gram-stability factors are
suppressed in the schematic complexity expression of
Theorem~\ref{thm:general_complexity}.

\subsection{Normalized-state sensitivity of the reduced solve}

Consider the exact orthonormal reduced system
\begin{equation}
    B_m\mathbf y=\mathbf c_m,
    \qquad
    \mathbf y=B_m^{-1}\mathbf c_m,
\end{equation}
and let
\begin{equation}
    P_{\mathbf y}^{\perp}
    =
    I-\frac{\mathbf y\mathbf y^\dagger}{\|\mathbf y\|_2^2}.
\end{equation}
Suppose that estimation of the projected system produces perturbations
\begin{equation}
    \widehat B_m=B_m+E,
    \qquad
    \widehat{\mathbf c}_m=\mathbf c_m+\mathbf f.
\end{equation}
Writing
\begin{equation}
    \widehat{\mathbf y}
    =
    \widehat B_m^{-1}\widehat{\mathbf c}_m
    =
    \mathbf y+\delta\mathbf y,
\end{equation}
the perturbed linear system gives
\begin{equation}
    (B_m+E)(\mathbf y+\delta\mathbf y)
    =
    \mathbf c_m+\mathbf f.
\end{equation}
Using $B_m\mathbf y=\mathbf c_m$ and retaining the first-order terms,
\begin{equation}
    B_m\delta\mathbf y
    =
    \mathbf f-E\mathbf y,
\end{equation}
and therefore
\begin{equation}
    \delta\mathbf y
    =
    B_m^{-1}\mathbf f
    -
    B_m^{-1}E\mathbf y
    +
    O\!\left(
    (\|E\|_2+\|\mathbf f\|_2)^2
    \right).
    \label{eq:first_order_reduced_perturbation}
\end{equation}

For normalized-state preparation, only the component of
$\delta\mathbf y$ transverse to $\mathbf y$ changes the physical state to
first order.  Indeed,
\begin{equation}
    \sin\Theta(\mathbf y,\widehat{\mathbf y})
    =
    \frac{
    \|P_{\mathbf y}^{\perp}\delta\mathbf y\|_2
    }{
    \|\mathbf y+\delta\mathbf y\|_2
    }.
\end{equation}
Hence, in the perturbative regime,
\begin{equation}
    \sin\Theta(\mathbf y,\widehat{\mathbf y})
    =
    \frac{
    \|P_{\mathbf y}^{\perp}\delta\mathbf y\|_2
    }{
    \|\mathbf y\|_2
    }
    +
    O\!\left(
    (\|E\|_2+\|\mathbf f\|_2)^2
    \right).
\end{equation}
Substituting Eq.~\eqref{eq:first_order_reduced_perturbation} gives
\begin{align}
    \sin\Theta(\mathbf y,\widehat{\mathbf y})
    \le{}&
    \frac{
    \|P_{\mathbf y}^{\perp}B_m^{-1}E\mathbf y\|_2
    }{
    \|\mathbf y\|_2
    }
    +
    \frac{
    \|P_{\mathbf y}^{\perp}B_m^{-1}\mathbf f\|_2
    }{
    \|\mathbf y\|_2
    }
    \nonumber\\
    &+
    O\!\left(
    (\|E\|_2+\|\mathbf f\|_2)^2
    \right).
\end{align}
Using
\begin{equation}
    K_B
    :=
    \|P_{\mathbf y}^{\perp}B_m^{-1}\|_2,
    \qquad
    K_c
    :=
    \frac{K_B}{\|\mathbf y\|_2},
\end{equation}
we obtain
\begin{equation}
    \boxed{
    \sin\Theta(\mathbf y,\widehat{\mathbf y})
    \le
    K_B\|E\|_2
    +
    K_c\|\mathbf f\|_2
    +
    O\!\left(
    (\|E\|_2+\|\mathbf f\|_2)^2
    \right).}
    \label{eq:local_state_sensitivity}
\end{equation}

Equation~\eqref{eq:local_state_sensitivity} explains the two sensitivity
parameters appearing in Theorem~\ref{thm:general_complexity}.  To allocate
a constant fraction of the target normalized-state error $\epsilon$ to each
source, it is sufficient, up to constant factors, to require
\begin{equation}
    \|E\|_2
    =
    O\!\left(\frac{\epsilon}{K_B}\right),
    \qquad
    \|\mathbf f\|_2
    =
    O\!\left(\frac{\epsilon}{K_c}\right).
    \label{eq:operator_vector_precision}
\end{equation}

The argument above is a local sensitivity calculation.  For the
retained-small-eigenvalue regime studied in the main text,
Theorem~\ref{thm:finite_stability} strengthens it to a finite-radius
statement and shows that the neighborhood over which the estimate is valid
does not shrink proportionally to the vanishing eigenvalue.

\subsection{From norm precision to entrywise precision}

The projected operator contains $m^2$ entries.  If each estimated matrix
entry has additive error at most $\eta_H$, then
\begin{equation}
    \|\Delta H_m\|_2
    \le
    \|\Delta H_m\|_F
    \le
    m\eta_H.
    \label{eq:matrix_entry_to_norm}
\end{equation}
Under stable orthonormalization, the corresponding perturbation of $B_m$
differs only by factors independent of the vanishing reduced eigenvalue.
Therefore Eq.~\eqref{eq:operator_vector_precision} is achieved by choosing
\begin{equation}
    \boxed{
    \eta_H
    =
    O\!\left(
    \frac{\epsilon}{mK_B}
    \right).}
    \label{eq:H_entry_precision_general}
\end{equation}

Similarly, if every component of the $m$-dimensional projected
right-hand side is estimated to additive precision $\eta_g$, then
\begin{equation}
    \|\Delta\mathbf g_m\|_2
    \le
    \sqrt m\,\eta_g.
    \label{eq:vector_entry_to_norm}
\end{equation}
Stable orthonormalization then gives
\begin{equation}
    \boxed{
    \eta_g
    =
    O\!\left(
    \frac{\epsilon}{\sqrt m\,K_c}
    \right).}
    \label{eq:g_entry_precision_general}
\end{equation}

Let $\mathcal Q_H(m,\eta)$ denote the number of quantum calls required by
the chosen estimator to obtain the projected-operator data to additive
precision $\eta$, and let $\mathcal Q_g(m,\eta)$ denote the analogous
quantity for the projected right-hand side.  If $C_H(T)$ and $C_g(T)$ are
the corresponding per-call costs, the projected-system construction costs
are
\begin{equation}
    C_H(T)\,
    \mathcal Q_H\!\left(
        m,\frac{\epsilon}{mK_B}
    \right)
\end{equation}
and
\begin{equation}
    C_g(T)\,
    \mathcal Q_g\!\left(
        m,\frac{\epsilon}{\sqrt m\,K_c}
    \right),
\end{equation}
respectively.

For the time-evolution Krylov basis
$|\phi_j\rangle=e^{-ij\tau A}|b\rangle$, the largest evolution time
appearing in any projected quantity is
\begin{equation}
    T=(m-1)\tau,
\end{equation}
which explains the dependence of the per-call costs on $T$.

\subsection{Classical reduced-system cost}

Once the projected data have been estimated, the remaining reduced problem
$H_m\boldsymbol\gamma=\mathbf g_m$ has dimension $m$.  Solving this dense
linear system by standard classical linear algebra requires
\begin{equation}
    O(m^3)
\end{equation}
arithmetic operations.  This accounts for the third term in
Eq.~\eqref{eq:general_qks_complexity}.  Faster classical matrix
multiplication could modify this exponent, but $O(m^3)$ is the standard
dense-solve cost and is sufficient for the present complexity
decomposition.

\subsection{LCU reconstruction and success probability}

After solving the projected system, QKS must prepare
\begin{equation}
    x_m
    =
    \Phi_m\boldsymbol\gamma
    =
    \sum_{j=0}^{m-1}\gamma_j|\phi_j\rangle.
\end{equation}
Write
\begin{equation}
    \gamma_j
    =
    |\gamma_j|e^{i\vartheta_j},
    \qquad
    \|\boldsymbol\gamma\|_1
    =
    \sum_{j=0}^{m-1}|\gamma_j|.
\end{equation}
The standard LCU construction prepares the ancilla state
\begin{equation}
    |\alpha\rangle
    =
    \frac{1}{\sqrt{\|\boldsymbol\gamma\|_1}}
    \sum_{j=0}^{m-1}
    \sqrt{|\gamma_j|}\,|j\rangle,
\end{equation}
applies the corresponding controlled time evolutions together with the
phases $e^{i\vartheta_j}$, and then unprepares the ancilla.  Conditioned on
the successful ancilla outcome, the system register contains the
unnormalized state
\begin{equation}
    \frac{1}{\|\boldsymbol\gamma\|_1}
    \sum_{j=0}^{m-1}
    \gamma_j|\phi_j\rangle
    =
    \frac{x_m}{\|\boldsymbol\gamma\|_1}.
\end{equation}
The success probability is therefore
\begin{align}
    p_{\rm succ}
    &=
    \frac{\|x_m\|_2^2}
    {\|\boldsymbol\gamma\|_1^2}
    \nonumber\\
    &=
    \frac{
    \boldsymbol\gamma^\dagger
    \Phi_m^\dagger\Phi_m
    \boldsymbol\gamma
    }{
    \|\boldsymbol\gamma\|_1^2
    }
    \nonumber\\
    &=
    \boxed{
    \frac{
    \boldsymbol\gamma^\dagger
    S_m
    \boldsymbol\gamma
    }{
    \|\boldsymbol\gamma\|_1^2
    }}.
    \label{eq:psucc_derivation}
\end{align}

Preparing and unpreparing the $m$-component coefficient state costs
$O(m)$ using a generic implementation, while the controlled
time-evolution part costs $C_{\rm K}(T)$.  Thus one reconstruction attempt
costs
\begin{equation}
    O\!\left(m+C_{\rm K}(T)\right).
\end{equation}
Using direct postselection and independent repetition, the expected number
of attempts before success is $1/p_{\rm succ}$.  Hence the expected
reconstruction cost is
\begin{equation}
    O\!\left(
    \frac{m+C_{\rm K}(T)}
    {p_{\rm succ}}
    \right).
    \label{eq:reconstruction_cost_derivation}
\end{equation}
Amplitude amplification could replace the repetition overhead by a
different dependence on $p_{\rm succ}$, but is unnecessary in the
favorable regime considered in the main text, where
$p_{\rm succ}=\Omega(1)$.

\subsection{Total complexity}

Adding the projected-system estimation costs, the classical reduced solve,
and the expected reconstruction cost gives
\begin{equation}
\boxed{
 C_{\rm QKS}
 =
 \widetilde O\!\left[
 C_H(T)\mathcal Q_H\!\left(
 m,\frac{\epsilon}{mK_B}
 \right)
 +
 C_g(T)\mathcal Q_g\!\left(
 m,\frac{\epsilon}{\sqrt m\,K_c}
 \right)
 +
 m^3
 +
 \frac{m+C_{\rm K}(T)}{p_{\rm succ}}
 \right].}
 \label{eq:general_qks_complexity_appendix}
\end{equation}
The $\widetilde O$ notation suppresses constant error-budget allocations,
confidence-amplification factors, and polylogarithmic precision overheads.

Consequently, once implementation-specific access costs are kept explicit,
the quantities through which the vanishing eigenvalue can affect the
end-to-end QKS complexity are the structural resources
\begin{equation}
    m,\qquad T,\qquad p_{\rm succ}^{-1},
\end{equation}
together with the normalized-state sensitivity factors
\begin{equation}
    K_B,\qquad K_c,
\end{equation}
provided the Krylov Gram geometry remains stable.  The structural theorem
and the directional-stability results in the main text establish precisely
this stability in the favorable regime.

\section{Proofs for Spectral Compression and Compact Representation}
\label{app:structural_proofs}

\subsection{Proof of the weighted exponential representation}
\label{app:weighted_exp_proof}

\begin{proof}[Proof of Proposition~\ref{prop:weighted_exp}]
Using the spectral decomposition
\begin{equation}
 A
 =
 \sum_{\ell=1}^{N}
 \lambda_\ell
 |u_\ell\rangle\langle u_\ell|
\end{equation}
and
\begin{equation}
 |b\rangle
 =
 \sum_{\ell=1}^{N}
 b_\ell |u_\ell\rangle,
\end{equation}
the exact unnormalized solution is
\begin{equation}
 x
 =
 A^{-1}|b\rangle
 =
 \sum_{\ell=1}^{N}
 \frac{b_\ell}{\lambda_\ell}
 |u_\ell\rangle.
 \label{eq:appendix_exact_solution}
\end{equation}

The QKS approximation associated with
$\boldsymbol\gamma=(\gamma_0,\ldots,\gamma_{m-1})^T$ is
\begin{equation}
 \Phi_m\boldsymbol\gamma
 =
 \sum_{j=0}^{m-1}
 \gamma_j e^{-ij\tau A}|b\rangle.
\end{equation}
Since each $|u_\ell\rangle$ is an eigenvector of $A$,
\begin{equation}
 e^{-ij\tau A}|u_\ell\rangle
 =
 e^{-ij\tau\lambda_\ell}|u_\ell\rangle.
\end{equation}
Therefore
\begin{align}
 \Phi_m\boldsymbol\gamma
 &=
 \sum_{j=0}^{m-1}
 \gamma_j
 \sum_{\ell=1}^{N}
 b_\ell e^{-ij\tau\lambda_\ell}|u_\ell\rangle
 \nonumber\\
 &=
 \sum_{\ell=1}^{N}
 b_\ell
 \left(
 \sum_{j=0}^{m-1}
 \gamma_j e^{-ij\tau\lambda_\ell}
 \right)
 |u_\ell\rangle.
 \label{eq:appendix_qks_spectral_expansion}
\end{align}

Subtracting Eq.~\eqref{eq:appendix_qks_spectral_expansion} from
Eq.~\eqref{eq:appendix_exact_solution} gives
\begin{equation}
 x-\Phi_m\boldsymbol\gamma
 =
 \sum_{\ell=1}^{N}
 b_\ell
 \left[
 \frac1{\lambda_\ell}
 -
 \sum_{j=0}^{m-1}
 \gamma_j e^{-ij\tau\lambda_\ell}
 \right]
 |u_\ell\rangle.
\end{equation}

Because the eigenvectors $\{|u_\ell\rangle\}_{\ell=1}^{N}$ form an
orthonormal basis, the squared Euclidean norm is the sum of the squared
magnitudes of these coefficients.  Hence
\begin{equation}
 \|x-\Phi_m\boldsymbol\gamma\|_2^2
 =
 \sum_{\ell=1}^{N}
 |b_\ell|^2
 \left|
 \frac1{\lambda_\ell}
 -
 \sum_{j=0}^{m-1}
 \gamma_j e^{-ij\tau\lambda_\ell}
 \right|^2,
\end{equation}
which proves Eq.~\eqref{eq:weighted_exp_identity}.
\end{proof}

\subsection{Proof of the localized time-evolution filter bound}
\label{app:localized_filter_proof}

\begin{proof}[Proof of Lemma~\ref{lem:localized_filter}]
Fix an active mode $a\in\mathcal A$.  By definition,
\begin{equation}
 F_{a,m}(\lambda)
 =
 \frac1m
 \sum_{j=0}^{m-1}
 e^{-ij(\tau\lambda-\theta_a)}.
\end{equation}

We first evaluate the filter at the target eigenvalue $\lambda_a$.
Since
\begin{equation}
 \theta_a
 =
 \tau\lambda_a
 \pmod{2\pi},
\end{equation}
we have
\begin{equation}
 \tau\lambda_a-\theta_a
 =
 0
 \pmod{2\pi}.
\end{equation}
Consequently,
\begin{equation}
 e^{-ij(\tau\lambda_a-\theta_a)}
 =
 1
\end{equation}
for every $j=0,\ldots,m-1$, and therefore
\begin{equation}
 F_{a,m}(\lambda_a)
 =
 \frac1m
 \sum_{j=0}^{m-1}1
 =
 1.
\end{equation}

We next consider a spectral value $\lambda$ whose phase is separated from
$\theta_a$.  Let $\phi\in[-\pi,\pi]$ be the principal representative of
the phase difference
\begin{equation}
 \phi
 =
 \tau\lambda-\theta_a
 \pmod{2\pi}.
\end{equation}
By the definition of the circular distance,
\begin{equation}
 |\phi|
 =
 d_{\mathbb T}(\tau\lambda,\theta_a).
 \label{eq:phase_distance_phi}
\end{equation}
Under the hypothesis
\begin{equation}
 d_{\mathbb T}(\tau\lambda,\theta_a)
 \ge
 \delta>0,
\end{equation}
we therefore have
\begin{equation}
 \delta
 \le
 |\phi|
 \le
 \pi.
 \label{eq:phi_interval}
\end{equation}

Since $\phi\neq0\pmod{2\pi}$, the finite geometric-series identity gives
\begin{equation}
 \sum_{j=0}^{m-1}
 e^{-ij\phi}
 =
 \frac{1-e^{-im\phi}}
 {1-e^{-i\phi}}.
\end{equation}
Hence
\begin{equation}
 F_{a,m}(\lambda)
 =
 \frac{
 1-e^{-im\phi}
 }{
 m(1-e^{-i\phi})
 }.
 \label{eq:filter_geometric_series}
\end{equation}

Taking absolute values,
\begin{equation}
 |F_{a,m}(\lambda)|
 =
 \frac{
 |1-e^{-im\phi}|
 }{
 m|1-e^{-i\phi}|
 }.
\end{equation}
The numerator is bounded by
\begin{equation}
 |1-e^{-im\phi}|
 \le
 2.
\end{equation}
For the denominator, using
\begin{equation}
 |1-e^{-it}|
 =
 2|\sin(t/2)|,
\end{equation}
we obtain
\begin{equation}
 |1-e^{-i\phi}|
 =
 2|\sin(\phi/2)|.
\end{equation}
Therefore
\begin{equation}
 |F_{a,m}(\lambda)|
 \le
 \frac{
 1
 }{
 m|\sin(\phi/2)|
 }.
 \label{eq:filter_pre_gap_bound}
\end{equation}

From Eq.~\eqref{eq:phi_interval},
\begin{equation}
 \frac{\delta}{2}
 \le
 \frac{|\phi|}{2}
 \le
 \frac{\pi}{2}.
\end{equation}
Since $\sin t$ is nondecreasing on $[0,\pi/2]$,
\begin{equation}
 |\sin(\phi/2)|
 =
 \sin(|\phi|/2)
 \ge
 \sin(\delta/2).
\end{equation}
Substituting this inequality into
Eq.~\eqref{eq:filter_pre_gap_bound} gives
\begin{equation}
 |F_{a,m}(\lambda)|
 \le
 \frac1{
 m\sin(\delta/2)
 },
\end{equation}
which proves Eq.~\eqref{eq:filter_bound}.
\end{proof}

\subsection{Proof of the structural QKS representation theorem}
\label{app:structural_proof}

\begin{proof}[Proof of Theorem~\ref{thm:structural}]

The main theorem is stated with abstract universal constants.  The explicit construction below proves one convenient sufficient choice; in particular, the numerical factor $2$ appearing in the interpolation and inactive-mode bounds may be absorbed into $C_{\rm U}$, $C_{\rm res}$, and $C_{\rm app}$.

The proof proceeds in six steps.  We first construct localized filters for
the active modes, then combine them into an interpolant that reproduces the
inverse response exactly on the active spectrum.  We show that the resulting
state belongs to the QKS subspace, bound its leakage onto the inactive
spectrum, and finally establish Gram-matrix and reconstruction stability.

\paragraph{Step 1: Localized filters for the active modes.}

For each active mode $a\in\mathcal A$, recall the localized filter
\begin{equation}
 F_{a,m}(\lambda)
 =
 \frac1m
 \sum_{j=0}^{m-1}
 e^{-ij(\tau\lambda-\theta_a)},
 \qquad
 \theta_a=\tau\lambda_a\pmod{2\pi}.
 \label{eq:structural_filter_recall}
\end{equation}
Lemma~\ref{lem:localized_filter} gives
\begin{equation}
 F_{a,m}(\lambda_a)=1,
 \label{eq:structural_filter_target}
\end{equation}
and, whenever
\begin{equation}
 d_{\mathbb T}(\tau\lambda,\theta_a)\ge\delta,
\end{equation}
\begin{equation}
 |F_{a,m}(\lambda)|
 \le
 \frac1{m\sin(\delta/2)}.
 \label{eq:structural_filter_decay}
\end{equation}

Thus each filter has unit response on its target active mode and suppresses
spectral modes whose time-evolution phases are separated from that target.

\paragraph{Step 2: Stable interpolation of the inverse function on the active spectrum.}

We seek a function $q_m(\lambda)$ that reproduces the inverse function
$1/\lambda$ exactly at every active eigenvalue.  To construct it, define the
$r\times r$ active interpolation matrix
\begin{equation}
 G_{ba}
 :=
 F_{a,m}(\lambda_b),
 \qquad
 a,b\in\mathcal A.
 \label{eq:active_interpolation_matrix}
\end{equation}
Here the column indexed by $a$ records the response of the $a$th localized
filter on all active eigenvalues, while the row indexed by $b$ records the
responses of all filters at the active eigenvalue $\lambda_b$.

By Eq.~\eqref{eq:structural_filter_target},
\begin{equation}
 G_{aa}=1.
\end{equation}
For two distinct active modes $a\neq b$, assumption (A2) gives
\begin{equation}
 d_{\mathbb T}(\theta_a,\theta_b)
 \ge
 \delta_{\mathcal A},
\end{equation}
and therefore Eq.~\eqref{eq:structural_filter_decay} implies
\begin{equation}
 |G_{ba}|
 \le
 \frac1{m\sin(\delta_{\mathcal A}/2)},
 \qquad
 a\neq b.
 \label{eq:G_offdiag_bound}
\end{equation}

Since each column of $G-I$ contains at most $r-1$ nonzero entries,
\begin{equation}
 \|G-I\|_1
 \le
 \frac{r-1}
 {m\sin(\delta_{\mathcal A}/2)}.
 \label{eq:G_close_identity}
\end{equation}
The condition
\begin{equation}
 m
 \ge
 \frac{2(r-1)}
 {\sin(\delta_{\mathcal A}/2)}
\end{equation}
therefore ensures
\begin{equation}
 \|G-I\|_1\le\frac12.
\end{equation}
Hence $G$ is invertible.  Indeed, writing
\begin{equation}
 G=I+(G-I),
\end{equation}
the Neumann series converges and yields
\begin{equation}
 G^{-1}
 =
 \sum_{k=0}^{\infty}
 (-1)^k(G-I)^k,
\end{equation}
so that
\begin{equation}
 \|G^{-1}\|_1
 \le
 \frac1{1-\|G-I\|_1}
 \le
 2.
 \label{eq:G_inverse}
\end{equation}

Now define the target inverse values on the active spectrum by
\begin{equation}
 d_a:=\frac1{\lambda_a},
 \qquad
 a\in\mathcal A,
\end{equation}
and collect them in the vector $\mathbf d\in\mathbb C^r$.  Define
\begin{equation}
 \boldsymbol\alpha
 :=
 G^{-1}\mathbf d.
 \label{eq:alpha_definition}
\end{equation}
Using these coefficients, introduce
\begin{equation}
 q_m(\lambda)
 :=
 \sum_{a\in\mathcal A}
 \alpha_a F_{a,m}(\lambda).
 \label{eq:q_interpolant}
\end{equation}

For any active mode $b\in\mathcal A$,
\begin{align}
 q_m(\lambda_b)
 &=
 \sum_{a\in\mathcal A}
 \alpha_aF_{a,m}(\lambda_b)
 \nonumber\\
 &=
 (G\boldsymbol\alpha)_b
 =
 d_b
 =
 \frac1{\lambda_b}.
 \label{eq:q_exact_active}
\end{align}
Thus $q_m$ reproduces $1/\lambda$ exactly on the complete active spectrum.

\paragraph{Step 3: Construction of an explicit vector in the QKS subspace.}

Define
\begin{equation}
 z_m
 :=
 q_m(A)|b\rangle.
 \label{eq:z_construction}
\end{equation}
Each localized filter can be written as
\begin{align}
 F_{a,m}(\lambda)
 &=
 \frac1m
 \sum_{j=0}^{m-1}
 e^{-ij(\tau\lambda-\theta_a)}
 \nonumber\\
 &=
 \sum_{j=0}^{m-1}
 \frac{e^{ij\theta_a}}{m}
 e^{-ij\tau\lambda}.
\end{align}
Hence $q_m(\lambda)$ is itself a linear combination of
\begin{equation}
 1,\,
 e^{-i\tau\lambda},\,
 \ldots,\,
 e^{-i(m-1)\tau\lambda}.
\end{equation}
Therefore there exist coefficients $\gamma_0,\ldots,\gamma_{m-1}$ such
that
\begin{equation}
 q_m(A)|b\rangle
 =
 \sum_{j=0}^{m-1}
 \gamma_j e^{-ij\tau A}|b\rangle.
\end{equation}
By definition of the QKS subspace,
\begin{equation}
 \boxed{
 z_m\in\mathcal K_m.}
 \label{eq:z_in_krylov}
\end{equation}

Furthermore, expanding in the eigenbasis,
\begin{equation}
 z_m
 =
 \sum_{\ell=1}^{N}
 b_\ell q_m(\lambda_\ell)|u_\ell\rangle.
 \label{eq:z_spectral_expansion}
\end{equation}
For every active mode $a\in\mathcal A$,
Eq.~\eqref{eq:q_exact_active} gives
\begin{equation}
 b_a q_m(\lambda_a)
 =
 \frac{b_a}{\lambda_a}.
\end{equation}
Since
\begin{equation}
 x
 =
 \sum_{\ell=1}^{N}
 \frac{b_\ell}{\lambda_\ell}|u_\ell\rangle,
\end{equation}
the constructed state therefore reproduces the exact solution on the
active spectral subspace:
\begin{equation}
 P_{\mathcal A}z_m
 =
 P_{\mathcal A}x,
 \label{eq:exact_active_response}
\end{equation}
where
\begin{equation}
 P_{\mathcal A}
 :=
 \sum_{a\in\mathcal A}
 |u_a\rangle\langle u_a|
\end{equation}
is the orthogonal projector onto the active eigenspace.

\paragraph{Step 4: Bounding the interpolation coefficients.}

To control the inactive components of $z_m$, we first bound
$\|\boldsymbol\alpha\|_1$.

For every active mode $a\in\mathcal A$, orthogonality of the eigenvectors
implies
\begin{equation}
 \|x\|_2
 \ge
 \frac{|b_a|}{\lambda_a}.
\end{equation}
Using the active-mode overlap condition
\begin{equation}
 |b_a|\ge\beta_{\mathcal A},
\end{equation}
we obtain
\begin{equation}
 \frac1{\lambda_a}
 \le
 \frac{\|x\|_2}{\beta_{\mathcal A}}.
 \label{eq:inverse_active_bound}
\end{equation}
Since there are $r$ active modes,
\begin{equation}
 \|\mathbf d\|_1
 =
 \sum_{a\in\mathcal A}\frac1{\lambda_a}
 \le
 \frac{r}{\beta_{\mathcal A}}\|x\|_2.
 \label{eq:d_bound}
\end{equation}
Combining Eq.~\eqref{eq:d_bound} with
$\boldsymbol\alpha=G^{-1}\mathbf d$ and
Eq.~\eqref{eq:G_inverse} gives
\begin{equation}
 \boxed{
 \|\boldsymbol\alpha\|_1
 \le
 \frac{2r}{\beta_{\mathcal A}}
 \|x\|_2.}
 \label{eq:alpha_bound}
\end{equation}

\paragraph{Step 5: Bounding the error on the inactive spectrum.}

We partition the inactive set into two parts.  The unresolved inactive set
$\mathcal U_\rho$ contains modes whose phases lie within distance $\rho$ of
at least one active phase.  Define the complementary resolved inactive set
by
\begin{equation}
 \mathcal I_{\rho}^{\rm res}
 :=
 \mathcal I\setminus\mathcal U_\rho.
 \label{eq:resolved_inactive_set}
\end{equation}

For any $\ell\in\mathcal I_{\rho}^{\rm res}$ and any
$a\in\mathcal A$,
\begin{equation}
 d_{\mathbb T}(\theta_\ell,\theta_a)
 \ge
 \rho.
\end{equation}
Lemma~\ref{lem:localized_filter} therefore gives
\begin{equation}
 |F_{a,m}(\lambda_\ell)|
 \le
 \frac1{m\sin(\rho/2)}.
\end{equation}
Using Eq.~\eqref{eq:q_interpolant},
\begin{align}
 |q_m(\lambda_\ell)|
 &\le
 \sum_{a\in\mathcal A}
 |\alpha_a|
 |F_{a,m}(\lambda_\ell)|
 \nonumber\\
 &\le
 \frac{\|\boldsymbol\alpha\|_1}
 {m\sin(\rho/2)}.
 \label{eq:q_resolved_bound}
\end{align}

Let
\begin{equation}
 P_{\mathcal I_{\rho}^{\rm res}}
 :=
 \sum_{\ell\in\mathcal I_{\rho}^{\rm res}}
 |u_\ell\rangle\langle u_\ell|.
\end{equation}
Then, using Eq.~\eqref{eq:z_spectral_expansion},
\begin{align}
 \|P_{\mathcal I_{\rho}^{\rm res}}z_m\|_2^2
 &=
 \sum_{\ell\in\mathcal I_{\rho}^{\rm res}}
 |b_\ell|^2
 |q_m(\lambda_\ell)|^2
 \nonumber\\
 &\le
 \frac{\|\boldsymbol\alpha\|_1^2}
 {m^2\sin^2(\rho/2)}
 \sum_{\ell\in\mathcal I_{\rho}^{\rm res}}
 |b_\ell|^2
 \nonumber\\
 &\le
 \frac{\|\boldsymbol\alpha\|_1^2}
 {m^2\sin^2(\rho/2)}.
\end{align}
Using Eq.~\eqref{eq:alpha_bound},
\begin{equation}
 \boxed{
 \frac{
 \|P_{\mathcal I_{\rho}^{\rm res}}z_m\|_2
 }{
 \|x\|_2
 }
 \le
 \frac{2r}{\beta_{\mathcal A}}
 \frac1{m\sin(\rho/2)}.}
 \label{eq:resolved_inactive}
\end{equation}

For the unresolved inactive modes, phase separation cannot be used to
obtain an $O(1/m)$ suppression.  Instead, the trivial bound
\begin{equation}
 |F_{a,m}(\lambda)|
 \le
 \frac1m
 \sum_{j=0}^{m-1}1
 =
 1
\end{equation}
gives
\begin{equation}
 |q_m(\lambda_\ell)|
 \le
 \|\boldsymbol\alpha\|_1,
 \qquad
 \ell\in\mathcal U_\rho.
\end{equation}
Let
\begin{equation}
 P_{\mathcal U_\rho}
 :=
 \sum_{\ell\in\mathcal U_\rho}
 |u_\ell\rangle\langle u_\ell|.
\end{equation}
Then
\begin{align}
 \|P_{\mathcal U_\rho}z_m\|_2^2
 &=
 \sum_{\ell\in\mathcal U_\rho}
 |b_\ell|^2
 |q_m(\lambda_\ell)|^2
 \nonumber\\
 &\le
 \|\boldsymbol\alpha\|_1^2
 \sum_{\ell\in\mathcal U_\rho}|b_\ell|^2
 \nonumber\\
 &=
 \|\boldsymbol\alpha\|_1^2
 \epsilon_{\rm U}^2.
\end{align}
Therefore
\begin{equation}
 \boxed{
 \frac{
 \|P_{\mathcal U_\rho}z_m\|_2
 }{
 \|x\|_2
 }
 \le
 \frac{2r}{\beta_{\mathcal A}}
 \epsilon_{\rm U}.}
 \label{eq:unresolved_inactive}
\end{equation}

Because $x$ and $z_m$ agree exactly on the active subspace,
Eq.~\eqref{eq:exact_active_response} gives
\begin{equation}
 x-z_m
 =
 x_{\mathcal I}
 -
 P_{\mathcal U_\rho}z_m
 -
 P_{\mathcal I_{\rho}^{\rm res}}z_m,
\end{equation}
where
\begin{equation}
 x_{\mathcal I}
 :=
 P_{\mathcal I}x.
\end{equation}
Applying the triangle inequality,
\begin{align}
 \frac{\|x-z_m\|_2}{\|x\|_2}
 \le{}&
 \frac{\|x_{\mathcal I}\|_2}{\|x\|_2}
 +
 \frac{\|P_{\mathcal U_\rho}z_m\|_2}{\|x\|_2}
 \nonumber\\
 &+
 \frac{
 \|P_{\mathcal I_{\rho}^{\rm res}}z_m\|_2
 }{
 \|x\|_2
 }.
\end{align}
Assumption (A1) gives
\begin{equation}
 \frac{\|x_{\mathcal I}\|_2}{\|x\|_2}
 \le
 \epsilon_{\rm sub}.
\end{equation}
Combining this with
Eqs.~\eqref{eq:resolved_inactive} and
\eqref{eq:unresolved_inactive} yields
\begin{equation}
 \boxed{
 \frac{\|x-z_m\|_2}{\|x\|_2}
 \le
 \epsilon_{\rm sub}
 +
 \frac{2r}{\beta_{\mathcal A}}
 \left[
 \epsilon_{\rm U}
 +
 \frac1{m\sin(\rho/2)}
 \right].}
 \label{eq:structural_error_proof}
\end{equation}

For a prescribed target error $\epsilon_K$, define the remaining error
budget
\begin{equation}
 \Delta_K
 =
 \epsilon_K
 -
 \epsilon_{\rm sub}
 -
 \frac{2r}{\beta_{\mathcal A}}\epsilon_{\rm U}.
\end{equation}
If $\Delta_K>0$, requiring
\begin{equation}
 \frac{2r}
 {\beta_{\mathcal A}m\sin(\rho/2)}
 \le
 \Delta_K
\end{equation}
is sufficient to guarantee
\begin{equation}
 \frac{\|x-z_m\|_2}{\|x\|_2}
 \le
 \epsilon_K.
\end{equation}
Equivalently,
\begin{equation}
 m
 \ge
 \frac{2r}
 {\beta_{\mathcal A}\Delta_K\sin(\rho/2)}.
 \label{eq:m_accuracy_requirement_proof}
\end{equation}
Combining Eq.~\eqref{eq:m_accuracy_requirement_proof} with the active-mode
interpolation requirement derived from Eq.~\eqref{eq:G_close_identity}
gives the sufficient Krylov-dimension requirement stated in
Eq.~\eqref{eq:structural_m_requirement}.

\paragraph{Step 6: Stability of the Krylov Gram matrix and state reconstruction.}

Define the full phase matrix $V\in\mathbb C^{N\times m}$ by
\begin{equation}
 V_{\ell j}
 :=
 e^{-ij\tau\lambda_\ell},
 \qquad
 \ell=1,\ldots,N,
 \quad
 j=0,\ldots,m-1,
\end{equation}
and the spectral weight matrix by
\begin{equation}
 W
 :=
 \operatorname{diag}
 \left(
 |b_1|^2,\ldots,|b_N|^2
 \right).
\end{equation}
The QKS Gram matrix can then be written as
\begin{equation}
 S_m
 =
 \Phi_m^\dagger\Phi_m
 =
 V^\dagger W V.
 \label{eq:S_weighted_V}
\end{equation}

Assumption (A2) states that there exists a spectral support set
$\mathcal R\subseteq\{1,\ldots,N\}$ with $|\mathcal R|\ge m$ such that
\begin{equation}
 \sigma_{\min}
 \left(
 W_{\mathcal R}^{1/2}
 V_{\mathcal R}
 \right)
 \ge
 s_0>0.
 \label{eq:support_stability_recall}
\end{equation}
Since every omitted spectral contribution is positive semidefinite,
\begin{equation}
 S_m
 =
 V^\dagger WV
 \succeq
 V_{\mathcal R}^\dagger
 W_{\mathcal R}
 V_{\mathcal R}.
 \label{eq:S_support_order}
\end{equation}
Moreover,
\begin{align}
 \lambda_{\min}
 \left(
 V_{\mathcal R}^\dagger
 W_{\mathcal R}
 V_{\mathcal R}
 \right)
 &=
 \sigma_{\min}^2
 \left(
 W_{\mathcal R}^{1/2}
 V_{\mathcal R}
 \right)
 \nonumber\\
 &\ge
 s_0^2.
\end{align}
Therefore
\begin{equation}
 \boxed{
 \lambda_{\min}(S_m)\ge s_0^2.}
\end{equation}

Finally, the LCU reconstruction success probability for any nonzero
coefficient vector $\boldsymbol\gamma$ is
\begin{equation}
 p_{\rm succ}
 =
 \frac{
 \boldsymbol\gamma^\dagger
 S_m
 \boldsymbol\gamma
 }{
 \|\boldsymbol\gamma\|_1^2
 }.
\end{equation}
The Gram lower bound gives
\begin{equation}
 \boldsymbol\gamma^\dagger
 S_m
 \boldsymbol\gamma
 \ge
 s_0^2
 \|\boldsymbol\gamma\|_2^2,
\end{equation}
while Cauchy--Schwarz gives
\begin{equation}
 \|\boldsymbol\gamma\|_1^2
 \le
 m\|\boldsymbol\gamma\|_2^2.
\end{equation}
Hence
\begin{equation}
 \boxed{
 p_{\rm succ}
 \ge
 \frac{s_0^2}{m}.}
\end{equation}

Combining the interpolation and approximation requirements yields the
explicit structural scaling
\begin{equation}
 m
 =
 O\!\left(
 \max\left\{
 \frac{r}{\sin(\delta_{\mathcal A}/2)},
 \,
 \frac{r}
 {\beta_{\mathcal A}\Delta_K\sin(\rho/2)}
 \right\}
 \right),
\end{equation}
with
\begin{equation}
 T=(m-1)\tau,
 \qquad
 p_{\rm succ}^{-1}
 \le
 \frac{m}{s_0^2}.
\end{equation}

Therefore, if the solution-weighted spectral parameters
$r$, $\delta_{\mathcal A}$, $\rho$, $\beta_{\mathcal A}$,
$\Delta_K$, $s_0$, and $\tau$
remain bounded independently of the vanishing solution-relevant
eigenvalue, none of
$m$, $T$, or $p_{\rm succ}^{-1}$
inherits its divergence.
\end{proof}

\section{Proofs for Directional Stability and Galerkin Transfer}
\label{app:directional_proofs}

\subsection{Stable transfer to orthonormal reduced coordinates}
\label{app:orthonormalization_proof}

\begin{proof}[Proof of Theorem~\ref{thm:SHg_Bc}]
The Gram matrix $S_m$ is used here only as a fixed theoretical coordinate
transform.  Since
\begin{equation}
    B_m=S_m^{-1/2}H_mS_m^{-1/2},
    \qquad
    \mathbf c_m=S_m^{-1/2}\mathbf g_m,
\end{equation}
the perturbed analysis coordinates associated with
$\widehat H_m=H_m+\Delta H_m$ and
$\widehat{\mathbf g}_m=\mathbf g_m+\Delta\mathbf g_m$ satisfy
\begin{equation}
    \widehat B_m-B_m
    =
    S_m^{-1/2}\Delta H_mS_m^{-1/2},
\end{equation}
and
\begin{equation}
    \widehat{\mathbf c}_m-\mathbf c_m
    =
    S_m^{-1/2}\Delta\mathbf g_m.
\end{equation}
The assumption
$\lambda_{\min}(S_m)\ge\sigma_S>0$
implies
\begin{equation}
    \|S_m^{-1/2}\|_2
    \le
    \sigma_S^{-1/2}.
\end{equation}
Therefore
\begin{equation}
    \|\widehat B_m-B_m\|_2
    \le
    \frac{\|\Delta H_m\|_2}{\sigma_S},
\end{equation}
and
\begin{equation}
    \|\widehat{\mathbf c}_m-\mathbf c_m\|_2
    \le
    \frac{\|\Delta\mathbf g_m\|_2}{\sqrt{\sigma_S}}.
\end{equation}
Thus the change to orthonormal coordinates introduces only fixed
Gram-geometry factors and no inverse dependence on the vanishing reduced
eigenvalue.
\end{proof}

\subsection{Ideal directional stability}
\label{app:solution_aligned_proof}

\begin{proof}[Proof of Theorem~\ref{thm:solution_aligned}]

The proof shows that the only divergent component of $B_m^{-1}$ becomes
asymptotically parallel to the solution itself and is therefore removed by
the transverse projector entering $K_B$.

\paragraph{Step 1: Alignment of the reduced solution with the soft direction.}

Under (B1)--(B2), write
\begin{equation}
    \mathbf c_m
    =
    c_1\mathbf v_1
    +
    \mathbf c_\perp,
    \qquad
    \mathbf c_\perp
    =
    \sum_{i\ge2}c_i\mathbf v_i.
\end{equation}
The exact reduced solution is
\begin{equation}
    \mathbf y
    =
    B_m^{-1}\mathbf c_m
    =
    \frac{c_1}{\mu_1}\mathbf v_1
    +
    \sum_{i\ge2}
    \frac{c_i}{\mu_i}\mathbf v_i.
    \label{eq:appendix_y_decomposition}
\end{equation}
Define the transverse component
\begin{equation}
    \mathbf y_\perp
    :=
    \sum_{i\ge2}
    \frac{c_i}{\mu_i}\mathbf v_i.
\end{equation}
Since $\mu_i\ge\mu_2$ for all $i\ge2$,
\begin{equation}
    \|\mathbf y_\perp\|_2
    \le
    \frac{\|\mathbf c_\perp\|_2}{\mu_2}.
\end{equation}
Meanwhile,
\begin{equation}
    \|\mathbf y\|_2
    \ge
    \frac{|c_1|}{\mu_1}.
\end{equation}
If
\begin{equation}
    \vartheta
    :=
    \Theta(\mathbf y,\mathbf v_1),
\end{equation}
then
\begin{align}
    \sin\vartheta
    &=
    \frac{\|\mathbf y_\perp\|_2}{\|\mathbf y\|_2}
    \nonumber\\
    &\le
    \frac{\mu_1}{\mu_2}
    \frac{\|\mathbf c_\perp\|_2}{|c_1|}.
    \label{eq:appendix_alignment_angle}
\end{align}

Thus the normalized solution becomes increasingly aligned with
$\mathbf v_1$ as $\mu_1\to0$.

\paragraph{Step 2: Projection removes the divergent inverse direction.}

The reduced inverse has the spectral representation
\begin{equation}
    B_m^{-1}
    =
    \frac1{\mu_1}
    \mathbf v_1\mathbf v_1^\dagger
    +
    \sum_{i\ge2}
    \frac1{\mu_i}
    \mathbf v_i\mathbf v_i^\dagger.
\end{equation}
Applying the transverse projector
$P_{\mathbf y}^{\perp}$ gives
\begin{align}
    K_B
    &=
    \|P_{\mathbf y}^{\perp}B_m^{-1}\|_2
    \nonumber\\
    &\le
    \frac{
    \|P_{\mathbf y}^{\perp}\mathbf v_1\|_2
    }{\mu_1}
    +
    \frac1{\mu_2}.
\end{align}
Because
\begin{equation}
    \|P_{\mathbf y}^{\perp}\mathbf v_1\|_2
    =
    \sin\vartheta,
\end{equation}
Eq.~\eqref{eq:appendix_alignment_angle} yields
\begin{equation}
    \boxed{
    K_B
    \le
    \frac1{\mu_2}
    \left(
    1+
    \frac{\|\mathbf c_\perp\|_2}{|c_1|}
    \right).
    }
\end{equation}

Under (B1)--(B2),
$\mu_2\ge\Delta_B>0$ and
$|c_1|\ge\beta_B>0$.
Moreover,
$\|\mathbf c_\perp\|_2\le\|\mathbf c_m\|_2\le1$ because
$\mathbf c_m=Q_m^\dagger|b\rangle$ and $Q_m$ is an isometry.
Hence
\begin{equation}
    K_B=O(1)
\end{equation}
with respect to $\mu_1$.

\paragraph{Step 3: Right-hand-side sensitivity decreases with the soft eigenvalue.}

By definition,
\begin{equation}
    K_c
    =
    \frac{K_B}{\|\mathbf y\|_2}.
\end{equation}
Using
\begin{equation}
    \|\mathbf y\|_2
    \ge
    \frac{|c_1|}{\mu_1},
\end{equation}
we obtain
\begin{equation}
    \boxed{
    K_c
    \le
    \frac{\mu_1}{|c_1|\mu_2}
    \left(
    1+
    \frac{\|\mathbf c_\perp\|_2}{|c_1|}
    \right).
    }
\end{equation}
Therefore
\begin{equation}
    K_c=O(\mu_1).
\end{equation}

The full inverse norm diverges as
$\|B_m^{-1}\|_2=\mu_1^{-1}$, but this divergence is asymptotically aligned
with the solution direction.  The transverse normalized-state sensitivity
therefore remains bounded.
\end{proof}

\subsection{Finite-error directional stability}
\label{app:finite_stability_proof}

\begin{proof}[Proof of Theorem~\ref{thm:finite_stability}]

The theorem is stated in the main text using universal constants because
their precise numerical values are not essential.  Here we give one
explicit sufficient choice to establish their existence.

For concreteness, assume
\begin{equation}
    \mu_1\le\frac{\Delta_B}{4},
    \qquad
    \eta_B\le\frac{\Delta_B}{8}.
    \label{eq:appendix_concrete_gap_conditions}
\end{equation}
These numerical fractions are not intrinsic to the result; they only
provide a convenient fixed separation between the soft eigenvalue, the
remaining spectrum, and the perturbation scale.

\paragraph{Step 1: The isolated soft eigendirection remains stable.}

Since
\begin{equation}
    \mu_2\ge\Delta_B,
\end{equation}
Eq.~\eqref{eq:appendix_concrete_gap_conditions} gives
\begin{equation}
    \mu_2-\mu_1
    \ge
    \frac{3\Delta_B}{4}.
\end{equation}
Weyl's inequality gives
\begin{equation}
    |\widehat\mu_i-\mu_i|
    \le
    \eta_B.
    \label{eq:appendix_weyl}
\end{equation}
In particular, for all $i\ge2$,
\begin{equation}
    \widehat\mu_i
    \ge
    \Delta_B-\eta_B
    \ge
    \frac{7\Delta_B}{8}.
    \label{eq:appendix_transverse_gap}
\end{equation}

The fixed gap therefore remains open after perturbation.  By the
Davis--Kahan theorem,
\begin{equation}
    \sin\Theta(\mathbf v_1,\widehat{\mathbf v}_1)
    \le
    \frac{\eta_B}
    {\mu_2-\mu_1-\eta_B}
    \le
    \frac{2\eta_B}{\Delta_B}.
    \label{eq:appendix_DK_bound}
\end{equation}
Define
\begin{equation}
    \delta_v
    :=
    \frac{2\eta_B}{\Delta_B}.
\end{equation}

Thus the finite operator error $\eta_B$ rotates the soft eigendirection by
an amount controlled by $\eta_B/\Delta_B$, rather than by
$\eta_B/\mu_1$.

\paragraph{Step 2: The perturbed right-hand side retains soft-mode overlap.}

Choose the phase of $\widehat{\mathbf v}_1$ so that its overlap with
$\mathbf v_1$ is nonnegative.  For unit vectors,
\begin{equation}
    \|\widehat{\mathbf v}_1-\mathbf v_1\|_2
    \le
    \sqrt{2}\,
    \sin\Theta(\mathbf v_1,\widehat{\mathbf v}_1)
    \le
    \sqrt{2}\,\delta_v.
    \label{eq:appendix_eigenvector_difference}
\end{equation}

Because
\begin{equation}
    \widehat{\mathbf c}_m
    =
    \mathbf c_m+\mathbf f,
\end{equation}
we have
\begin{align}
    |\langle
    \widehat{\mathbf v}_1,
    \widehat{\mathbf c}_m
    \rangle|
    \ge{}&
    |\langle\mathbf v_1,\mathbf c_m\rangle|
    -
    \|\widehat{\mathbf v}_1-\mathbf v_1\|_2
    \|\mathbf c_m\|_2
    -
    \|\mathbf f\|_2.
\end{align}
Using
$|c_1|\ge\beta_B$,
$\|\mathbf c_m\|_2\le1$,
and
$\|\mathbf f\|_2\le\eta_c$,
\begin{equation}
    |\langle
    \widehat{\mathbf v}_1,
    \widehat{\mathbf c}_m
    \rangle|
    \ge
    \beta_B
    -
    \sqrt2\,\delta_v
    -
    \eta_c.
\end{equation}

Therefore the sufficient precision requirement
\begin{equation}
    \eta_c+\sqrt2\,\delta_v
    \le
    \frac{\beta_B}{2}
    \label{eq:appendix_overlap_requirement}
\end{equation}
guarantees
\begin{equation}
    \boxed{
    |\langle
    \widehat{\mathbf v}_1,
    \widehat{\mathbf c}_m
    \rangle|
    \ge
    \frac{\beta_B}{2}.}
    \label{eq:appendix_perturbed_overlap}
\end{equation}

Equation~\eqref{eq:appendix_overlap_requirement} is therefore a sufficient
measurement-precision condition: the combined operator and
right-hand-side errors must not be large enough to destroy the original
soft-mode overlap.  Crucially, this condition is controlled by
$\Delta_B$ and $\beta_B$, not by $\mu_1$.

\paragraph{Step 3: The exact solution remains aligned with the original soft direction.}

From the proof of Theorem~\ref{thm:solution_aligned},
\begin{equation}
    \sin\Theta(\mathbf y,\mathbf v_1)
    \le
    \frac{\mu_1}{\mu_2}
    \frac{\|\mathbf c_\perp\|_2}{|c_1|}.
\end{equation}
Using
$\mu_2\ge\Delta_B$,
$|c_1|\ge\beta_B$,
and
$\|\mathbf c_\perp\|_2\le1$,
\begin{equation}
    \boxed{
    \sin\Theta(\mathbf y,\mathbf v_1)
    \le
    \frac{\mu_1}{\beta_B\Delta_B}.}
    \label{eq:appendix_exact_alignment}
\end{equation}

\paragraph{Step 4: The perturbed transverse response remains bounded.}

Expand the perturbed right-hand side in the eigenbasis of
$\widehat B_m$:
\begin{equation}
    \widehat{\mathbf c}_m
    =
    \widehat c_1\widehat{\mathbf v}_1
    +
    \widehat{\mathbf c}_\perp.
\end{equation}
The perturbed solution is
\begin{equation}
    \widehat{\mathbf y}
    =
    \frac{\widehat c_1}{\widehat\mu_1}
    \widehat{\mathbf v}_1
    +
    \widehat{\mathbf y}_\perp,
\end{equation}
where
\begin{equation}
    \widehat{\mathbf y}_\perp
    =
    \sum_{i\ge2}
    \frac{\widehat c_i}{\widehat\mu_i}
    \widehat{\mathbf v}_i.
\end{equation}

By Eq.~\eqref{eq:appendix_transverse_gap},
\begin{align}
    \|\widehat{\mathbf y}_\perp\|_2
    &\le
    \frac{8}{7\Delta_B}
    \|\widehat{\mathbf c}_\perp\|_2
    \nonumber\\
    &\le
    \frac{8(1+\eta_c)}
    {7\Delta_B},
    \label{eq:appendix_transverse_response}
\end{align}
where we used
\begin{equation}
    \|\widehat{\mathbf c}_m\|_2
    \le
    \|\mathbf c_m\|_2+\|\mathbf f\|_2
    \le
    1+\eta_c.
\end{equation}

Thus finite perturbations do not produce a divergent response in any
direction transverse to the isolated soft mode.

\paragraph{Step 5: The perturbed solution remains aligned with the perturbed soft direction.}

Weyl's inequality also gives
\begin{equation}
    |\widehat\mu_1|
    \le
    \mu_1+\eta_B.
\end{equation}
Together with
Eq.~\eqref{eq:appendix_perturbed_overlap},
\begin{equation}
    \|\widehat{\mathbf y}\|_2
    \ge
    \frac{|\widehat c_1|}
    {|\widehat\mu_1|}
    \ge
    \frac{\beta_B}
    {2(\mu_1+\eta_B)}.
    \label{eq:appendix_perturbed_solution_lower}
\end{equation}

Combining
Eqs.~\eqref{eq:appendix_transverse_response}
and~\eqref{eq:appendix_perturbed_solution_lower},
\begin{align}
    \sin\Theta(
    \widehat{\mathbf y},
    \widehat{\mathbf v}_1)
    &=
    \frac{\|\widehat{\mathbf y}_\perp\|_2}
    {\|\widehat{\mathbf y}\|_2}
    \nonumber\\
    &\le
    \frac{
    16(1+\eta_c)(\mu_1+\eta_B)
    }{
    7\beta_B\Delta_B
    }.
    \label{eq:appendix_perturbed_alignment}
\end{align}

Thus the perturbed solution remains aligned with the perturbed soft
direction.  Again, no inverse power of $\mu_1$ appears.

\paragraph{Step 6: Combine the three directional deviations.}

The phase-invariant angle satisfies the triangle inequality:
\begin{equation}
    \Theta(
    \mathbf y,
    \widehat{\mathbf y})
    \le
    \Theta(\mathbf y,\mathbf v_1)
    +
    \Theta(\mathbf v_1,\widehat{\mathbf v}_1)
    +
    \Theta(
    \widehat{\mathbf v}_1,
    \widehat{\mathbf y}).
    \label{eq:appendix_angle_chain}
\end{equation}
Using
Eqs.~\eqref{eq:appendix_exact_alignment},
\eqref{eq:appendix_DK_bound},
and~\eqref{eq:appendix_perturbed_alignment},
we obtain the explicit bound
\begin{align}
    \Theta(
    \mathbf y,
    \widehat{\mathbf y})
    \le{}&
    \arcsin\!\left(
    \frac{\mu_1}{\beta_B\Delta_B}
    \right)
    +
    \arcsin\!\left(
    \frac{2\eta_B}{\Delta_B}
    \right)
    \nonumber\\
    &+
    \arcsin\!\left[
    \frac{
    16(1+\eta_c)(\mu_1+\eta_B)
    }{
    7\beta_B\Delta_B
    }
    \right].
    \label{eq:appendix_finite_exact_bound}
\end{align}

In any fixed perturbative regime in which the three inverse-sine
arguments are bounded away from one, the inequality
$\arcsin t\le C t$ for a universal constant $C$ yields
\begin{equation}
    \Theta(
    \mathbf y,
    \widehat{\mathbf y})
    \le
    C_{\rm stab}
    \left[
    \frac{\mu_1}{\beta_B\Delta_B}
    +
    \frac{\eta_B}{\Delta_B}
    +
    \frac{(1+\eta_c)(\mu_1+\eta_B)}
    {\beta_B\Delta_B}
    \right].
\end{equation}

This is the abstract bound stated in
Theorem~\ref{thm:finite_stability}.  One admissible choice of the abstract
constants is therefore obtained from
Eqs.~\eqref{eq:appendix_concrete_gap_conditions}
and~\eqref{eq:appendix_overlap_requirement}.

The key conclusion is that the permitted finite error radius is controlled
by the fixed spectral gap $\Delta_B$ and the persistent overlap
$\beta_B$.  It does not have to shrink proportionally to the vanishing
eigenvalue $\mu_1$.
\end{proof}

\subsection{Near-singular estimated soft modes}
\label{app:near_singular_regularization}

The finite-stability theorem assumes that $\widehat B_m$ is nonsingular so
that $\widehat{\mathbf y}=\widehat B_m^{-1}\widehat{\mathbf c}_m$ is
well defined.  This subsection treats the finite-precision edge case in
which the isolated estimated soft eigenvalue is zero or numerically
indistinguishable from zero.

Let
\begin{equation}
    \widehat B_m
    =
    \widehat\mu_1
    \widehat{\mathbf v}_1
    \widehat{\mathbf v}_1^\dagger
    +
    \sum_{i\ge2}
    \widehat\mu_i
    \widehat{\mathbf v}_i
    \widehat{\mathbf v}_i^\dagger,
\end{equation}
and suppose the finite-error conditions of
Theorem~\ref{thm:finite_stability} hold, so that
\begin{equation}
    |\widehat c_1|
    :=
    |\langle
    \widehat{\mathbf v}_1,
    \widehat{\mathbf c}_m
    \rangle|
    \ge
    \frac{\beta_B}{2}
\end{equation}
and the transverse spectrum remains bounded away from zero.

If
\begin{equation}
    |\widehat\mu_1|<\nu
\end{equation}
for a chosen numerical floor $\nu>0$, replace only the isolated soft
eigenvalue by a nonzero value $\widetilde\mu_1$ with
\begin{equation}
    |\widetilde\mu_1|=\nu,
\end{equation}
preserving the sign of $\widehat\mu_1$ whenever it is nonzero (and taking the positive sign in the exactly singular positive-definite limit)
while leaving all eigenvectors and all remaining eigenvalues unchanged.
Define
\begin{equation}
    \widetilde B_m
    :=
    \widetilde\mu_1
    \widehat{\mathbf v}_1
    \widehat{\mathbf v}_1^\dagger
    +
    \sum_{i\ge2}
    \widehat\mu_i
    \widehat{\mathbf v}_i
    \widehat{\mathbf v}_i^\dagger
\end{equation}
and
\begin{equation}
    \widetilde{\mathbf y}
    :=
    \widetilde B_m^{-1}
    \widehat{\mathbf c}_m.
\end{equation}

Decompose
\begin{equation}
    \widetilde{\mathbf y}
    =
    \frac{\widehat c_1}{\widetilde\mu_1}
    \widehat{\mathbf v}_1
    +
    \widetilde{\mathbf y}_\perp.
\end{equation}
The transverse component is unchanged from the finite-error analysis and
satisfies
\begin{equation}
    \|\widetilde{\mathbf y}_\perp\|_2
    \le
    \frac{8(1+\eta_c)}
    {7\Delta_B}.
\end{equation}
Meanwhile,
\begin{equation}
    \|\widetilde{\mathbf y}\|_2
    \ge
    \frac{|\widehat c_1|}{\nu}
    \ge
    \frac{\beta_B}{2\nu}.
\end{equation}
Therefore
\begin{equation}
    \boxed{
    \sin\Theta(
    \widetilde{\mathbf y},
    \widehat{\mathbf v}_1)
    \le
    \frac{
    16(1+\eta_c)\nu
    }{
    7\beta_B\Delta_B
    }.}
    \label{eq:regularized_soft_mode_angle}
\end{equation}

Thus the additional normalized-direction error introduced by the spectral
floor is proportional to $\nu$.  Choosing
\begin{equation}
    \nu
    =
    O\!\left(
    \frac{\beta_B\Delta_B}{1+\eta_c}
    \epsilon
    \right)
\end{equation}
is sufficient to make this contribution $O(\epsilon)$.  In the favorable
regime where $\beta_B,\Delta_B=\Omega(1)$, this reduces to
\begin{equation}
    \nu=O(\epsilon).
\end{equation}

Crucially, the required numerical floor does not scale with the physical
soft eigenvalue $\mu_1$.  The procedure therefore regularizes only an
exactly or numerically singular \emph{estimate}; it does not truncate the
solution-relevant physical soft mode.

\subsection{Galerkin transfer}
\label{app:galerkin_transfer_proof}

\begin{proof}[Proof of Theorem~\ref{thm:galerkin_transfer}]

The proof transfers the existence of an accurate vector in the Krylov
subspace to the accuracy of the Galerkin solution actually returned by
QKS.  The argument proceeds by comparing the two vectors within the same
subspace and controlling only the component that changes the normalized
solution direction.

\paragraph{Step 1: Represent the candidate and Galerkin solution in the same subspace.}

Both the candidate vector $z$ and the QKS Galerkin solution $x_m$ belong
to the same Krylov subspace $\mathcal K_m$.  For the analysis only, let
$Q_m$ be an orthonormal basis for $\mathcal K_m$.  Then there exist
coordinate vectors $\mathbf y_z$ and $\mathbf y$ such that
\begin{equation}
    z=Q_m\mathbf y_z,
    \qquad
    x_m=Q_m\mathbf y.
    \label{eq:galerkin_same_subspace_coordinates}
\end{equation}
Because $Q_m^\dagger Q_m=I$, this change of coordinates preserves both
Euclidean norms and angles:
\begin{equation}
    \|\mathbf y_z\|_2=\|z\|_2,
    \qquad
    \Theta(\mathbf y,\mathbf y_z)
    =
    \Theta(x_m,z).
    \label{eq:galerkin_isometry}
\end{equation}

\paragraph{Step 2: Relate the Galerkin solution to the accurate candidate.}

The orthonormal reduced operator is
\begin{equation}
    B_m=Q_m^\dagger A Q_m.
\end{equation}
The Galerkin solution satisfies
\begin{equation}
    B_m\mathbf y
    =
    Q_m^\dagger |b\rangle.
\end{equation}
Since the exact solution obeys
\begin{equation}
    A x=|b\rangle,
\end{equation}
we may equivalently write
\begin{equation}
    B_m\mathbf y
    =
    Q_m^\dagger A x.
    \label{eq:galerkin_exact_rhs}
\end{equation}

On the other hand, because $z=Q_m\mathbf y_z$,
\begin{equation}
    B_m\mathbf y_z
    =
    Q_m^\dagger A Q_m\mathbf y_z
    =
    Q_m^\dagger A z.
    \label{eq:galerkin_candidate_rhs}
\end{equation}
Subtracting
Eq.~\eqref{eq:galerkin_candidate_rhs}
from
Eq.~\eqref{eq:galerkin_exact_rhs}
gives
\begin{equation}
    B_m(\mathbf y-\mathbf y_z)
    =
    Q_m^\dagger A(x-z).
\end{equation}
Hence
\begin{equation}
    \boxed{
    \mathbf y-\mathbf y_z
    =
    B_m^{-1}Q_m^\dagger A(x-z).}
    \label{eq:galerkin_coordinate_difference}
\end{equation}

Thus, the difference between the Galerkin solution and the accurate
candidate is generated by the subspace-approximation error $x-z$ through
the reduced inverse $B_m^{-1}$.

\paragraph{Step 3: Bound the directional error between $x_m$ and $z$.}

For normalized-state preparation, only the component transverse to the
Galerkin solution changes its direction.  Define
\begin{equation}
    P_{\mathbf y}^{\perp}
    :=
    I-
    \frac{\mathbf y\mathbf y^\dagger}{\|\mathbf y\|_2^2},
\end{equation}
so that
\begin{equation}
    P_{\mathbf y}^{\perp}\mathbf y=0.
\end{equation}
Applying $P_{\mathbf y}^{\perp}$ to
Eq.~\eqref{eq:galerkin_coordinate_difference} yields
\begin{align}
    P_{\mathbf y}^{\perp}\mathbf y_z
    &=
    -P_{\mathbf y}^{\perp}
    B_m^{-1}Q_m^\dagger A(x-z).
\end{align}
Therefore
\begin{align}
    \|P_{\mathbf y}^{\perp}\mathbf y_z\|_2
    &\le
    \|P_{\mathbf y}^{\perp}B_m^{-1}\|_2
    \,
    \|Q_m^\dagger A(x-z)\|_2
    \nonumber\\
    &\le
    K_B\|x-z\|_2.
    \label{eq:galerkin_transverse_bound}
\end{align}
Here we used
\begin{equation}
    K_B
    =
    \|P_{\mathbf y}^{\perp}B_m^{-1}\|_2,
\end{equation}
together with
$\|Q_m^\dagger\|_2=1$ and the normalization
$\|A\|_2\le1$ adopted in the main text.

By assumption,
\begin{equation}
    \|x-z\|_2
    \le
    \epsilon_K\|x\|_2.
    \label{eq:galerkin_candidate_error_recall}
\end{equation}
Since $\epsilon_K<1$, the reverse triangle inequality gives
\begin{align}
    \|z\|_2
    &\ge
    \|x\|_2-\|x-z\|_2
    \nonumber\\
    &\ge
    (1-\epsilon_K)\|x\|_2.
    \label{eq:galerkin_candidate_norm_lower}
\end{align}
Using the isometry of $Q_m$,
$\|\mathbf y_z\|_2=\|z\|_2$, and hence
\begin{align}
    \sin\Theta(x_m,z)
    &=
    \sin\Theta(\mathbf y,\mathbf y_z)
    \nonumber\\
    &=
    \frac{
    \|P_{\mathbf y}^{\perp}\mathbf y_z\|_2
    }{
    \|\mathbf y_z\|_2
    }
    \nonumber\\
    &\le
    \frac{
    K_B\|x-z\|_2
    }{
    (1-\epsilon_K)\|x\|_2
    }
    \nonumber\\
    &\le
    \boxed{
    \frac{K_B\epsilon_K}{1-\epsilon_K}.}
    \label{eq:galerkin_candidate_angle_proof}
\end{align}
This proves
Eq.~\eqref{eq:galerkin_candidate_angle}.

\paragraph{Step 4: Transfer the bound from the candidate to the exact solution.}

It remains to compare the candidate $z$ with the exact solution $x$.
Let $P_z^\perp$ denote the orthogonal projector onto the complement of
$\operatorname{span}\{z\}$.  Since $P_z^\perp z=0$,
\begin{equation}
    P_z^\perp x
    =
    P_z^\perp(x-z),
\end{equation}
and therefore
\begin{align}
    \sin\Theta(z,x)
    &=
    \frac{\|P_z^\perp x\|_2}{\|x\|_2}
    \nonumber\\
    &\le
    \frac{\|x-z\|_2}{\|x\|_2}
    \nonumber\\
    &\le
    \epsilon_K.
    \label{eq:galerkin_candidate_exact_angle}
\end{align}
Hence
\begin{equation}
    \Theta(z,x)
    \le
    \arcsin(\epsilon_K).
\end{equation}

The phase-invariant angle obeys the triangle inequality, so
\begin{equation}
    \Theta(x_m,x)
    \le
    \Theta(x_m,z)
    +
    \Theta(z,x).
\end{equation}
Using
Eq.~\eqref{eq:galerkin_candidate_angle_proof}
and
Eq.~\eqref{eq:galerkin_candidate_exact_angle},
we obtain
\begin{equation}
    \boxed{
    \Theta(x_m,x)
    \le
    \arcsin\!\left(
    \frac{K_B\epsilon_K}{1-\epsilon_K}
    \right)
    +
    \arcsin(\epsilon_K),
    }
    \label{eq:galerkin_transfer_bound_proof}
\end{equation}
provided
$K_B\epsilon_K/(1-\epsilon_K)\le1$.

Finally, if $K_B=O(1)$ and $\epsilon_K\to0$, then
$\arcsin t=t+O(t^3)$ gives
\begin{equation}
    \Theta(x_m,x)
    =
    O(\epsilon_K).
\end{equation}
Thus a sufficiently accurate representation of the exact solution inside
the QKS subspace is inherited by the actual Galerkin output whenever the
directional sensitivity $K_B$ remains bounded.
\end{proof}

\subsection{Projected-measurement precision}
\label{app:measurement_precision_proof}

\begin{proof}[Proof of Corollary~\ref{cor:no_inverse_precision}]
Theorem~\ref{thm:SHg_Bc} converts errors in the actually estimated
quantities into perturbations of the orthonormal reduced system:
\begin{equation}
    \|\widehat B_m-B_m\|_2
    \le
    \frac{\|\Delta H_m\|_2}{\sigma_S}
    =
    \eta_B^{\rm eff},
\end{equation}
and
\begin{equation}
    \|\widehat{\mathbf c}_m-\mathbf c_m\|_2
    \le
    \frac{\|\Delta\mathbf g_m\|_2}{\sqrt{\sigma_S}}
    =
    \eta_c^{\rm eff}.
\end{equation}
If these effective errors satisfy the sufficient finite-stability
conditions of Theorem~\ref{thm:finite_stability}, then
Eq.~\eqref{eq:finite_state_bound} follows directly.

When $\sigma_S,\Delta_B,\beta_B=\Omega(1)$, the coefficients relating
$(\Delta H_m,\Delta\mathbf g_m)$ to
$(\eta_B^{\rm eff},\eta_c^{\rm eff})$
are independent of $\mu_1$.  Hence the additive precisions required for
$H_m$ and $\mathbf g_m$ can be chosen independently of the retained
vanishing eigenvalue, so projected-system estimation acquires no
inverse-$\mu_1$ dependence through its precision requirement.
\end{proof}

\section{End-to-End Composition and Access-Model Details}
\label{app:e2e_and_access}

\subsection{Proof of the combined end-to-end theorem}
\label{app:combined_proof}

\begin{proof}[Proof of Theorem~\ref{thm:combined}]
Theorem~\ref{thm:structural} supplies a vector
$z_m\in\mathcal K_m$ with relative Euclidean error at most $\epsilon_K$
and, under the fixed structural parameters of that theorem,
\[
m=O(1),\qquad T=O(1),\qquad p_{\rm succ}^{-1}=O(1)
\]
with respect to the retained vanishing eigenvalue.
Theorem~\ref{thm:solution_aligned} gives
\[
K_B=O(1),\qquad K_c=O(\mu_1),
\]
although $\|B_m^{-1}\|_2=\Theta(\mu_1^{-1})$.
Theorem~\ref{thm:galerkin_transfer} then transfers the subspace
approximation to the Galerkin output, giving
$\Theta(x_m,x)=O(\epsilon_K)$.

For finite projected-system errors,
Theorem~\ref{thm:SHg_Bc} maps errors in
$(H_m,\mathbf g_m)$ to perturbations of
$(B_m,\mathbf c_m)$ with constants independent of $\mu_1$, and
Theorem~\ref{thm:finite_stability} shows that those perturbations do not
create an inverse-$\mu_1$ normalized-state sensitivity.
Corollary~\ref{cor:no_inverse_precision} therefore permits the projected
quantities to be estimated to fixed target-dependent additive precision
without requiring a precision proportional to $\mu_1$.

Substituting these bounds into the general QKS complexity decomposition
of Theorem~\ref{thm:general_complexity} controls every intrinsic
condition-sensitive channel.  Any remaining inverse dependence on
$\mu_1$ could therefore arise only from the chosen access model or
estimator itself, which is excluded by the final hypothesis of the theorem.
\end{proof}

\subsection{Proof of the end-to-end complexity corollary}
\label{app:e2e_corollary_proof}

\begin{proof}[Proof of Corollary~\ref{cor:e2e}]
Under Theorem~\ref{thm:combined},
$m$, $T$, $p_{\rm succ}^{-1}$, and $K_B$ are bounded with respect to
$\mu_1$, while $K_c$ decreases as $O(\mu_1)$.
The projected-system precision requirement is independent of inverse
powers of $\mu_1$ by Corollary~\ref{cor:no_inverse_precision}.
Hence the general complexity expression reduces, with respect to the
retained-small-eigenvalue scaling, to the remaining projected-system
implementation cost $C_{\rm proj}^{\rm dir}(\epsilon)$ plus bounded
structural, classical-solve, and reconstruction terms.  This gives
Eq.~\eqref{eq:e2e_reduced_complexity}.
\end{proof}

\subsection{Detailed derivations for the access-model comparison}
\label{app:access_model_derivations}

This appendix records the counting used in
Section~\ref{subsec:access_models}.  The purpose is to compare
implementation-dependent costs after the condition-sensitive precision
requirements have already been controlled by
Corollary~\ref{cor:no_inverse_precision}.

\paragraph{Entrywise precision from norm precision.}
If every entry of an $m\times m$ matrix perturbation has magnitude at most
$\xi$, then
\begin{equation}
 \|\Delta M\|_2
 \le
 \|\Delta M\|_\infty
 \le
 m\xi.
\end{equation}
Thus $\|\Delta H_m\|_2=O(\epsilon)$ is guaranteed by
$\xi_H=\Theta(\epsilon/m)$.  Likewise
\begin{equation}
 \|\Delta\mathbf g_m\|_2
 \le
 \sqrt m\,\max_j|\Delta g_j|,
\end{equation}
so $\xi_g=\Theta(\epsilon/\sqrt m)$ suffices.  Because
$H_m$ depends only on the lag $k-j$, only $O(m)$ distinct projected-operator
entries must be estimated.

\paragraph{Block encoding with direct sampling.}
A block encoding of normalization $\alpha$ exposes matrix elements of
$A/\alpha$.  Estimating one $H_m$ lag to absolute accuracy $\xi_H$ by a
Hadamard test therefore requires
\begin{equation}
 O\!\left(\frac{\alpha^2}{\xi_H^2}\right)
\end{equation}
samples.  Across $O(m)$ lags this gives
\begin{equation}
 O\!\left(\frac{\alpha^2m^3}{\epsilon^2}\right).
\end{equation}
The corresponding $\mathbf g_m$ count is
$O(m^2/\epsilon^2)$.  Multiplication by the cost of preparing the relevant
time-evolved states gives
Eq.~\eqref{eq:block_encoding_sampling_cost}.

\paragraph{Block encoding with coherent estimation.}
Replacing independent sampling by a coherent $O(1/\xi)$ expectation-value
estimator changes the two query counts to
\begin{equation}
 O\!\left(\frac{\alpha m^2}{\epsilon}\right),
 \qquad
 O\!\left(\frac{m^{3/2}}{\epsilon}\right),
\end{equation}
which yields Eq.~\eqref{eq:block_encoding_coherent_cost}.  This option uses
deeper coherent circuits and stronger access assumptions than direct
sampling; it is included to distinguish access-model dependence from
estimator dependence.

\paragraph{Sparse access with short-time finite differences.}
For Eq.~\eqref{eq:xu_access_model_estimator}, an element error
$\xi_H$ has the form
\begin{equation}
 \xi_H
 =
 O(h)
 +
 O(\delta_H/h)
 +
 O(\eta_{\rm sim}/h).
\end{equation}
Balancing the first two terms gives
$h=\Theta(\xi_H)$ and
$\delta_H=\Theta(\xi_H^2)$; choosing
$\eta_{\rm sim}=O(\xi_H^2)$ adds only polylogarithmic simulation-precision
overhead.  Coherent estimation of the $O(m)$ distinct dynamical
correlations therefore requires
\begin{equation}
 \widetilde O\!\left(
 \frac{\sqrt m}{\xi_H^2}
 \right)
 =
 \widetilde O\!\left(
 \frac{m^{5/2}}{\epsilon^2}
 \right)
\end{equation}
oracle calls.  Direct estimation of $\mathbf g_m$ requires
$\widetilde O(m/\epsilon)$ calls.  Multiplying by the
normalized sparse-simulation cost
$\widetilde O(sT\,\operatorname{polylog}N)$ gives
Eq.~\eqref{eq:sparse_fd_access_cost}.

\paragraph{Pauli/LCU direct projection.}
For Eq.~\eqref{eq:pauli_access_decomposition}, estimating every term-level
correlation to accuracy $\delta_P$ produces an $H_m$-element error at most
$\Lambda_A\delta_P$.  Hence
\begin{equation}
 \delta_P
 =
 \Theta\!\left(
 \frac{\epsilon}{m\Lambda_A}
 \right).
\end{equation}
There are $O(mL)$ correlations.  The coherent multiple-expectation
estimator of Ref.~\cite{Huggins2022} therefore uses
\begin{equation}
 \widetilde O\!\left(
 \frac{\sqrt{mL}}{\delta_P}
 \right)
 =
 \widetilde O\!\left(
 \frac{\Lambda_A m^{3/2}\sqrt L}{\epsilon}
 \right)
\end{equation}
coherent oracle calls.  The $\mathbf g_m$ cost is
$\widetilde O(m/\epsilon)$.  If real-time evolution is implemented from the
same LCU representation by qubitization, the per-call evolution cost is
$\widetilde O(\Lambda_A T+1)$, yielding
Eq.~\eqref{eq:pauli_access_cost}.

\paragraph{Interpretation.}
These derivations should not be read as an optimality claim for any access
model.  They provide representative end-to-end realizations that expose
which parameters are implementation dependent.  In particular, the
short-time estimator pays an additional finite-difference precision cost,
whereas direct block-encoded or Pauli matrix-element estimation avoids that
penalty.  Across all cases, the new directional stability theorem removes
the need for the projected-system precision itself to scale as
$O(\mu_1)$.

\subsection{Proof of access-model robustness}
\label{app:access_robustness_proof}

\begin{proof}[Proof of Corollary~\ref{cor:access_model_robustness}]
Theorem~\ref{thm:combined} gives $m,T=O(1)$ with respect to $\mu_1$,
and Corollary~\ref{cor:no_inverse_precision} makes the required
projected-system accuracy independent of inverse powers of $\mu_1$.
Inspection of
Eqs.~\eqref{eq:block_encoding_sampling_cost},
\eqref{eq:block_encoding_coherent_cost},
\eqref{eq:sparse_fd_access_cost}, and
\eqref{eq:pauli_access_cost}
then shows that the only remaining factors are the target accuracy and the
parameters of the selected access model.  If those parameters do not
themselves scale inversely with $\mu_1$, none of the displayed
implementation costs acquires such a dependence.
\end{proof}

\section{Proofs for the Relation to Existing Beyond-Conditioning Methods}
\label{app:separation_proofs}

\subsection{Truncatable-regime theorem}
\label{app:truncatable_proof}

\begin{proof}[Proof of Theorem~\ref{thm:truncatable_branch}]

The purpose of this proof is to show that, when all solution-relevant
active eigenvalues remain bounded away from zero, the QKS subspace has only
small leakage into the inactive spectral sector.  This prevents arbitrarily
small inactive eigenvalues from making the reduced QKS problem ill
conditioned.

We use the standing normalization
\begin{equation}
    \|A\|_2\le1.
\end{equation}

\paragraph{Step 1: Convert solution-tail control into inactive input-weight control.}

Recall that
\begin{equation}
    x=A^{-1}|b\rangle
    =
    \sum_{\ell=1}^{N}
    \frac{b_\ell}{\lambda_\ell}|u_\ell\rangle.
\end{equation}
Assumption (A1) gives
\begin{equation}
    \|P_{\mathcal I}x\|_2
    \le
    \epsilon_{\rm sub}\|x\|_2.
    \label{eq:app_trunc_solution_tail}
\end{equation}

Since $\lambda_\ell\le1$ under the normalization $\|A\|_2\le1$,
\begin{align}
    \|P_{\mathcal I}b\|_2^2
    &=
    \sum_{\ell\in\mathcal I}|b_\ell|^2
    \nonumber\\
    &=
    \sum_{\ell\in\mathcal I}
    \lambda_\ell^2
    \frac{|b_\ell|^2}{\lambda_\ell^2}
    \nonumber\\
    &\le
    \sum_{\ell\in\mathcal I}
    \frac{|b_\ell|^2}{\lambda_\ell^2}
    =
    \|P_{\mathcal I}x\|_2^2.
\end{align}
Therefore
\begin{equation}
    \|P_{\mathcal I}b\|_2
    \le
    \epsilon_{\rm sub}\|x\|_2.
    \label{eq:app_inactive_b_first}
\end{equation}

To eliminate $\|x\|_2$ from this bound, use the assumption
\begin{equation}
    \lambda_a\ge\Delta_{\mathcal A},
    \qquad
    a\in\mathcal A.
\end{equation}
Then
\begin{align}
    \|P_{\mathcal A}b\|_2^2
    &=
    \sum_{a\in\mathcal A}
    \lambda_a^2
    \frac{|b_a|^2}{\lambda_a^2}
    \nonumber\\
    &\ge
    \Delta_{\mathcal A}^2
    \sum_{a\in\mathcal A}
    \frac{|b_a|^2}{\lambda_a^2}.
\end{align}
By the definition of the active set,
\begin{equation}
    \sum_{a\in\mathcal A}
    \frac{|b_a|^2}{\lambda_a^2}
    \ge
    (1-\epsilon_{\rm sub}^2)\|x\|_2^2.
\end{equation}
Hence
\begin{equation}
    \|P_{\mathcal A}b\|_2
    \ge
    \Delta_{\mathcal A}
    \sqrt{1-\epsilon_{\rm sub}^2}
    \|x\|_2.
\end{equation}
Since $|b\rangle$ is normalized,
$\|P_{\mathcal A}b\|_2\le1$, and therefore
\begin{equation}
    \|x\|_2
    \le
    \frac{1}{
    \Delta_{\mathcal A}
    \sqrt{1-\epsilon_{\rm sub}^2}}.
    \label{eq:app_x_upper_trunc}
\end{equation}

Combining
Eqs.~\eqref{eq:app_inactive_b_first}
and~\eqref{eq:app_x_upper_trunc} gives
\begin{equation}
    \boxed{
    \|P_{\mathcal I}b\|_2
    \le
    \frac{
    \epsilon_{\rm sub}
    }{
    \Delta_{\mathcal A}
    \sqrt{1-\epsilon_{\rm sub}^2}
    }.}
    \label{eq:app_inactive_b_bound}
\end{equation}

Thus small normalized-solution weight outside the active sector implies
small inactive amplitude in the input state whenever the active spectrum
remains separated from zero.

\paragraph{Step 2: Bound the inactive part of the time-evolution Krylov basis.}

The $j$th QKS basis vector is
\begin{equation}
    |\phi_j\rangle
    =
    e^{-ij\tau A}|b\rangle.
\end{equation}
Because $P_{\mathcal I}$ is a spectral projector of $A$,
it commutes with the time evolution.  Therefore
\begin{equation}
    P_{\mathcal I}|\phi_j\rangle
    =
    e^{-ij\tau A}P_{\mathcal I}|b\rangle,
\end{equation}
and hence
\begin{equation}
    \|P_{\mathcal I}|\phi_j\rangle\|_2
    =
    \|P_{\mathcal I}b\|_2.
\end{equation}

Writing
\begin{equation}
    \Phi_m
    =
    [|\phi_0\rangle,\ldots,|\phi_{m-1}\rangle],
\end{equation}
we obtain
\begin{align}
    \|P_{\mathcal I}\Phi_m\|_2
    &\le
    \|P_{\mathcal I}\Phi_m\|_F
    \nonumber\\
    &=
    \sqrt m\,
    \|P_{\mathcal I}b\|_2.
    \label{eq:app_inactive_Phi}
\end{align}

\paragraph{Step 3: Transfer the bound through orthonormalization.}

The orthonormal QKS basis is
\begin{equation}
    Q_m
    =
    \Phi_mS_m^{-1/2}.
\end{equation}
Assumption (A2), together with the structural theorem, gives
\begin{equation}
    \lambda_{\min}(S_m)
    \ge
    s_0^2,
\end{equation}
and consequently
\begin{equation}
    \|S_m^{-1/2}\|_2
    \le
    \frac1{s_0}.
\end{equation}

Therefore
\begin{align}
    \eta_{\rm leak}
    :=
    \|P_{\mathcal I}Q_m\|_2
    &\le
    \|P_{\mathcal I}\Phi_m\|_2
    \|S_m^{-1/2}\|_2
    \nonumber\\
    &\le
    \frac{\sqrt m}{s_0}
    \|P_{\mathcal I}b\|_2.
\end{align}
Using Eq.~\eqref{eq:app_inactive_b_bound},
\begin{equation}
    \boxed{
    \eta_{\rm leak}
    \le
    \frac{
    \sqrt m\,\epsilon_{\rm sub}
    }{
    s_0\Delta_{\mathcal A}
    \sqrt{1-\epsilon_{\rm sub}^2}
    }.}
    \label{eq:app_leakage_final}
\end{equation}

This proves the first claim.

\paragraph{Step 4: Convert small subspace leakage into a reduced spectral gap.}

Let $\mathbf y\in\mathbb C^m$ satisfy
$\|\mathbf y\|_2=1$, and define
\begin{equation}
    |q\rangle
    :=
    Q_m\mathbf y.
\end{equation}
Since $Q_m^\dagger Q_m=I$,
\begin{equation}
    \|q\|_2=1.
\end{equation}
Moreover,
\begin{equation}
    \|P_{\mathcal I}q\|_2
    \le
    \eta_{\rm leak}
    \le
    \eta_0,
\end{equation}
so
\begin{equation}
    \|P_{\mathcal A}q\|_2^2
    =
    1-\|P_{\mathcal I}q\|_2^2
    \ge
    1-\eta_0^2.
\end{equation}

Because every active eigenvalue satisfies
$\lambda_a\ge\Delta_{\mathcal A}$ and $A$ is positive definite,
\begin{align}
    \langle q|A|q\rangle
    &\ge
    \Delta_{\mathcal A}
    \|P_{\mathcal A}q\|_2^2
    \nonumber\\
    &\ge
    \Delta_{\mathcal A}
    (1-\eta_0^2).
\end{align}
Using
\begin{equation}
    B_m=Q_m^\dagger A Q_m,
\end{equation}
we therefore have
\begin{equation}
    \mathbf y^\dagger B_m\mathbf y
    \ge
    \Delta_{\mathcal A}
    (1-\eta_0^2)
\end{equation}
for every unit vector $\mathbf y$.

The min--max principle then gives
\begin{equation}
    \boxed{
    \lambda_{\min}(B_m)
    \ge
    \Delta_{\mathcal A}
    (1-\eta_0^2).}
\end{equation}

Hence globally small eigenvalues outside the active spectral sector can
drive the global condition number to infinity without making the QKS
reduced system ill conditioned.
\end{proof}

\subsection{Proof of the overlap corollary}
\label{app:overlap_corollary_proof}

\begin{proof}[Proof of Corollary~\ref{cor:overlap_truncation}]
Theorem~\ref{thm:truncatable_branch} gives
$\lambda_{\min}(B_m)=\Omega(1)$, and therefore
\[
K_B\le \|B_m^{-1}\|_2=O(1).
\]
Theorem~\ref{thm:structural} simultaneously gives
$m,T,p_{\rm succ}^{-1}=O(1)$ under the structural assumptions.
Thus the global condition number may diverge through inactive eigenvalues
while every listed QKS quantity remains bounded with respect to those
eigenvalues.
\end{proof}

\subsection{Explicit separation}
\label{app:separation_su_proof}

\begin{proof}[Proof of Theorem~\ref{thm:separation_su}]

The purpose of this construction is to exhibit a family for which full
inverse amplification diverges, while the QKS structural and normalized
directional quantities remain bounded.

\paragraph{Step 1: Global conditioning and the normalized solution.}

Consider
\begin{equation}
    A_\varepsilon
    =
    \begin{pmatrix}
        \varepsilon & 0\\
        0 & 1
    \end{pmatrix},
    \qquad
    |b\rangle
    =
    \frac1{\sqrt2}
    \begin{pmatrix}
        1\\
        1
    \end{pmatrix}.
\end{equation}
Since the eigenvalues are $\varepsilon$ and $1$,
\begin{equation}
    \kappa(A_\varepsilon)
    =
    \varepsilon^{-1}.
\end{equation}

Moreover,
\begin{equation}
    A_\varepsilon^{-1}|b\rangle
    =
    \frac1{\sqrt2}
    \begin{pmatrix}
        \varepsilon^{-1}\\
        1
    \end{pmatrix}.
\end{equation}
Its norm is
\begin{equation}
    \|A_\varepsilon^{-1}|b\rangle\|_2
    =
    \frac1{\sqrt2\varepsilon}
    \sqrt{1+\varepsilon^2}.
\end{equation}
Therefore the normalized solution is
\begin{equation}
    |x_\varepsilon\rangle
    =
    \frac{
    (1,\varepsilon)^T
    }{
    \sqrt{1+\varepsilon^2}
    },
\end{equation}
which approaches the soft eigendirection as
$\varepsilon\to0$.

\paragraph{Step 2: The full inverse-amplification quantity diverges.}

Since $A_\varepsilon$ is Hermitian,
\begin{equation}
    A_\varepsilon^{-1\dagger}
    =
    A_\varepsilon^{-1}.
\end{equation}
Applying the inverse to the normalized solution gives
\begin{equation}
    A_\varepsilon^{-1}|x_\varepsilon\rangle
    =
    \frac1{\sqrt{1+\varepsilon^2}}
    \begin{pmatrix}
        \varepsilon^{-1}\\
        \varepsilon
    \end{pmatrix}.
\end{equation}
Hence
\begin{equation}
    \chi_{\rm filt}(A_\varepsilon,b)
    =
    \frac{
    \sqrt{\varepsilon^{-2}+\varepsilon^2}
    }{
    \sqrt{1+\varepsilon^2}
    }.
\end{equation}
As $\varepsilon\to0$,
\begin{equation}
    \boxed{
    \chi_{\rm filt}(A_\varepsilon,b)
    =
    \Theta(\varepsilon^{-1}).}
\end{equation}

Thus the full inverse-amplification criterion inherits the divergent small
eigenvalue.

\paragraph{Step 3: The two-state QKS basis remains uniformly stable.}

Choose a fixed
\begin{equation}
    \tau\notin2\pi\mathbb Z.
\end{equation}
For $m=2$, the QKS basis is
\begin{equation}
    \Phi_2
    =
    \left[
    |b\rangle,\;
    e^{-i\tau A_\varepsilon}|b\rangle
    \right],
\end{equation}
with
\begin{equation}
    e^{-i\tau A_\varepsilon}|b\rangle
    =
    \frac1{\sqrt2}
    \begin{pmatrix}
        e^{-i\tau\varepsilon}\\
        e^{-i\tau}
    \end{pmatrix}.
\end{equation}

The two columns are linearly independent whenever
\begin{equation}
    e^{-i\tau\varepsilon}
    \neq
    e^{-i\tau}.
\end{equation}
As $\varepsilon\to0$, their determinant approaches a nonzero multiple of
\begin{equation}
    e^{-i\tau}-1,
\end{equation}
which is nonzero by the choice of $\tau$.

More explicitly, the Gram matrix has the form
\begin{equation}
    S_2
    =
    \begin{pmatrix}
        1 & g_\varepsilon\\
        g_\varepsilon^* & 1
    \end{pmatrix},
\end{equation}
where
\begin{equation}
    g_\varepsilon
    =
    \frac{
    e^{-i\tau\varepsilon}+e^{-i\tau}
    }{2}.
\end{equation}
Therefore
\begin{equation}
    \lambda_{\min}(S_2)
    =
    1-|g_\varepsilon|.
\end{equation}
Using
\begin{equation}
    |g_\varepsilon|
    =
    \left|
    \cos\!\left(
    \frac{\tau(1-\varepsilon)}{2}
    \right)
    \right|,
\end{equation}
we obtain
\begin{equation}
    \lambda_{\min}(S_2)
    \longrightarrow
    1-
    \left|\cos\!\left(\frac{\tau}{2}\right)\right|
    >0.
\end{equation}
Hence, for sufficiently small $\varepsilon$,
\begin{equation}
    \boxed{
    \lambda_{\min}(S_2)=\Omega(1).}
\end{equation}

Thus the QKS basis does not become singular as the physical eigenvalue
$\varepsilon$ vanishes.

\paragraph{Step 4: The QKS subspace captures the full relevant space with bounded structural resources.}

Since the two basis vectors are linearly independent in a
two-dimensional Hilbert space,
\begin{equation}
    \mathcal K_2
    =
    \mathbb C^2.
\end{equation}
Therefore
\begin{equation}
    m=2,
    \qquad
    T=\tau=O(1).
\end{equation}

The Gram lower bound also gives
\begin{equation}
    p_{\rm succ}
    \ge
    \frac{\lambda_{\min}(S_2)}{m},
\end{equation}
and hence
\begin{equation}
    \boxed{
    p_{\rm succ}^{-1}=O(1).}
\end{equation}

\paragraph{Step 5: The reduced system retains the small eigenvalue.}

Let
\begin{equation}
    Q_2
    =
    \Phi_2S_2^{-1/2}.
\end{equation}
Because $Q_2$ is a square orthonormal matrix, it is unitary.  Therefore
\begin{equation}
    B_2
    =
    Q_2^\dagger A_\varepsilon Q_2
\end{equation}
is unitarily similar to $A_\varepsilon$ and has eigenvalues
\begin{equation}
    \mu_1=\varepsilon,
    \qquad
    \mu_2=1.
\end{equation}

Let $|u_1\rangle,|u_2\rangle$ be the eigenvectors of
$A_\varepsilon$, and define
\begin{equation}
    \mathbf v_i
    :=
    Q_2^\dagger|u_i\rangle.
\end{equation}
These are eigenvectors of $B_2$.

The orthonormal reduced right-hand side is
\begin{equation}
    \mathbf c_2
    =
    Q_2^\dagger|b\rangle.
\end{equation}
Therefore
\begin{align}
    |\langle\mathbf v_i,\mathbf c_2\rangle|
    &=
    |\langle u_i|Q_2Q_2^\dagger|b\rangle|
    \nonumber\\
    &=
    |\langle u_i|b\rangle|
    =
    \frac1{\sqrt2}.
\end{align}
Hence the small eigenvalue is fully retained and remains
solution-relevant:
\begin{equation}
    |c_1|
    =
    \frac1{\sqrt2}
    =
    \Omega(1).
\end{equation}

\paragraph{Step 6: Directional sensitivity remains bounded.}

The hypotheses of
Theorem~\ref{thm:solution_aligned}
are satisfied with
\begin{equation}
    \mu_1=\varepsilon,
    \qquad
    \mu_2=1,
    \qquad
    |c_1|=\Omega(1).
\end{equation}
Therefore
\begin{equation}
    \boxed{
    K_B=O(1),
    \qquad
    K_c=O(\varepsilon).}
\end{equation}

We have therefore constructed a family for which
\begin{equation}
    \chi_{\rm filt}
    =
    \Theta(\varepsilon^{-1}),
\end{equation}
while
\begin{equation}
    m,\;
    T,\;
    p_{\rm succ}^{-1},\;
    K_B
    =
    O(1),
    \qquad
    K_c=O(\varepsilon).
\end{equation}

Thus full inverse amplification can diverge even when the normalized
directional QKS mechanism remains uniformly stable.
\end{proof}

\subsection{High-dimensional embedding}
\label{app:highdim_proof}

\begin{proof}[Proof of Proposition~\ref{prop:high_dimensional_separation}]

The purpose of the construction is to show that the separation in
Theorem~\ref{thm:separation_su} persists when the QKS subspace is much
smaller than the ambient Hilbert space.

Let
\begin{equation}
    A_{\varepsilon,N}
    =
    A_\varepsilon\oplus D_{N-2},
    \qquad
    |b_N\rangle
    =
    |b\rangle\oplus0,
\end{equation}
where
\begin{equation}
    \operatorname{spec}(D_{N-2})
    \subseteq
    [\Delta_{\rm sp},1]
\end{equation}
for a fixed $\Delta_{\rm sp}>0$.

\paragraph{Step 1: The global condition number still diverges.}

For sufficiently small
\begin{equation}
    \varepsilon<\Delta_{\rm sp},
\end{equation}
the smallest eigenvalue of $A_{\varepsilon,N}$ is $\varepsilon$, while
its largest eigenvalue is $1$.  Therefore
\begin{equation}
    \boxed{
    \kappa(A_{\varepsilon,N})
    =
    \varepsilon^{-1}
    =
    \Theta(\varepsilon^{-1}).}
\end{equation}

\paragraph{Step 2: The target solution remains entirely inside the first block.}

Since the right-hand side has no support on the additional block,
\begin{equation}
    A_{\varepsilon,N}^{-1}|b_N\rangle
    =
    \left(
    A_\varepsilon^{-1}|b\rangle
    \right)
    \oplus0.
\end{equation}
After normalization,
\begin{equation}
    |x_{\varepsilon,N}\rangle
    =
    |x_\varepsilon\rangle\oplus0.
\end{equation}
Applying the inverse again gives
\begin{equation}
    A_{\varepsilon,N}^{-1}
    |x_{\varepsilon,N}\rangle
    =
    \left(
    A_\varepsilon^{-1}|x_\varepsilon\rangle
    \right)
    \oplus0.
\end{equation}
Hence
\begin{equation}
    \boxed{
    \chi_{\rm filt}(A_{\varepsilon,N},b_N)
    =
    \chi_{\rm filt}(A_\varepsilon,b)
    =
    \Theta(\varepsilon^{-1}).}
\end{equation}

Thus the full inverse-amplification divergence is unchanged by the
embedding.

\paragraph{Step 3: The QKS dynamics remain confined to a two-dimensional invariant subspace.}

For every evolution time $t$,
\begin{align}
    e^{-itA_{\varepsilon,N}}|b_N\rangle
    &=
    \left(
    e^{-itA_\varepsilon}|b\rangle
    \right)
    \oplus0.
\end{align}
Therefore every time-evolution Krylov state lies inside
\begin{equation}
    \mathcal H_{\rm rel}
    :=
    \mathbb C^2\oplus0,
\end{equation}
which is an invariant subspace of $A_{\varepsilon,N}$.

Consequently, the QKS Gram matrix, projected operator, and projected
right-hand side are identical to those of the two-dimensional construction
in Theorem~\ref{thm:separation_su}.  In particular, for the same fixed
choice of $\tau$,
\begin{equation}
    m=2,
    \qquad
    T=O(1),
    \qquad
    p_{\rm succ}^{-1}=O(1).
\end{equation}

The equivalent reduced system also has
\begin{equation}
    \mu_1=\varepsilon,
    \qquad
    \mu_2=1,
\end{equation}
with nonvanishing right-hand-side overlap on the soft eigendirection.
Therefore
\begin{equation}
    K_B=O(1),
    \qquad
    K_c=O(\varepsilon).
\end{equation}

Hence
\begin{equation}
    \boxed{
    m=2\ll N,
    \qquad
    T=O(1),
    \qquad
    p_{\rm succ}^{-1}=O(1),
    \qquad
    K_B=O(1),
    \qquad
    K_c=O(\varepsilon).}
\end{equation}

The separation between full inverse amplification and normalized
directional sensitivity therefore persists in arbitrarily large ambient
dimension and is not an artifact of the QKS subspace coinciding with the
full Hilbert space.
\end{proof}

\bibliographystyle{unsrt}
\bibliography{fullreferences}
\end{document}